\theoremstyle{plain}
\newtheorem{theorem}{Theorem}
\newtheorem{lemma}[theorem]{Lemma}
\newtheorem{corollary}[theorem]{Corollary}
\newtheorem{proposition}[theorem]{Proposition}
\theoremstyle{remark}
\newtheorem{rem}[theorem]{\bf Remark}
\numberwithin{equation}{section}
\DeclareMathOperator{\spec}{spec}
\DeclareMathOperator{\ess}{ess}
\DeclareMathOperator{\supp}{supp}
\DeclareMathOperator{\loc}{loc}
\DeclareMathOperator{\dist}{dist}
\DeclareMathOperator{\dom}{dom}
\DeclareMathOperator{\vol}{vol}
\DeclareMathOperator{\sing}{sing}
\DeclareMathOperator{\coker}{coker}
\DeclareMathOperator{\st}{st}
\DeclareMathOperator{\link}{Link}
\DeclareMathOperator{\tr}{tr}
\DeclareMathOperator{\Hopf}{Hopf}
\renewcommand{\phi}{\varphi}
\newcommand{\eps}{\varepsilon}
\newcommand{\norm}[1]{\lVert #1 \rVert}
\newcommand{\cip}[2]{\langle #1, #2 \rangle}
\newcommand{\wt}[1]{\widetilde{#1}}
\newcommand{\nn}{\nonumber}
\newcommand{\D}{\mathbb{D}}
\newcommand{\R}{\mathbb{R}}
\newcommand{\C}{\mathbb{C}}
\newcommand{\Z}{\mathbb{Z}}
\newcommand{\T}{\mathbb{T}}
\newcommand{\N}{\mathbb{N}}
\renewcommand{\S}{\mathbb{S}}
\newcommand{\cA}{\mathcal{A}}
\newcommand{\cC}{\mathcal{C}}
\newcommand{\cD}{\mathcal{D}}
\newcommand{\cE}{\mathcal{E}}
\newcommand{\cG}{\mathcal{G}}
\newcommand{\cH}{\mathcal{H}}
\newcommand{\cL}{\mathcal{L}}
\newcommand{\cM}{\mathcal{M}}
\newcommand{\cN}{\mathcal{N}}
\newcommand{\rT}{\mathrm{T}}
\newcommand{\Tf}{[0,1]_{\mathrm{per}}}
\renewcommand{\d}{\mathrm{d}}
\newcommand{\sD}{\mathscr{D}}
\newcommand{\sF}{\mathscr{F}}
\newcommand{\sK}{\mathscr{K}}
\newcommand{\sS}{\mathscr{S}}
\newcommand{\Oms}{\Omega_{S}}
\newcommand{\ul}{\underline{\lambda}}
\newcommand{\ua}{\underline{\alpha}}
\newcommand{\uS}{\underline{S}}
\newcommand{\bA}{\boldsymbol{A}}
\newcommand{\bB}{\boldsymbol{B}}
\newcommand{\bG}{\boldsymbol{G}}
\newcommand{\bN}{\boldsymbol{N}}
\newcommand{\bT}{\boldsymbol{T}}
\newcommand{\bn}{\boldsymbol{n}}
\newcommand{\bx}{\boldsymbol{x}}
\newcommand{\be}{\boldsymbol{e}}
\newcommand{\bv}{\boldsymbol{v}}
\newcommand{\bp}{\boldsymbol{p}}
\newcommand{\bS}{\boldsymbol{S}}
\newcommand{\bsigma}{\boldsymbol{\sigma}}
\newcommand{\SO}{\boldsymbol{\mathrm{SO}}}
\newcommand{\SU}{\boldsymbol{\mathrm{SU}}}
\newcommand{\wtd}{\wt{\boldsymbol{\sigma}}(-i\wt{\nabla}) }
\begin{document}
\title[]{Self-adjointness and spectral properties of Dirac operators with magnetic links}

\date{}

\author[F. Portmann]{Fabian Portmann}
\address[F. Portmann]{QMATH, Department of Mathematical Sciences, University of Copenhagen
Universitetsparken 5, 2100 Copenhagen, DENMARK\newline
Current Address: IBM Switzerland, Vulkanstrasse 106 Postfach, 8010 Z\"{u}rich, Switzerland} 
\email{f.portmann@bluewin.ch}

\author[J. Sok]{J\'er\'emy Sok}
\address[J. Sok]{QMATH, Department of Mathematical Sciences, University of Copenhagen
Universitetsparken 5, 2100 Copenhagen, DENMARK\newline
Current Address: DMI, Universit\"{a}t Basel, Spiegelgasse 1, 4051, Basel, Switzerland} 
\email{jeremyvithya.sok@unibas.ch}

\author[J. P. Solovej]{Jan Philip Solovej}
\address[J. P. Solovej]{QMATH, Department of Mathematical Sciences, University of Copenhagen
Universitetsparken 5, 2100 Copenhagen, DENMARK} 
\email{solovej@math.ku.dk}

\thanks{The authors acknowledge support from the ERC grant Nr.\ 321029 ``The mathematics of the structure of matter" and
VILLUM FONDEN through the QMATH Centre of Excellence grant. nr.\ 10059.
This work has been done when all the authors were working at the university of Copenhagen.}

\begin{abstract}
	We define Dirac operators on $\S^3$ (and $\R^3$) with magnetic fields supported on smooth, oriented links
	and prove self-adjointness of certain (natural) extensions. We then analyze their spectral properties
	and show, among other things, that these operators have discrete spectrum. Certain examples, 
	such as circles in $\S^3$, are investigated in detail and we compute the 
	dimension of the zero-energy eigenspace.
\end{abstract}

\keywords{Dirac operators, knots, links, self-adjointness, discrete spectrum, Fredholm}

\maketitle
\tableofcontents



\section{Introduction}
In this paper we study Dirac operators on $\S^3$ with magnetic fields that are 
supported on smooth, oriented links $\gamma \subset \S^3$.
These magnetic fields can be considered as generalizations
of the celebrated Aharonov-Bohm
magnetic solenoids allowing for more general interlinking curves and not just a line. 
Their singular nature (they are in some sense measures) 
makes the definition of the associated Dirac operator
a delicate task, which is fundamentally different from the smooth case.
Since the construction is local, it also works on $\R^3$
-- we have however chosen to work on $\S^3$ because then the Dirac operators
have discrete spectrum.

\subsection{Overview}
On $\R^3$, Dirac operators with singular \textit{electric potentials} have been studied before, see for
example \cite{DittEx89,DittEx92} for a study of Dirac operators with $\delta$-potentials
supported on shells, or \cite{Vegaetal12} for more general measure-valued electric 
potentials (which are singular with respect to the Lebesgue measure).
Furthermore, results for the Laplace operator with electric perturbations supported
on curves are investigated in e.g. \cite{ExYosh01, ExKond02, ExKond04, ExKond08}, 
as well as on surfaces in e.g. \cite{Brascheetal94, Behrndtetal13}.
In \cite{Baer00}, the author also investigates Dirac operators related to links in $\S^3$,
yet his setup is fundamentally different, as he studies Dirac operators in the complement 
of a link in $\S^3$ endowed with a hyperbolic metric.
Yet, to the best of our knowledge, there exist no results (neither on $\R^3$ nor on $\S^3$)
concerning Dirac operators with magnetic fields supported on links.

The situation is somewhat different in the two-dimensional case. 
After the discovery of the Aharonov-Bohm effect \cite{AhaBohm59} and the subsequent interest
in point-like magnetic fields, spectral properties 
of Dirac operators with $\delta$-type magnetic fields have
been investigated in numerous situations, see for 
example \cite{Gerbert89,Arai93,HiroOgur01,Tamura03, AraiHaya05},
as well as for the Pauli operator \cite{ErdVug02, Persson06}.
It should be noted that the spectral properties of Dirac operators on
even-dimensional manifolds are somewhat more accessible, since there are 
many more tools available.
In the smooth case, the Atiyah-Singer Index Theorem \cite{AS68} can provide useful information about the 
index of an operator. With the help of vanishing theorems one can reduce the computation of the index to a 
study of the zero-energy eigenspace. In some cases the latter can be analyzed efficiently through the 
Aharonov-Casher theorem, see \cite{AhaCash79} for the discussion on $\R^2$ 
and e.g. \cite{Cyconetal87, MR1860416} for the problem on $\S^2$.

Describing the zero-energy eigenspace on a three-dimensional manifold has so far proven to 
be a rather difficult task. A first step on $\R^3$ was undertaken 
in \cite{LossYau86} (see also subsequent works \cite{Adametal99, Adametal00, Elton00}), where the authors exhibited a 
square-integrable magnetic field for which the Dirac operator had non-trivial kernel. 
This discovery turned out to be important in the study of the stability of matter interacting with magnetic fields 
\cite{FroLiebLoss86, LiebLoss86}, or more precisely, that stability may fail depending on the 
physical parameters, e.g. the fine structure constant \cite{LiebLossSol95, Fefferman95, Fefferman96}.
The geometry behind the construction in \cite{LossYau86} was exploited in \cite{Adametal00_3} and in \cite{MR1860416}; 
the idea is to pull back smooth magnetic fields from $\S^2$ to $\S^3$ via Hopf maps
to obtain a non-trivial kernel for the Dirac operator on $\S^3$.
Through this mechanism, given any $m \in \N$, one
can find a smooth magnetic field on $\S^3$ such that the kernel of the associated Dirac operator has
exactly dimension $m$. Combining this result with the invariance of the kernel of the Dirac operator
under conformal maps -- here the stereographic projection -- yields the same answer for magnetic fields
on $\R^3$, and includes in particular the result from \cite{LossYau86}.

The idea from \cite{MR1860416} has somewhat served as a starting point for our analysis.
By extending the main result of \cite{MR1860416} to the case where the magnetic field on 
$\S^2$ contains a collection of point measures, we obtain a rather explicit
description of the spectrum of the Dirac operator on $\S^3$ with
a magnetic field supported on oriented interlinking circles, the Hopf links 
(see Theorem~\ref{thm:erdos-solovej}).
After this insight one is able to draft up a scheme to define the 
Dirac operator with a magnetic field
supported on an arbitrary oriented link in $\S^3$.

\subsection{Main results}
Consider a smooth, oriented knot $\gamma \subset \S^3$ with flux $2\pi\alpha \geq 0$.
To define the associated Dirac operator we choose a singular gauge.
For any oriented knot $\gamma$ there exists a smooth, compact and 
oriented surface in $S \subset \S^3$ such that $\partial S = \gamma$. 
The singular gauge is then defined as the Seifert surface weighted 
by the flux $2\pi\alpha$.
This gauge imposes a phase jump of $e^{-2i\pi\alpha}$ in the wave function 
across $S$, and the operator acts like the free Dirac operator
away from the surface. This essentially describes the minimal 
operator (see Section~\ref{sec:minimal_dirac_knot}), whose domain
can be characterized by
$$
	\{\psi \in H^1(\S^3 \setminus S)^2 : \left.\psi\right|_{S_+}=e^{-2i\pi\alpha}\left.\psi\right|_{S_-} \},
$$
where $S_{\pm}$ denote the top (resp. bottom) of the Seifert surface.
In particular, the Dirac operator exhibits a $2\pi$-periodicity
for the flux carried by $\gamma$. One readily observes that the 
phase jump alone is \emph{not} sufficient to obtain 
a self-adjoint operator, and we are lead to investigate different self-adjoint extensions
described by the behavior of the functions close to $\gamma$.
We are interested in two special extensions, namely where the singular part of the spinor
``aligns with or against the magnetic field". The first main result of this paper is
Theorem~\ref{thm:dirac_sa_knot}, where we prove the self-adjointness of these operators.
This result is then generalized to links in Theorem~\ref{thm:dirac_sa_link}.
In Section~\ref{sec:relation} we discuss the relation between these two self-adjoint extensions;
the first extension for $\gamma$ with flux $0<2\pi\alpha<2\pi$ is equal to the second
extension for the same knot $\gamma$ with \textit{opposite} orientation with flux $0<2\pi(1-\alpha)<2\pi$.
In \cite{dirac_s3_paper3} we will show that the extension where the spinor aligns against
the field is the physical extension, as it can be approximated by smooth magnetic fields
in the strong resolvent sense.

After having proved self-adjointness, we introduce an appropriate
topology on the space of Seifert surfaces and investigate the continuity properties of these
Dirac operators when varying the fluxes and the Seifert surfaces.
In Theorem~\ref{thm:compactness} we derive a compactness result and
in Theorem~\ref{thm:str_res_cont} we show that for a converging sequence of Seifert surfaces and fluxes,
the associated sequence of Dirac operators converges in the strong resolvent sense.
An immediate consequences of the compactness result is the discreteness of the spectrum
for the Dirac operator with an arbitrary magnetic link (Theorem~\ref{thm:discrete_spectrum}).
It also ensures that the kernel of the Dirac operator is trivial when the fluxes are small enough 
(Corollary~\ref{coro:no_zero_modes_for_small_fluxes}). 

At last, we study the kernel of our Dirac operators in two examples,
namely when the magnetic link is a circle or a Hopf link. 
In Corollary~\ref{cor:dim_ker_hopf_link}, following \cite{MR1860416}
we express the dimension of the kernel in the case of a Hopf link with specific fluxes.
In Theorem~\ref{thm:no_zero_modes_for_circles} we show that
the Dirac operator associated to a magnetic circle always has a trivial kernel. This fact will be
heavily used in two forthcoming publications \cite{dirac_s3_paper2, dirac_s3_paper3}.

Indeed, in this paper we are in some sense also laying the groundwork for two follow-up works.
The aforementioned $2\pi$-periodicity in the flux $2\pi\alpha$
suggests the study of the spectral flow of loops of Dirac operators, as $2\pi\alpha$ runs from $0$ to $2\pi$.
In \cite{dirac_s3_paper2} we will study the spectral flow for such paths of operators and obtain as 
a byproduct information about zero modes for the associated Dirac operators.
The study of these newly discovered zero modes is then
continued in \cite{dirac_s3_paper3}, where we provide a regularization procedure to show that 
these zero modes persist when the singular magnetic fields are approximated by smooth versions.
From elliptic regularity and the conformal invariance of the kernel of the Dirac operator we obtain
new examples of smooth, compactly supported magnetic fields on $\R^3$ for which the
associated Dirac operator has non-trivial kernel.

\subsubsection*{\bf{Notation:}}
Recall the Pauli matrices:
\begin{align*}
	\sigma_1 = \left(\begin{array}{cc}0 & 1 \\1 & 0\end{array}\right), 
	\quad \sigma_2 = \left(\begin{array}{cc}0 & -i \\i & 0\end{array}\right),
	\quad \sigma_3 = \left(\begin{array}{cc}1 & 0 \\0 & -1\end{array}\right).
\end{align*}
The standard inner product on $\C^2$ will always be denoted by $\cip{\cdot\,}{\cdot}$.

For any open set $\Omega$ in $\S^3$, 
the space of smooth functions on $\Omega$ is denoted by $\sD(\Omega)$.
Furthermore, we denote by $L^2(\Omega)^2$ the space of $L^2(\Omega)$-functions with values in $\C^2$.
The inner product on $L^2(\Omega)^2$ is given by
$$
	\cip{\psi}{\phi}_{L^2} := \int_{\Omega}\cip{\psi}{\phi},
$$
and the integration is performed with respect to the volume form of the underlying space.
For any $m \in \Z$, we denote by $\langle m \rangle := \sqrt{1+m^2}$ the Japanese bracket.

\section{Links and geometrical structure on $\S^3$}\label{sec:b_field_link}
We will always view the Riemannian manifold $\S^3$ as the unit sphere in $\C^2$ 
endowed with the standard metric $g_3$ from its ambient space.
The tangent bundle of $\S^3$ is denoted by $\rT\,\S^3$, and the class of smooth vector fields by $\Gamma(\rT\,\S^3)$. 
Since $\S^3$ can be seen as a Lie group, it admits a global orthonormal frame, 
which in turn defines an orientation. We choose to work in the orientation which
is mapped to the standard orientation of $\R^3$ under the stereographic projection 
(for reasons to become clear during the course of the exposition).

\subsubsection*{\bf{Convention:}}
For the remainder of this paper, by a link $\gamma \subset \S^3$ 
we always mean a \emph{smooth}, \emph{oriented} 
submanifold of $\S^3$ which is diffeomorphic to the \emph{disjoint} union of 
finitely many copies of $\S^1$
(and we henceforth refrain from explicitly stating that the link is smooth and oriented). 
Furthermore, we will often write
$\gamma = \bigcup_k \gamma_k$, 
where the $\gamma_k$ are knots (the connected components of $\gamma$).

\subsection{Links as magnetic fields on $\S^3$}\label{ssec:magn_links}
We begin by defining what we mean by a \textit{magnetic knot}.
Given a knot $\gamma \subset \S^3$, we would like to see it as a single field line with flux\footnote{In the case of negative flux, 
we simply study the knot with opposite orientation with flux $2\pi|\alpha|$.} $2\pi \alpha \geq 0$. 
Observe that a magnetic field
on $\S^3$ is normally a two-form, but due to the singular nature of our objects, it is more convenient to view the magnetic knot
$\bB$ as a one-current (see \cite[Chapter~3]{Rham55} for an introduction to currents), 
obtained through a combination of the Hodge dual and the musical isomorphisms 
$$
	\flat:
	\begin{array}{ccl}
	\rT\,\S^3 &\longrightarrow& \rT^{*}\S^3\\
	(\bp,X) &\mapsto& (\bp,g_3(X,\,\cdot))
	\end{array}
$$
and its inverse $\sharp$.
$\bB = 2\pi\alpha [\gamma]$ then acts on smooth one-forms $\omega \in \Omega^1(\S^3)$ as
$$
	(\bB;\omega) := 2\pi \alpha \int_{\gamma}\omega.
$$
From this definition it is easy to check the closed-condition; in the language of currents, the differential
of $\bB$ corresponds to the boundary operator, hence for $\phi \in \sD(\S^3)$ we have
\begin{align*}
	(\partial \bB;\phi) = (\bB;\d\phi)
	= 2\pi\alpha \int_{\gamma} \d \phi = 0.
\end{align*}

As in the classical electromagnetic setting, we would like to construct a magnetic gauge potential $\bA$ to 
$\bB$ such that $\partial \bA = \bB$ (in the language of currents). This can be done as follows;
by Seifert's theorem \cite{Seifert35, FP30} we know that given any (oriented) knot $\gamma \subset \S^3$, 
there exists a smooth, connected and oriented surface $S \subset \S^3$ such that 
$\partial S = \gamma$ (as oriented submanifolds).
On $\omega \in \Omega^2(\S^3)$, $\bA = 2\pi\alpha[S]$ acts as
\begin{align}\label{def:magn_gauge_def}
	(\bA;\omega) := 2\pi \alpha \int_{S} \omega.
\end{align}
For any $\omega \in \Omega^1(\S^3)$ we then have by Stokes theorem that
\begin{align}\label{eq:magn_gauge_pot}
	(\partial \bA;\omega) = (\bA;\d \omega) = 2\pi\alpha \int_{S}\d \omega
	= 2\pi\alpha \int_{\gamma} \omega = (\bB;\omega).
\end{align}
Note that if $\bA'$ is another magnetic gauge potential for $\bB$ (meaning $\partial S' = \gamma$ 
with the correct orientation and $\alpha' = \alpha$),
then they should be related by a boundary term. In the case when $S$ and $S'$ only intersect on $\gamma$, 
a quick computation shows that
\begin{equation}\label{eq:a_gauge_transf}
	\bA - \bA' = 2\pi\alpha\, [\partial V],
\end{equation}
where $[V]$ is the (signed) volume enclosed by the two surfaces
$S$ and $S'$, seen as a three-current. Should the surfaces intersect on more than $\gamma$, one can
use the result of \cite{MR916076} to construct a sequence of Seifert surfaces $S=S_0, S_1, \dots, S_n = S'$ such that
$S_i \cap S_{i-1} = \gamma$, $1 \leq i \leq n$. The gauge transformation is then performed stepwise from $S_{i-1}$
to $S_i$ using the previous result.

The generalization to a \textit{magnetic link} is now straight forward; given a link 
$\gamma = \bigcup_k \gamma_k \subset \S^3$, the magnetic link $\bB$ is defined as
$$
	\bB = \sum_{k}2\pi \alpha_k [\gamma_k], \quad \alpha_k \geq 0.
$$
A natural choice for the magnetic gauge potential is $\bA = \sum_{k}2\pi\alpha_k[S_k]$, 
where $S_k \subset \S^3$ is any Seifert surface with $\partial S_k = \gamma_k$. Note that if the fluxes on
two components of the link coincide, one may choose a common Seifert surface for them.

\subsection{The Seifert frame}\label{ssec:seif_frame}
Given any Seifert surface $S \subset \S^3$, there exists a natural construction
to endow the boundary $\partial S=\gamma$ with a natural moving frame, the
\textit{Seifert frame}. It is given by the triple 
$(\bT, \bS, \bN)$, where $\bT$ is the tangent vector to $\gamma$, $\bN$ is the normal of the oriented Seifert surface
and $\bS$ is the unit vector such that $(\bT, \bS, \bN)$ is positively oriented (thus along the link it points towards the
inside of the Seifert surface). 
On each component of $\gamma$, the Seifert frame coincides with the Darboux frame and
satisfies the Darboux equations
$$
	\dfrac{\mathrm{D}}{\d s}\begin{pmatrix} \bT\\ \bS\\ \bN\end{pmatrix}
	=\begin{pmatrix}0 & \kappa_g & \kappa_n \\ -\kappa_g & 0 & \tau_r\\ -\kappa_n & -\tau_r & 0 \end{pmatrix}
	\begin{pmatrix} \bT\\ \bS \\ \bN\end{pmatrix},
$$
where $\kappa_g, \kappa_n$ are the geodesic and normal curvature respectively, and $\tau_r$ 
is the relative torsion, see \cite[Chapter~7]{Spivakvol4} for more details.

For the coming chapters it will be necessary to extend the Seifert frame (which is apriori only defined on $\gamma$)
to a tubular neighborhood 
$$
	B_{\eps}[\gamma]:=\{\bp \in \S^3: \dist_{g_3}(\bp,\gamma)<\eps\}
$$
of the curve. The parameter $\eps>0$ is chosen such that the map
\begin{equation}\label{eq:exp_tubular_nb}
	\exp: 
	\begin{array}{ccl}
		B_{\eps}[\sigma_0;\cN\gamma] &\longrightarrow& B_{\eps}[\gamma]\\
		(\gamma(s),\bv(s)) &\mapsto& \exp_{\gamma(s)}(\bv(s))
	\end{array}
\end{equation}
is a diffeomorphism, where $\cN\gamma$ is the normal fiber to $\gamma$ in $\S^3$ and $\sigma_0$ its null-section.
We define the frame at a point $\bp = \exp(t_0\bv_0(s))\gamma(s)$, $\norm{\bv_0}=1$, 
by parallel transporting $(\bT,\bS,\bN)$ along the geodesic 
\begin{align}\label{eq:geodesic_s3}
	t \in [0,t_0] \mapsto \exp_{\gamma(s)}(t\bv_0(s)) = \cos(t)\gamma(s) + \sin(t)\bv_0(s),
\end{align}
which is a great circle in $\S^3$.
Furthermore, the geodesic distance $t_0$ will henceforth be denoted as $\rho_{\gamma}(\bp) = \rho(\bp)$.

\subsection{$Spin^c$ spinor bundles on $\S^3$}
We now have to define a Dirac operator with a singular magnetic potential $\bA=\sum_k 2\pi\alpha_k [S_k]$.
Such an operator is defined on the $L^2$-sections of a vector bundle, called a $Spin^c$ spinor bundle.
We review in this part its definition and its properties.

A $Spin^c$ spinor bundle $\Psi$ over a 3-manifold $\cM$ is a two-dimensional complex vector bundle over 
$\cM$ together with an inner product $\cip{\cdot\,}{\cdot}$ and an isometry $\bsigma: \rT^*\cM \to \Psi^{(2)}$, 
called the Clifford map, where
$$
	\Psi^{(2)} := \{M \in \mathrm{End}(\Psi): M = M^*, \tr(M)=0\}.
$$
A connection $\nabla^{\Psi}$ on $\Psi$ is a $Spin^c$ connection if for any vector field $X \in \Gamma(\rT\,\cM)$ it satisfies
\begin{enumerate}
	\item $X\cip{\xi}{\eta} = \cip{\nabla_X^{\Psi}\xi}{\eta} + \cip{\xi}{\nabla_X^{\Psi}\eta}$, 
	$\forall \xi,\eta \in \Gamma(\Psi)$.
	
	\item $[\nabla_X^{\Psi},\bsigma(\omega)] = \bsigma(\nabla_X \omega)$, $\forall \omega \in \Omega^1(\cM)$.
\end{enumerate}
Here, $\nabla_X$ is the Levi-Civita connection on $\S^3$.
It is well known that, up to isomorphism, the $Spin^c$ spinor bundle on $\S^3$ is unique and it is the trivial bundle
$\Psi = \S^3 \times \C^2$. For completeness we quickly sketch a proof of this fact.

Let $\Psi$ be a $Spin^c$ spinor bundle over $\S^3$ with Clifford map $\bsigma$. Since $\S^3$ is a Lie group,
it is parallelizable and we can choose a global orthonormal frame $\{\be_1,\be_2,\be_3\}$.
For each $\bp \in \S^3$, the fiber $\Psi_{\bp}$ can be decomposed into two complex lines
$$
	\Psi_{\bp} = \ker(\bsigma(\be^3(\bp))-1) \overset{\bot}{\oplus} \ker(\bsigma(\be^3(\bp))+1)
	=: \Lambda_{\bp}^{+}(\be^3) \overset{\bot}{\oplus}\Lambda_{\bp}^{-}(\be^3).
$$
This induces a decomposition of $\Psi$ into two complex line bundles 
$$
	\Psi = \Lambda^{+}(\be^3) \oplus \Lambda^{-}(\be^3).
$$
As the second homotopy group $\pi_2(\S^1)$ is trivial, any complex line bundle over $\S^3$ is trivial.
In particular, $\Lambda^{+}(\be^3)$ admits a global section
$\lambda_+$, which we can choose to have unit length at any point $\bp \in \S^3$. Similarly, we can define a global section
$\lambda_-$ on $\Lambda^{-}(\be^3)$, and fix their relative phase by requiring that
$$
	\omega(\be_1) + i\omega(\be_2) = \cip{\lambda_-}{\sigma(\omega)\lambda_+}, 
	\quad \forall \omega \in \Omega^1(\S^3).
$$
This then yields the desired trivialization. 

If $\nabla^{\Psi}$ is a $Spin^c$ connection, its connection form
is denoted by
\begin{align*}
	M_{\lambda}(X) := \begin{pmatrix} \cip{\lambda_+}{\nabla_X^{\Psi} \lambda_+} & \cip{\lambda_+}{\nabla_X^{\Psi} \lambda_-} \\
	\cip{\lambda_-}{\nabla_X^{\Psi} \lambda_+} & \cip{\lambda_-}{\nabla_X^{\Psi} \lambda_-}\end{pmatrix},
\end{align*}
where $X \in \Gamma(\rT\,B_{\eps}[\gamma])$. We then have 
(as in \cite[Prop.~2.9]{MR1860416})
\begin{multline}\label{eq:spin_basis}
	M_{\lambda}(X)\\
	= \frac{i}{2}\begin{pmatrix} \cip{\be_1}{\nabla_X \be_2} & -\cip{\be_3}{\nabla_X \be_2} - i\cip{\be_3}{\nabla_X \be_1} \\
	-\cip{\be_3}{\nabla_X \be_2} + i\cip{\be_3}{\nabla_X \be_1} & -\cip{\be_1}{\nabla_X \be_2}\end{pmatrix}\\
	- i\omega_{\lambda}(X)\mathrm{Id}_{\C^2},
\end{multline}
where $\omega_{\lambda}$ is the (local) real one form
$$
	\omega_{\lambda}(X) := \frac{i}{2}\left(\cip{\lambda_+}{\nabla_X\lambda_+} 
	+ \cip{\lambda_-}{\nabla_X\lambda_-}\right).
$$
By \cite[Prop.~2.11]{MR1860416}, the connection form of the induced connection from the Levi-Civita connection on 
$\S^3$ is given by \eqref{eq:spin_basis} with $\omega_{\lambda}=0$ (the connection form is then denoted by $M_{\lambda}^{\mathrm{can}}$).

\begin{rem}
	From now on we will use the convention that whenever $\nabla$ acts on a vector field, we mean the Levi-Civita connection,
	whereas if $\nabla$ acts on a spinor, then we mean the trivial (induced) connection on the $Spin^c$ bundle on $\S^3$.
	In particular, $-i\bsigma(\nabla)$ denotes the free Dirac operator on $\S^3$.
\end{rem}

\subsection{Charge conjugation}\label{charge_conj}
Observe that $\Psi$ is in fact a $Spin$ spinor bundle, meaning it is a $Spin^c$ spinor bundle together with an antilinear
bundle isometry $\mathrm{C}: \Psi \to \Psi$ such that $\cip{\eta}{\mathrm{C}\eta} = 0$ and $\mathrm{C}^2\eta = -\eta$ 
for all $\eta \in \Psi$. In the previous trivialization, the map $\mathrm{C}$ is given by
$$
	\mathrm{C}:
	\begin{array}{ccl}
		\S^3 \times \C^2 &\longrightarrow& \S^3 \times \C^2\\
		(\bp,\psi) &\mapsto & (\bp,i\sigma_2\overline{\psi}).
	\end{array}
$$
Furthermore, for any unital $\omega \in \Omega^1(\S^3)$ we can again define $\Lambda_{\pm}(\omega)$, and
one can show that $\mathrm{C}$ is a (fiber to fiber anti-linear) isomorphism between these two complex line bundles,
\begin{equation}\label{eq:C_interc}
	\mathrm{C}: \Lambda_{\pm}(\omega) \to \Lambda_{\mp}(\omega).
\end{equation}

The trivial connection $\nabla$ commutes with $\mathrm{C}$, hence it is a $Spin$ connection.
Similarly, one can show that the free Dirac operator $-i\bsigma(\nabla)$ also commutes with $\mathrm{C}$.

\section{The Dirac operator for a magnetic link}\label{sec:dirac_link}
For a smooth magnetic potential $\boldsymbol{\alpha}\in\Omega^1(\S^3)$, the associated $Spin^c$ connection on $\Psi$
in a trivialization  $(\lambda_+,\lambda_-)$ is given by the connection form
\footnote{The sign is different from \cite{MR1860416}, as our convention for the Dirac operator is different.} 
\cite[Proposition~2.11]{MR1860416}:
\[
	M_{\lambda}^{\mathrm{can}}(X)+i\boldsymbol{\alpha}(X)\mathrm{Id}_{\C^2}.
\] 
In this section, we extend the definition to the singular magnetic potential introduced in the previous section.

\subsection{Self-adjointness for a magnetic knot}\label{sec:sa_magn_knot}
\subsubsection{The minimal Dirac operator}\label{sec:minimal_dirac_knot}
Consider a magnetic knot $\bB$. To define the minimal Dirac operator, we choose the singular gauge 
$$
	\bA = 2\pi \alpha [S], \quad \alpha \geq 0,
$$
from \eqref{def:magn_gauge_def}.
Due to the singular nature of the magnetic gauge potential it is not possible to 
write the Dirac operator simply as $\bsigma(-i\nabla + \bA)$, and a more involved definition is necessary.

It is straight forward to see that away from the Seifert surface
on $\S^3 \setminus S$, the Dirac operator acts as the free
Dirac $-i\bsigma(\nabla)$. To determine the behavior close to the surface, we see 
(with the help of a gauge transformation \eqref{eq:a_gauge_transf}) 
that the elements of the domain need to have a phase jump of $e^{-2\pi i \alpha}$ across $S$. 

Denote by $S_+$ (resp. $S_-$) the top (resp. bottom) of the Seifert surface $S$ with
respect to $\bN$. To simplify the notation we define $\Oms:=\S^3\setminus S$.
We define the minimal Dirac operator via 
$$
	\cD_{\bA}^{(\min)}\psi:=-i \bsigma(\nabla)\psi_{\big|_{\Oms}}
$$
with domain
\begin{multline*}
	\dom\big(\cD_{\bA}^{(\min)}\big)
	:=\mathrm{clos}_{\cG}\big\{\psi\in \sD(\Oms)^2: 
	\supp\psi\subset\S^3\setminus\gamma,\\
	\left.\psi\right|_{S_\pm} \textrm{ exist \& are in }C^0(S)^2, \left.\psi\right|_{S_+}=e^{-2i\pi\alpha}\left.\psi\right|_{S_-}\big\},
\end{multline*}
where $\mathrm{clos}_{\cG}$ denotes the closure in the graph norm.
The derivative is taken in $[\sD'(\Oms)]^2$, but we have $\cD_{\bA}^{(\min)}\psi\in L^2(\Oms)^2$ 
and we canonically inject $\cD_{\bA}^{(\min)}\psi$ into $L^2(\S^3)^2$.

\begin{rem}
	Note that when given two fluxes $2\pi\alpha,2\pi\alpha' \geq 0$, the domains of the minimal Dirac operators
	agree if $2\pi(\alpha-\alpha') \in 2\pi \Z$. 
	We may therefore take $\alpha$ to be in 
	$\Tf:=\{[0,1]: 0 \sim 1\}$.
	A natural distance on $\Tf$ is given by
	\begin{equation}\label{def:dist_torus}
		\dist_{\Tf}(\alpha,\alpha') := \dist_{\R}(\alpha+\N,\alpha'+\N).
	\end{equation}
\end{rem}

Next, we define the domain $V_{S,\eps}:=\S^3\setminus\left\{\, S\cup B_{\eps}[\gamma] \right\}$.
\begin{proposition}[Generalized Stokes' formula]\label{prop:gen_stokes}
	Let $\{\be_1,\be_2,\be_3\}$ be an orthonormal frame for $\S^3$.
	Given $\psi,\phi\in H^1(V(S,\eps))^2$, we have
	\begin{align}\label{eq:general_stokes}
		&\int_{V_{S,\eps}}\cip{\bsigma(-i\nabla)\psi}{\phi} - \int_{V_{S,\eps}}\cip{\psi}{\bsigma(-i\nabla)\phi}\nn\\
		&\quad = \sum_{j=1}^3 \int_{\partial B_{\eps}[\gamma]} \iota_{\be_j} \left[\cip{i\bsigma(\be^j)\psi}{\phi} \vol_{g_3}\right]\\
		&\quad + \sum_{j=1}^3\int_{S \cap B_{\eps}[\gamma]^c} \iota_{\be_j} 
		\left[\Big(\cip{i\bsigma(\be^j)\left.\psi\right|_{S_+}}{\left.\phi\right|_{S_+}} 
		- \cip{i\bsigma(\be^j)\left.\psi\right|_{S_-}}{\left.\phi\right|_{S_-}}\Big) \vol_{g_3}\right].\nn
	\end{align}
	If $\psi\in \mathcal{A}_S$ where
	\begin{equation}\label{eq:def_A_S}
		\cA_{S} := \Big\{\psi \in [\sD'(\Omega_S)]^2 : -i\bsigma(\nabla)\psi \in L^2(\Omega_S)^2\Big\} \cap L^2(\Omega_S)^2,
	\end{equation}
	then the boundary terms $\psi|_{S_{\pm}}$ exist as elements in $H^{-1/2}_{\loc}(S\setminus\gamma)^2$.
	Furthermore \eqref{eq:general_stokes} still applies for $\phi\in H^1(\Omega_S)^2$ with $0<\eps<\dist(\supp\phi,\gamma)$,
	in which case the boundary term on $\partial B_\eps[\gamma]$ vanishes.
\end{proposition}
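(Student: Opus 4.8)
The plan is to prove the Stokes-type identity \eqref{eq:general_stokes} first for smooth $\psi,\phi$ and then extend by density, and afterwards to bootstrap the existence of the traces $\psi|_{S_\pm}$ for $\psi \in \cA_S$. For the smooth case I would start from the pointwise identity
\[
	\cip{\bsigma(-i\nabla)\psi}{\phi} - \cip{\psi}{\bsigma(-i\nabla)\phi}
	= -i\,\mathrm{div}\!\left( \sum_{j=1}^3 \cip{\bsigma(\be^j)\psi}{\phi}\, \be_j \right),
\]
which follows from the Leibniz rule for the Levi-Civita connection, the $Spin^c$ compatibility axioms (1)--(2) for $\nabla^\Psi$, and the fact that the Clifford matrices $\bsigma(\be^j)$ are self-adjoint (so that the terms where $\nabla$ hits the frame or the Clifford symbol cancel against their adjoints). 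This is the routine but slightly delicate computation; one must be careful that $\nabla_{\be_j}\be_j$-type terms that appear really do cancel — this is exactly the standard ``formal self-adjointness of the Dirac operator'' computation, and the cleanest way is to note that the 1-form $X \mapsto \cip{\bsigma(X)\psi}{\phi}$ contracted against the metric gives a vector field whose divergence is the left-hand side. Integrating this divergence identity over $V_{S,\eps}$ and applying the ordinary divergence theorem on the manifold-with-boundary $V_{S,\eps}$ produces boundary integrals over $\partial V_{S,\eps}$, which decomposes (up to measure zero) into $\partial B_\eps[\gamma]$ together with the two faces $S_\pm \cap B_\eps[\gamma]^c$ of the cut surface; the orientations of the two faces are opposite, which yields the difference $\cip{\cdot}{\cdot}_{S_+} - \cip{\cdot}{\cdot}_{S_-}$ in the last term. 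Rewriting the surface integrals via the interior product $\iota_{\be_j}[\,\cdot\, \vol_{g_3}]$ gives exactly \eqref{eq:general_stokes}.

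Next I would pass to general $\psi,\phi \in H^1(V_{S,\eps})^2$ by density: since $V_{S,\eps}$ is (a finite union of) Lipschitz domains bounded away from $\gamma$, smooth spinors are dense in $H^1(V_{S,\eps})^2$, both sides of \eqref{eq:general_stokes} are continuous in the $H^1 \times H^1$ topology (the trace maps $H^1(V_{S,\eps})^2 \to H^{1/2}(\partial V_{S,\eps})^2 \hookrightarrow L^2$ are bounded), and so the identity survives the limit. The ``furthermore'' part, where $\phi \in H^1(\Omega_S)^2$ with $0 < \eps < \dist(\supp\phi,\gamma)$, is immediate: such $\phi$ restricts to an $H^1$ spinor on $V_{S,\eps}$ and vanishes in a neighbourhood of $\partial B_\eps[\gamma]$, killing that boundary term.

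The main obstacle is the claim that for $\psi \in \cA_S$ the boundary values $\psi|_{S_\pm}$ exist in $H^{-1/2}_{\loc}(S\setminus\gamma)^2$. Here I would argue locally: fix a relatively compact piece $S_0 \Subset S\setminus\gamma$ and a thin two-sided collar neighbourhood $U \cong S_0 \times (-\delta,\delta)$ with $S_0 \cong S_0 \times \{0\}$, chosen so $U$ stays away from $\gamma$. On each half-collar $U_\pm$ the spinor $\psi$ lies in $L^2$ and satisfies $-i\bsigma(\nabla)\psi \in L^2$, so $\psi$ belongs to the graph space of the Dirac operator on $U_\pm$; since $\bsigma(-i\nabla)$ is a first-order elliptic operator with smooth coefficients, this graph space embeds into $H^1_{\loc}$ in the interior and, crucially, admits a bounded trace map onto $H^{-1/2}$ of the boundary — this is the standard trace theorem for the maximal domain of a first-order elliptic operator (one separates out the normal derivative using $\bsigma(\nu)^{-1}$, which is invertible because $\bsigma(\nu)^2 = |\nu|^2\,\mathrm{Id}$, and integrates the transport equation in the normal direction against test functions). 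Concretely, testing the identity $-i\bsigma(\nabla)\psi = f \in L^2$ against $\bsigma(\nu)^{-1}$ times a smooth compactly supported spinor and integrating by parts in the normal variable exhibits $\langle \psi|_{S_0 \times\{0^\pm\}}, \cdot\rangle$ as a continuous functional on $H^{1/2}(S_0)^2$, giving $\psi|_{S_\pm} \in H^{-1/2}(S_0)^2$ with the right one-sided orientation. Letting $S_0$ exhaust $S\setminus\gamma$ yields the $H^{-1/2}_{\loc}$ statement, and then \eqref{eq:general_stokes} for $\psi \in \cA_S$, $\phi \in H^1(\Omega_S)^2$ follows by approximating $\psi$ by smooth spinors in the graph norm on the collar (where the one-sided faces of $S$ now carry the $H^{-1/2}$--$H^{1/2}$ duality pairing) and invoking the already-established smooth identity.
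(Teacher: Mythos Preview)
Your argument for the $H^1$ case is essentially the paper's: the divergence identity you write is exactly what the paper obtains via Cartan's formula $\cL_{\be_j} = \d\circ\iota_{\be_j} + \iota_{\be_j}\circ\d$ applied to $f_j\,\vol_{g_3}$ with $f_j = \cip{-i\bsigma(\be^j)\psi}{\phi}$, followed by Stokes and the same cancellation of the $\nabla_{\be_j}\be^j$ and $\mathrm{div}(\be_j)$ terms. The density step to $H^1(V_{S,\eps})^2$ is likewise identical in spirit.

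The difference is in how you get the traces $\psi|_{S_\pm}\in H^{-1/2}_{\loc}$ for $\psi\in\cA_S$. You invoke the abstract trace theorem for the maximal domain of a first-order elliptic operator, then approximate $\psi$ in the graph norm. This is correct but imports outside machinery (and the graph-norm density of smooth spinors up to the boundary, while true, is itself a nontrivial fact). The paper instead bootstraps from the identity it has just proved: by interior elliptic regularity $\psi\in H^1_{\loc}(\Omega_S)$, so \eqref{eq:general_stokes} already holds on the shrunken domain $W_{S,\eps_+,\eps_-}$ obtained by removing thin one-sided collars of width $\eps_\pm$ above and below $S$. Sending $\eps_+\to 0$ with $\eps_-$ fixed, the volume integrals converge by dominated convergence and the $S_-$ boundary term is fixed, so the $S_+$ boundary term has a limit; this \emph{defines} $\psi|_{S_+}$ as a continuous functional on $H^{1/2}_c(S\setminus\gamma)^2$, i.e.\ as an element of $H^{-1/2}_{\loc}$. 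The formula for $\psi\in\cA_S$ then follows immediately by taking $\eps_+,\eps_-\to 0$. This route is more self-contained and avoids any approximation of $\psi$; your route is cleaner to state but leans on a result the paper prefers to re-derive in situ.
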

\begin{proof}
	Assume first $\psi,\phi\in H^1(\Omega_S)^2$.
	
	We write $-i\bsigma(\nabla) = -i\sum_{j=1}^{3}\bsigma(\be^j)\nabla_{\be_j}$ and
	set $f_j := \langle -i\bsigma(\be^j)\psi, \phi\rangle$.
	The Lie derivative along $\be_j$ is denoted by $\cL_{\be_j}$, and
	by Cartan's formula we have $\cL_{\be_j} = \d \circ \iota_{\be_j} + \iota_{\be_j} \circ \d$.

	Using the above representation of the Lie derivative in combination with Stokes' theorem yields
	$$
		\sum_{j=1}^3\int_{V_{S,\eps_+,\eps_-}}\cL_{\be_j}(f_j \vol_{g_3}) 
		= \sum_{j=1}^3\int_{\partial W_{S,\eps_+,\eps_-}} \iota_{\be_j} \left[\langle -i\bsigma(\be^j)\psi,\phi\rangle \vol_{g_3}\right],
	$$
	which after expansion gives the terms on the right-hand side of \eqref{eq:general_stokes}. On the other hand,
	\begin{align*}
		\sum_{j=1}^3\cL_{\be_j}(f_j \vol_{g_3}) &= \sum_{j=1}^3 \d f_j(\be_j)\vol_{g_3} + f_j \mathrm{div}_{g_3}(\be_j)\vol_{g_3}.
	\end{align*}
	Metric compatibility gives
	$$
		\sum_{j=1}^3\d f_j(\be_j) = \cip{-i\bsigma(\nabla)\psi}{\phi} - \cip{\psi}{-i\bsigma(\nabla)\phi} 
		+ \sum_{j=1}^3\cip{-i\bsigma(\nabla_{\be_j}\be^j) \psi}{\phi}.
	$$
	However, $\cip{-i\sum_{j=1}^3\left[\bsigma(\nabla_{\be_j}\be^j) + \mathrm{div}_{g_3}(\be_j)\bsigma(\be^j)\right]\psi}{\phi} = 0$,
	and the desired formula follows.
	
	Now assume $\psi \in \mathcal{A}_S$ and $\phi\in H^1(\Omega_S)^2$ with $0<\eps<\dist(\supp\phi,\gamma)$.
	For $0<\eps_+,\eps_-<\eps$ let $W_{S,\eps_1,\eps_2}\subset V_{S,\eps}$ be the subset of $V_{S,\eps}$
	obtained by removing the points $\bp$ that are either:
	\begin{enumerate}
	 \item above $S$ with $\dist(\bp,S)\le \eps_+$,
	 \item or below $S$ with $\dist(\bp,S)\le \eps_-$.
	\end{enumerate}
	
	As above, we use Stokes' formula, but on $W_{S,\eps_+,\eps_-}$ instead. The boundary term on $\partial B_\eps[\gamma]$ vanishes
	leaving only two corresponding to the boundary above and below $S$.
	
	By taking the limits $\eps_+\to 0$ with $\eps_-$ fixed,
	the boundary terms corresponding to $S_+$ has a well-defined limit.
	For any $\eps>0$, this is true for any $\phi\in H^1(\Omega_S)^2$ with $\dist(\supp\phi,\gamma)>\eps$.
	This shows that $\left.\psi\right|_{S_+}\in H^{-1/2}_{\loc}(S\setminus\gamma)$.
	By symmetry the same holds for $\left.\psi\right|_{S_-}$. We end the proof by taking the limit $\eps_+,\eps_-\to 0$.
\end{proof}

\begin{lemma}\label{lem:dmin_sym}
	The operator $\cD_{\bA}^{(\min)}$ is symmetric on its domain.
\end{lemma}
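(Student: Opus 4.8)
The plan is to show that for $\psi,\phi$ in the dense subset of $\dom(\cD_{\bA}^{(\min)})$ consisting of smooth spinors supported away from $\gamma$ with the prescribed phase jump across $S$, we have $\cip{\cD_{\bA}^{(\min)}\psi}{\phi}_{L^2} = \cip{\psi}{\cD_{\bA}^{(\min)}\phi}_{L^2}$; by continuity in the graph norm this then extends to all of $\dom(\cD_{\bA}^{(\min)})$, which is exactly symmetry. So the crux is the integration-by-parts identity for such smooth, compactly-away-from-$\gamma$ spinors, and for this I would apply Proposition~\ref{prop:gen_stokes} directly.

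Concretely, fix $\psi,\phi$ smooth with $\supp\psi,\supp\phi \subset \S^3\setminus\gamma$, $\left.\psi\right|_{S_\pm}$, $\left.\phi\right|_{S_\pm}$ continuous, and $\left.\psi\right|_{S_+}=e^{-2i\pi\alpha}\left.\psi\right|_{S_-}$, and similarly for $\phi$. Choose $\eps>0$ smaller than $\dist(\supp\psi\cup\supp\phi,\gamma)$; then both spinors lie in $H^1(V_{S,\eps})^2$ (in fact they vanish near $\partial B_\eps[\gamma]$), so the boundary term on $\partial B_\eps[\gamma]$ in \eqref{eq:general_stokes} vanishes identically, and the left-hand integrals over $V_{S,\eps}$ equal the integrals over all of $\Oms$ since the integrands are supported in $V_{S,\eps}$. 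Hence $\cip{\cD_{\bA}^{(\min)}\psi}{\phi}_{L^2} - \cip{\psi}{\cD_{\bA}^{(\min)}\phi}_{L^2}$ equals the surface term
\[
	\sum_{j=1}^3\int_{S \cap B_{\eps}[\gamma]^c} \iota_{\be_j}
	\Big[\Big(\cip{i\bsigma(\be^j)\left.\psi\right|_{S_+}}{\left.\phi\right|_{S_+}}
	- \cip{i\bsigma(\be^j)\left.\psi\right|_{S_-}}{\left.\phi\right|_{S_-}}\Big) \vol_{g_3}\Big].
\]

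Now the phase-jump conditions do the work: on $S$ we have $\cip{i\bsigma(\be^j)\left.\psi\right|_{S_+}}{\left.\phi\right|_{S_+}} = \cip{i\bsigma(\be^j)\,e^{-2i\pi\alpha}\left.\psi\right|_{S_-}}{e^{-2i\pi\alpha}\left.\phi\right|_{S_-}} = |e^{-2i\pi\alpha}|^2\,\cip{i\bsigma(\be^j)\left.\psi\right|_{S_-}}{\left.\phi\right|_{S_-}} = \cip{i\bsigma(\be^j)\left.\psi\right|_{S_-}}{\left.\phi\right|_{S_-}}$, since the scalar phase factors out of the Hermitian inner product (one conjugated, one not) and the modulus is $1$. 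Thus the integrand vanishes pointwise on $S\setminus\gamma$, so the surface term is zero, giving $\cip{\cD_{\bA}^{(\min)}\psi}{\phi}_{L^2}=\cip{\psi}{\cD_{\bA}^{(\min)}\phi}_{L^2}$ on the generating set.

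Finally, since $\cD_{\bA}^{(\min)}$ is by definition the graph-closure of its restriction to this generating set, and the identity $\cip{\cD_{\bA}^{(\min)}\psi}{\phi}_{L^2}=\cip{\psi}{\cD_{\bA}^{(\min)}\phi}_{L^2}$ is preserved under taking graph-norm limits in each argument, symmetry extends to the full domain. I do not anticipate a genuine obstacle here — the main point to be careful about is simply the bookkeeping that the chosen smooth spinors really vanish in a neighborhood of $\partial B_\eps[\gamma]$ so that Proposition~\ref{prop:gen_stokes} applies with the $\partial B_\eps[\gamma]$-term absent, and that $H^1(V_{S,\eps})^2$ regularity holds; both are immediate from the support conditions built into $\dom(\cD_{\bA}^{(\min)})$.
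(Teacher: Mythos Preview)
Your proof is correct and follows essentially the same approach as the paper: apply the generalized Stokes formula of Proposition~\ref{prop:gen_stokes}, observe that the $\partial B_\eps[\gamma]$ boundary term vanishes for spinors supported away from $\gamma$, check that the phase-jump condition kills the surface term on $S$, and then pass to the full domain by density in the graph norm. The paper phrases the first step as a limit $\eps\to 0$ rather than fixing $\eps$ below the distance to the supports, but this is cosmetic.
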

\begin{proof}
	Given any two $\psi,\phi \in \dom\big(\cD_{\bA}^{(\min)}\big)$, we have
	\begin{align*}
		\cip{\cD_{\bA}^{(\min)}\psi}{\phi}_{L^2}
		= \lim_{\eps \to 0} \int_{V_{S,\eps}} \cip{-i\bsigma(\nabla)\psi}{\phi}.
	\end{align*}
	We now use \eqref{eq:general_stokes}; the last boundary term vanishes since the functions in 
	$\dom\big(\cD_{\bA}^{(\min)}\big)$ only differ by a phase on $S$
	and the boundaries have opposite orientation. We therefore have
	\begin{align*}
		&\cip{\cD_{\bA}^{(\min)}\psi}{\phi}_{L^2}
		= \lim_{\eps \to 0} \int_{V_{S,\eps}} \langle -i\bsigma(\nabla)\psi,\phi\rangle\\
		&\quad = \lim_{\eps \to 0} \Big( \int_{V_{S,\eps}} \langle \psi,-i\bsigma(\nabla)\phi\rangle
		+ \int_{\{\rho_{\gamma}=\eps\}}\sum_{j=1}^3 \iota_{\be_j} \left[\langle -i\bsigma(\be^j)\psi, \phi\rangle \vol_{g_3}\right] \Big)\\
		&\quad = \cip{\psi}{\cD_{\bA}^{(\min)}\phi}_{L^2},
	\end{align*}
	where the remaining boundary term vanishes when both $\psi, \phi$ are supported away from the curve. 
	The general case follows by density.
\end{proof}

The Lichnerowicz formula 
\begin{equation}\label{eq:lichnerowicz}
	[\bsigma(-i\nabla)]^2=-\Delta_{\S^3}+\frac{1}{4}R_{\mathrm{sc}}^{\S^3}=-\Delta_{\S^3}+\frac{3}{2}
\end{equation}
suggests a relation between the minimal domain and the Sobolev functions with the phase jump across $S$.
Indeed, we have the following proposition, whose proof can be found in the
Appendix.
\begin{proposition}\label{prop:egalite_min_dom}
	Let $H^1_{\bA}(\S^3)\subset L^2(\S^3)$ be defined as
	\begin{multline*}
		H^1_{\bA}(\S^3):=
		\{f \in L^2(\S^3): (\nabla_Xf)_{|_{\Oms}} \in L^2(\Omega_S),\forall X \in \Gamma(\rT\,\S^3);\\
		\left.\psi \right|_{S_+}=e^{-2i\pi\alpha}\left. f \right|_{S_-}\in H^{1/2}(S)\},
	\end{multline*}
	which corresponds to the elements of $H^1(\Omega_S)$ with the phase-jump-condition across $S$.
	Then $H^1_{\bA}(\S^3)^2$ coincides with $\dom\big(\cD_{\bA}^{(\min)}\big)$, and 
	for any $\psi\in \dom\big(\cD_{\bA}^{(\min)}\big)$ we have the identity
	\begin{equation}\label{eq:dirac_gradient}
		\int |\cD_{\bA}^{(\min)}\psi|^2=\int |(\nabla\psi)_{|_{\Oms}}|^2+\frac{3}{2}\int |\psi|^2.
	\end{equation}
\end{proposition}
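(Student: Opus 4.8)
The plan is to prove the two inclusions $\dom\big(\cD_{\bA}^{(\min)}\big) \subset H^1_{\bA}(\S^3)^2$ and $H^1_{\bA}(\S^3)^2 \subset \dom\big(\cD_{\bA}^{(\min)}\big)$ separately, using the Lichnerowicz formula \eqref{eq:lichnerowicz} as the bridge, and to establish the identity \eqref{eq:dirac_gradient} simultaneously with the first inclusion. The core analytic input is the following integration-by-parts/Weitzenb\"ock computation: for $\psi$ smooth, compactly supported in $\S^3\setminus\gamma$ and with the phase jump $\psi|_{S_+}=e^{-2i\pi\alpha}\psi|_{S_-}$, one has $\int_{\Omega_S}|{-i\bsigma(\nabla)\psi}|^2 = \int_{\Omega_S}|{-i\bsigma(\nabla)\psi}|^2$ expanded via \eqref{eq:lichnerowicz} into $\int_{\Omega_S}|\nabla\psi|^2 + \tfrac32\int|\psi|^2$ plus boundary contributions on $S_\pm$ and on $\partial B_\eps[\gamma]$. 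The boundary terms on $S$ cancel because on $S_+$ and $S_-$ the values differ only by the unimodular phase $e^{-2i\pi\alpha}$ while the induced orientations are opposite — exactly as in the proof of Lemma~\ref{lem:dmin_sym} — and the $\partial B_\eps[\gamma]$ term vanishes since $\psi$ is supported away from $\gamma$. This gives \eqref{eq:dirac_gradient} on the dense subset defining $\dom\big(\cD_{\bA}^{(\min)}\big)$; passing to the graph-norm closure, both sides are continuous (the left side is the squared graph norm minus $\int|\psi|^2$, and once we know $\int|\nabla\psi|^2$ is controlled, $\psi$ stays in $H^1(\Omega_S)$), so \eqref{eq:dirac_gradient} extends to all of $\dom\big(\cD_{\bA}^{(\min)}\big)$, and in particular every such $\psi$ lies in $H^1(\Omega_S)^2$ with the phase jump, i.e. in $H^1_{\bA}(\S^3)^2$.

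For the reverse inclusion $H^1_{\bA}(\S^3)^2 \subset \dom\big(\cD_{\bA}^{(\min)}\big)$, I would take $\psi \in H^1(\Omega_S)^2$ with $\psi|_{S_+}=e^{-2i\pi\alpha}\psi|_{S_-}\in H^{1/2}(S)$ and approximate it in the graph norm by the smooth, compactly-supported-away-from-$\gamma$ spinors appearing in the definition of the minimal domain. This is a two-stage mollification: first cut off near $\gamma$ using a logarithmic cutoff $\chi_\delta$ (equal to $0$ on $B_\delta[\gamma]$, $1$ outside $B_{\sqrt\delta}[\gamma]$, with $\int|\nabla\chi_\delta|^2\to 0$ as $\delta\to 0$, which is possible since $\gamma$ has codimension $2$ and hence vanishing $H^1$-capacity in $\S^3$), so that $\chi_\delta\psi\to\psi$ in $H^1(\Omega_S)$ and hence in the graph norm; then, for fixed $\delta$, mollify $\chi_\delta\psi$ on each side of $S$ separately, matching the $H^{1/2}(S)$ boundary traces up to the phase $e^{-2i\pi\alpha}$ (this uses a standard trace theorem / extension argument for the smooth surface $S$, working in the flattened coordinates near $S$). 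The resulting smooth spinors have the required phase jump, continuous traces on $S_\pm$, and converge to $\psi$ in the graph norm, so $\psi\in\dom\big(\cD_{\bA}^{(\min)}\big)$.

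The main obstacle is the capacity/cutoff argument near $\gamma$, together with the subtlety that a priori an element of $H^1(\Omega_S)^2$ with a phase jump need not be controlled near $\gamma$ by the graph norm of $\cD_{\bA}^{(\min)}$ alone — this is precisely why $\dom\big(\cD_{\bA}^{(\min)}\big)$ is defined as a closure rather than as all of $H^1_{\bA}$ directly, and one must check that the closure does not \emph{shrink} the space. Concretely, one needs: (i) that the logarithmic cutoff $\chi_\delta$ preserves membership in $H^1$ and kills no mass, which follows from $\|\nabla\chi_\delta\|_{L^2}^2 \lesssim (\log(1/\delta))^{-1} \to 0$ and the boundedness of $\|\psi\|_{H^1}$; and (ii) that multiplying by $\chi_\delta$ does not destroy the $H^{1/2}(S)$-regularity of the trace, which is clear since $\chi_\delta$ is smooth and $S$ meets $\gamma=\partial S$ only at the boundary circle. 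Once these are in place the density argument is routine, and combined with the first half it yields both the set-equality of domains and the gradient identity \eqref{eq:dirac_gradient}. The remaining bookkeeping — justifying the integration by parts on $V_{S,\eps}$, which is supplied by Proposition~\ref{prop:gen_stokes}, and taking $\eps\to 0$ — is exactly as in Lemma~\ref{lem:dmin_sym}.
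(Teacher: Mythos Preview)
Your derivation of the energy identity \eqref{eq:dirac_gradient} and of the inclusion $\dom(\cD_{\bA}^{(\min)})\subset H^1_{\bA}(\S^3)^2$ matches the paper's: Lichnerowicz plus Stokes on $V_{S,\eps}$, with the $S_\pm$ boundary contributions cancelling because the \emph{tangential} derivatives of $\psi$ inherit the same unimodular phase jump as $\psi$ itself (the paper makes this point explicitly), and then passage to the graph closure.

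For the reverse inclusion your route genuinely differs from the paper's. The paper does \emph{not} use a logarithmic/capacity cutoff. After localising near $\gamma$ and gauging the jump to $\{\theta=0\}$, it multiplies by $e^{-i\alpha\theta}$ to obtain a $2\pi$-periodic function and Fourier-decomposes in $\theta$: since $0<\alpha<1$, every mode carries a weight $(n+\alpha)^2\ge\min(\alpha,1-\alpha)^2>0$, and the angular component of $(\nabla+i\alpha\nabla\theta)$ then forces the genuine Hardy bound $\int|\psi|^2/\rho^2<\infty$. With this, an \emph{ordinary} scaled cutoff $\chi_{2^{-n}\delta,\gamma}$ already gives convergence in $H^1(\Omega_S)$, because $\lVert(\nabla\chi_{2^{-n}\delta,\gamma})\psi\rVert_{L^2}^2$ is dominated by the tail of $\int|\psi|^2/\rho^2$. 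So the paper's argument uses $0<\alpha<1$ in an essential way and yields the Hardy control $\psi/\rho\in L^2$ as a byproduct.

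Your capacity argument is a valid alternative --- indeed it is precisely what the paper invokes at $\alpha=0$ in Proposition~\ref{prop:dmin_alpha_0} --- but the justification you give for $\lVert(\nabla\chi_\delta)\psi\rVert_{L^2}\to 0$ is not quite enough as written. The two facts ``$\lVert\nabla\chi_\delta\rVert_{L^2}^2\lesssim(\log(1/\delta))^{-1}$'' and ``$\lVert\psi\rVert_{H^1}$ bounded'' do not combine via H\"older: that would require $\psi\in L^\infty$, and $\lVert\nabla\chi_\delta\rVert_{L^p}$ blows up for every $p>2$, so no Sobolev embedding of $H^1$ saves you. You need one extra step, e.g.\ first truncate $\psi$ to $\psi\min(1,M/|\psi|)$ (which stays in $H^1_{\bA}$ since the unimodular phase jump is preserved) and then apply the log cutoff to the bounded approximant, or alternatively invoke a logarithmic Hardy inequality on the normal disks. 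With that amendment your argument goes through; compared to the paper's it is flux-independent and more elementary, but it does not deliver the Hardy estimate that the paper's method produces.
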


One readily observes that for $\alpha = 0$ there is no phase jump across the surface $S$, 
which implies that the closure of the minimal Dirac operator is actually the free Dirac on $H^1_0(\S^3\setminus\gamma)^2$.
Since $\gamma$ has co-dimension $2$ in $\S^3$, we have $H^1_0(\S^3\setminus\gamma)=H^1(\S^3)$ by a slight
modification of the arguments given in e.g. \cite{Sved81}, and we can conclude the following.
\begin{proposition}\label{prop:dmin_alpha_0}
	For $\alpha = 0$, the minimal Dirac operator $\cD_{\bA}^{(\min)}$ is self-adjoint 
	and it coincides with the free Dirac operator $-i\bsigma(\nabla)$ with domain $H^1(\S^3)^2$.
\end{proposition}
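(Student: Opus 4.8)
The plan is to exploit the $\alpha = 0$ case as a genuine simplification: when $\alpha = 0$ the phase jump $e^{-2\pi i \alpha} = 1$, so the domain of $\cD_{\bA}^{(\min)}$ consists (before closure) of smooth $\C^2$-valued functions supported away from $\gamma$ with \emph{matching} boundary values on $S_+$ and $S_-$ — i.e. functions that extend smoothly across $S$. Such functions are exactly $\sD(\S^3 \setminus \gamma)^2$. First I would therefore identify $\dom(\cD_{\bA}^{(\min)})$ with the graph-norm closure of $\sD(\S^3 \setminus \gamma)^2$ inside the free Dirac operator. By Proposition~\ref{prop:egalite_min_dom} applied with $\alpha = 0$, this closure is precisely $H^1_{\mathbf{0}}(\S^3 \setminus \gamma)^2$, the functions in $H^1(\S^3 \setminus \gamma)$ with no jump across $S$, and the identity \eqref{eq:dirac_gradient} shows that the graph norm is equivalent to the $H^1(\S^3 \setminus S)$-norm. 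Since the jump vanishes, $H^1_{\mathbf{0}}(\S^3 \setminus \gamma)^2 = H^1_0(\S^3 \setminus \gamma)^2$, the $H^1$-closure of functions supported off $\gamma$.

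Next I would invoke the capacity/codimension argument: because $\gamma$ is a smooth submanifold of codimension $2$ in $\S^3$, it has zero $H^1$-capacity, so $H^1_0(\S^3 \setminus \gamma) = H^1(\S^3)$ — this is the cited modification of the arguments in \cite{Sved81}. (Concretely one uses logarithmic cutoff functions $\chi_\delta$ that are $0$ near $\gamma$, equal to $1$ outside $B_\delta[\gamma]$, with $\|\nabla \chi_\delta\|_{L^2(\S^3)} \to 0$ as $\delta \to 0$; multiplying an arbitrary $H^1(\S^3)$ function by $\chi_\delta$ and passing to the limit shows density.) Combining the two identifications gives $\dom(\cD_{\bA}^{(\min)}) = H^1(\S^3)^2$, and on this domain the operator acts as $-i\bsigma(\nabla)$ with the derivative now taken in the usual distributional sense on all of $\S^3$ (the surface $S$ plays no role once there is no jump).

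It then remains to check that the free Dirac operator $-i\bsigma(\nabla)$ on $H^1(\S^3)^2$ is self-adjoint. I would do this via the Lichnerowicz formula \eqref{eq:lichnerowicz}: $[\bsigma(-i\nabla)]^2 = -\Delta_{\S^3} + \tfrac32$ is (essentially) self-adjoint on $H^2(\S^3)^2$ with compact resolvent on the closed manifold $\S^3$, hence $-i\bsigma(\nabla)$ is essentially self-adjoint on smooth sections and self-adjoint on its natural domain $H^1(\S^3)^2$; symmetry is already contained in Lemma~\ref{lem:dmin_sym}, and closedness follows from \eqref{eq:dirac_gradient} since the graph norm is the $H^1$-norm, which is complete. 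Alternatively one can argue directly that $\ker((\cD_{\bA}^{(\min)})^* \mp i) = \{0\}$: an element $\psi$ of the adjoint domain is in $\cA_S$, satisfies $-i\bsigma(\nabla)\psi = \pm i \psi$ weakly on $\Omega_S$ with vanishing jumps across $S$ by Proposition~\ref{prop:gen_stokes}, hence solves the elliptic equation globally on $\S^3$, so by elliptic regularity $\psi \in H^1(\S^3)^2 \subset H^2(\S^3)^2$ and $\langle \psi, \psi \rangle = \mp i \langle \psi, \psi\rangle$ forces $\psi = 0$.

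The main obstacle — really the only non-formal point — is the density statement $H^1_0(\S^3 \setminus \gamma) = H^1(\S^3)$, i.e. that removing the codimension-$2$ link does not change the form domain. The paper explicitly defers this to ``a slight modification of the arguments given in e.g. \cite{Sved81}'', so in the write-up I would simply record the logarithmic-cutoff construction in the tubular neighborhood $B_\eps[\gamma]$ (using the diffeomorphism \eqref{eq:exp_tubular_nb} to reduce to the model $\R^2$ with a point removed, where $\int_{|x|<\delta}|\nabla \log|x||^2\,dx$ is finite and the truncation $\min(1, \log(\delta/|x|)/\log(\delta/\delta^2))$ has $L^2$-gradient tending to zero) and cite \cite{Sved81} for the remaining details. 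Everything else — the identification of the minimal domain with $\sD(\S^3\setminus\gamma)^2$ at $\alpha=0$, the use of Propositions~\ref{prop:egalite_min_dom} and \ref{prop:gen_stokes}, and the self-adjointness of the free Dirac operator on the compact manifold $\S^3$ — is routine.
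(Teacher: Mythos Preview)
Your proposal is correct and follows essentially the same route as the paper: at $\alpha=0$ the phase jump disappears, so the graph-norm closure of the minimal domain is the free Dirac on $H^1_0(\S^3\setminus\gamma)^2$, and then the codimension-$2$ capacity argument (citing \cite{Sved81}) identifies this with $H^1(\S^3)^2$. The paper's argument is just the sentence preceding the proposition; your write-up expands on the same two steps, adding the explicit logarithmic cutoff and a justification of self-adjointness of the free Dirac via Lichnerowicz, but there is no substantive difference in strategy.
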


\subsubsection{Extensions for $0<\alpha<1$}
In order to characterize the self-adjoint extensions of the operator $\cD_{\bA}^{(\min)}$ 
in the singular gauge $\bA$, we need to specify the behavior of the functions 
(in the prospective domain) close to the knot $\gamma$. 
Up to fixing a base point $\bp_0$, we identify $\gamma$ with its arclength parametrization
$\gamma: \R/\ell\Z \to \S^3$ with $\gamma(0) = \bp_0$.

We are interested in characterizing two particular extensions, namely where the additional 
elements of the domain ``align with or against the magnetic field". 
To make this more rigorous, we first have to define two smooth spinors $\xi_{\pm}$, which correspond
to the normalized eigenfunctions of $\bsigma(\bT^{\flat})$:
\begin{align*}
	\bsigma(\bT^{\flat})\xi_{+} = \xi_{+}, \quad \bsigma(\bT^{\flat})\xi_{-} = -\xi_{-}.
\end{align*}
These spinors are a priori only defined on $\gamma$, but we can extend them in a tubular neighborhood of $\gamma$ by
parallel transport of the Seifert frame (hence of $\bT$) along the geodesics \eqref{eq:geodesic_s3}. Note that this defines a \textit{smooth} extension of $\xi_{\pm}$,
as parallel transport is a first order equation which depends smoothly on the initial conditions and the parameters of the equation.
In particular, $\nabla_{X}\xi_{\pm} \in C(B_{\eps}[\gamma];\Psi)$ $\forall X \in \Gamma(\rT\,\S^3)$.
Furthermore, we fix their relative phase by requiring that 
\begin{align}\label{eq:rel_phase}
	\omega(\bS) + i \omega(\bN) = \langle \xi_-, \bsigma(\omega)\xi_+\rangle, \quad \forall \omega \in \Omega^1(B_{\eps}[\gamma]).
\end{align}

Let $\chi: \R \to \R_+$ be a smooth function with $\supp \chi \in [-1,1]$ and $\chi(x) = 1$ for $x \in [-2^{-1},2^{-1}]$. 
Furthermore, we define the localization function at level $0<\delta \ll 1$ by
\begin{equation}\label{eq:chi_loc_curve}
	\chi_{\delta,\gamma}: 
	\begin{array}{ccl}
		B_{\delta}[\gamma] &\longrightarrow& \R_+\\
		\bp = \exp_{\gamma(s)}(\rho_{\gamma}\bv_0(s))
		&\mapsto& \chi\left(\rho_{\gamma}\delta^{-1}\right).
	\end{array}
\end{equation}
We then pick $\delta>0$ such 
that\footnote{This choice is motivated by the fact that for any $\bp \in \supp \chi_{\delta,\gamma}$ we have 
$\cos(\rho_{\gamma}) - \sin(\rho_{\gamma})|\kappa(s)| \geq 1-\eps$.}
$$
	0<\delta< \eps \cdot \min\left\{1, \left(\norm{\kappa}_{L^{\infty}} + \sqrt{\eps + \norm{\kappa}_{L^{\infty}}^2}\right)^{-1}\right\},
$$
where $\eps>0$ is such that \eqref{eq:exp_tubular_nb} is a diffeomorphism. Here $\norm{\kappa}_{L^{\infty}}$ denotes:
\[
	\sup_{\theta,s}\big| \cos(\theta)\kappa_g(s)+\sin(\theta)\kappa_n(s)\big|.
\]

\begin{theorem}\label{thm:dirac_sa_knot}
	Let $\gamma \subset \S^3$ be a knot with Seifert surface $S$, and set
	$\bA = 2\pi \alpha [S]$, $0<\alpha<1$.
	We define the Dirac operators $\cD_{\bA}^{(\pm)}$ by
	$$
		\left\{
		\begin{array}{lcl}
		\dom\big(\cD_{\bA}^{(\pm)}\big)
		&:=& \{\psi \in \dom\big((\cD_{\bA}^{(\min)})^*\big): 
		\langle\xi_{\mp}, \chi_{\delta,\gamma}\psi\rangle \xi_{\mp} \in \dom\big(\cD_{\bA}^{(\min)}\big)\},\\
		\cD_{\bA}^{(\pm)}\psi &:=& -i\bsigma(\nabla)\psi\big|_{\Oms} \in  L^2(\Omega_{S})^2 \hookrightarrow L^2(\S^3)^2.
		\end{array}
		\right.
	$$
	These definitions are independent of the choice of $\chi_{\delta,\gamma}$ and both operators 
	are self-adjoint.
\end{theorem}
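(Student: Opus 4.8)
The plan is to analyze the self-adjoint extensions of $\cD_{\bA}^{(\min)}$ via the deficiency-space/boundary-triple picture localized near $\gamma$. First I would set up the decomposition of $\dom\big((\cD_{\bA}^{(\min)})^*\big)$ near the knot: using Proposition~\ref{prop:gen_stokes} (applied in a tubular neighborhood), every $\psi \in \dom\big((\cD_{\bA}^{(\min)})^*\big)$ admits boundary traces, and the singular behavior close to $\gamma$ is governed by the two model spinor directions $\xi_{\pm}$ (the $\pm 1$-eigenspinors of $\bsigma(\bT^{\flat})$). Concretely, after transporting the Seifert frame into $B_\delta[\gamma]$ and using the localization $\chi_{\delta,\gamma}$, I would show that $\chi_{\delta,\gamma}\psi$ decomposes — modulo a term already in $\dom\big(\cD_{\bA}^{(\min)})=H^1_{\bA}(\S^3)^2$ by Proposition~\ref{prop:egalite_min_dom} — into a part proportional to $\xi_+$ whose singular profile in the normal variable $\rho_\gamma$ behaves like $\rho_\gamma^{-\alpha}$ (times a function of arclength) and a part proportional to $\xi_-$ with profile $\rho_\gamma^{\alpha-1}$. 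The point is that $-i\bsigma(\nabla)$ restricted to a neighborhood of $\gamma$ reduces, to leading order, to a 2D Aharonov--Bohm--type operator in the normal disc bundle, whose maximal domain modulo minimal domain is two-dimensional per point of $\gamma$, spanned by exactly these two profiles; the condition $\langle\xi_{\mp},\chi_{\delta,\gamma}\psi\rangle\xi_{\mp}\in\dom\big(\cD_{\bA}^{(\min)}\big)$ selects, in the $\cD_{\bA}^{(+)}$ case, the functions whose $\xi_-$-component ($\rho_\gamma^{\alpha-1}$, the more singular one when $\alpha<\tfrac12$, but in any case the one ``against'' the field) is absent, i.e.\ only the $\xi_+$-profile survives — and symmetrically for $\cD_{\bA}^{(-)}$.

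Next I would verify \emph{independence of the cutoff}: if $\chi_{\delta,\gamma}$ and $\chi_{\delta',\gamma}$ are two admissible localization functions, their difference is supported away from $\gamma$, hence $(\chi_{\delta,\gamma}-\chi_{\delta',\gamma})\psi \in H^1(\Omega_S)^2$ with support away from $\gamma$, so it lies in $\dom\big(\cD_{\bA}^{(\min)}\big)$ automatically; therefore $\langle\xi_{\mp},\chi_{\delta,\gamma}\psi\rangle\xi_{\mp}\in\dom\big(\cD_{\bA}^{(\min)}\big)$ iff $\langle\xi_{\mp},\chi_{\delta',\gamma}\psi\rangle\xi_{\mp}\in\dom\big(\cD_{\bA}^{(\min)}\big)$, and the operator action is unchanged since $\cD_{\bA}^{(\pm)}$ acts as $-i\bsigma(\nabla)$ on $\Oms$ regardless.

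For self-adjointness proper, I would use the criterion $\cD_{\bA}^{(\min)} \subset \cD_{\bA}^{(\pm)} \subset (\cD_{\bA}^{(\min)})^* = (\cD_{\bA}^{(\pm)})^{*}$ together with symmetry of $\cD_{\bA}^{(\pm)}$ on its domain. Symmetry follows from the generalized Stokes formula \eqref{eq:general_stokes}: for $\psi,\phi\in\dom\big(\cD_{\bA}^{(\pm)}\big)$ the surface term on $S$ vanishes as in Lemma~\ref{lem:dmin_sym} (the phase-jump condition is inherited from membership in $\dom\big((\cD_{\bA}^{(\min)})^*\big)$), and the boundary term on $\partial B_\eps[\gamma]$, which is $\sum_j \int_{\partial B_\eps[\gamma]} \iota_{\be_j}[\langle i\bsigma(\be^j)\psi,\phi\rangle \vol_{g_3}]$, tends to $0$ as $\eps\to0$: writing $\psi = a_\psi \xi_+ + (\text{less singular})$ and similarly for $\phi$ near $\gamma$, the leading singular contributions are both in the $\xi_+$-channel with profile $\rho_\gamma^{-\alpha}$, and one checks $\langle i\bsigma(\bn^{\flat})\xi_+,\xi_+\rangle=0$ for the normal conormal $\bn$ to $\partial B_\eps[\gamma]$ (since $\bsigma(\bT^\flat)\xi_+=\xi_+$ forces $\langle\xi_+,\bsigma(\bn^\flat)\xi_+\rangle=0$ for $\bn\perp\bT$), so the would-be divergent $\eps^{-2\alpha}\cdot\eps$ term has vanishing coefficient, while the cross terms are integrable. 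Conversely, to show maximality one shows $\dom\big((\cD_{\bA}^{(\pm)})^*\big)\subset\dom\big(\cD_{\bA}^{(\pm)}\big)$: if $\phi\in\dom\big((\cD_{\bA}^{(\min)})^*\big)$ pairs symmetrically with all of $\dom\big(\cD_{\bA}^{(\pm)}\big)$, then testing against the one-dimensional family of $\xi_+$-singular (resp.\ $\xi_-$-singular) functions in $\dom\big(\cD_{\bA}^{(\pm)}\big)$ forces the complementary singular coefficient of $\phi$ to vanish, i.e.\ $\phi\in\dom\big(\cD_{\bA}^{(\pm)}\big)$.

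\textbf{Main obstacle.} The hard part is making the asymptotic expansion near $\gamma$ rigorous: establishing that elements of $\dom\big((\cD_{\bA}^{(\min)})^*\big)$ genuinely have the claimed $\rho_\gamma^{-\alpha}\xi_+ \oplus \rho_\gamma^{\alpha-1}\xi_-$ singular structure (with coefficients of the right Sobolev regularity in the arclength variable), uniformly along $\gamma$, and that the ``remainder'' after extracting $\langle\xi_{\mp},\chi_{\delta,\gamma}\psi\rangle\xi_{\mp}$ really lands in $H^1_{\bA}(\S^3)^2$. This requires a careful comparison of $-i\bsigma(\nabla)$ in the Seifert-frame tubular coordinates with the flat 2D Aharonov--Bohm Dirac operator, controlling the curvature/torsion error terms ($\kappa_g,\kappa_n,\tau_r$) as lower-order perturbations — which is presumably where the Appendix and the specific choice of $\delta$ (ensuring $\cos\rho_\gamma - \sin\rho_\gamma|\kappa|\ge 1-\eps$) do the work.
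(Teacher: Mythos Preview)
Your approach is essentially the paper's: reduce to a model Aharonov--Bohm Dirac operator on $\T_\ell\times\R^2$ in tubular coordinates, compute its deficiency spaces explicitly (Bessel functions $K_\alpha,K_{1-\alpha}$), show the curvature/torsion corrections $\cE_0,\cE_1$ map $\dom(\cD_{\max})$ to $L^2$ (the key input being $\rho\psi\in\dom(\cD_{\min})$ for $\psi\in\dom(\cD_{\max})$), and then identify the boundary defect on $\partial B_\eps[\gamma]$ with the model boundary defect so that self-adjointness transfers directly. One correction: you have the singular profiles swapped---in the model the $\xi_+$-component of the deficiency elements carries $K_{1-\alpha}(r\langle j\rangle)e^{-i\theta}\sim\rho^{\alpha-1}$ and the $\xi_-$-component carries $K_\alpha(r\langle j\rangle)\sim\rho^{-\alpha}$, not the reverse.
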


\subsubsection{Relation between the two extensions $\cD_{\bA}^{(\pm)}$}\label{sec:relation}
For a magnetic vector potential $\bA = 2\pi \alpha [S]$, we define
$$
	-\bA := 2\pi\alpha[-S],
$$
and the charge-conjugate potential $\bA_{\mathrm{C}}$ as
$$
	\bA_{\mathrm{C}} := 2\pi (1-\alpha)[-S],
$$
where $-S$ denotes the surface $S$ with opposite orientation.

\begin{proposition}\label{prop:c_conj}
	The charge conjugation $\mathrm{C}$ maps $\dom(\cD_{\bA}^{(-)})$ 
	onto $\dom(\cD_{-\bA}^{(-)})$ and we have
	$$
		\mathrm{C}\, \cD_{\bA}^{(-)} = \cD_{-\bA}^{(-)}\,\mathrm{C}.
	$$
	Furthermore,
	$$
		\cD_{\bA}^{(+)}=\cD_{\bA_{\mathrm{C}}}^{(-)}.
	$$
\end{proposition}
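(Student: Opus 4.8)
The plan is to verify both assertions directly from the defining data of the three extensions $\cD_{\bA}^{(-)}$, $\cD_{-\bA}^{(-)}$, $\cD_{\bA}^{(+)}$ and $\cD_{\bA_{\mathrm{C}}}^{(-)}$, using the charge-conjugation operator $\mathrm{C}$ from Section~\ref{charge_conj} as the bridge. First I would record the elementary facts about $\mathrm{C}$ that we will use repeatedly: $\mathrm{C}$ is an antilinear isometry of $L^2(\S^3)^2$, it commutes with the free Dirac operator $-i\bsigma(\nabla)$, and it maps a phase jump of $e^{-2i\pi\alpha}$ across $S$ to a phase jump of $e^{+2i\pi\alpha}$ across $S$ (since $\mathrm{C}$ is antilinear). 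The latter means precisely that $\mathrm{C}$ maps $\dom(\cD_{\bA}^{(\min)})$ onto $\dom(\cD_{-\bA}^{(\min)})$, because reversing the orientation of $S$ changes the jump condition $f|_{S_+}=e^{-2i\pi\alpha}f|_{S_-}$ into $f|_{S_-}=e^{-2i\pi\alpha}f|_{S_+}$, i.e.\ $f|_{S_+}=e^{+2i\pi\alpha}f|_{S_-}$; by Proposition~\ref{prop:egalite_min_dom} this is the same as the $H^1_{-\bA}$-condition. By taking adjoints, $\mathrm{C}$ also intertwines $(\cD_{\bA}^{(\min)})^*$ with $(\cD_{-\bA}^{(\min)})^*$.

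For the first statement, the key remaining point is the behaviour of $\mathrm{C}$ on the singular spinors $\xi_{\pm}$. Since $\mathrm{C}$ sends $\Lambda_{\pm}(\omega)$ to $\Lambda_{\mp}(\omega)$ (cf.\ \eqref{eq:C_interc}), and $\xi_{\pm}$ are by definition sections of $\Lambda_{\pm}(\bT^{\flat})$ extended by parallel transport, $\mathrm{C}$ maps (up to a phase which one fixes via \eqref{eq:rel_phase}) $\xi_{+}$ to a multiple of $\xi_{-}$ and vice versa; because $\mathrm{C}$ commutes with parallel transport this holds on all of $B_{\eps}[\gamma]$, not just on $\gamma$. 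Moreover $\mathrm{C}$ commutes with the real-valued localization function $\chi_{\delta,\gamma}$. Hence, writing $\psi \in \dom(\cD_{\bA}^{(-)})$, antilinearity gives $\langle \xi_{+}, \chi_{\delta,\gamma}\,\mathrm{C}\psi\rangle\,\xi_{+} = \mathrm{C}\big(\langle \xi_{+}, \chi_{\delta,\gamma}\psi\rangle\,\xi_{+}\big)$ up to the (constant) phase, and the condition $\langle\xi_{+},\chi_{\delta,\gamma}\psi\rangle\xi_{+}\in\dom(\cD_{\bA}^{(\min)})$ translates exactly into $\langle\xi_{+},\chi_{\delta,\gamma}\,\mathrm{C}\psi\rangle\xi_{+}\in\dom(\cD_{-\bA}^{(\min)})$, which is the defining condition for $\dom(\cD_{-\bA}^{(-)})$ (note that for $-\bA$ the surface is $-S$, so the orientation of the tangent $\bT$ of $\gamma$ reverses and the roles of $\xi_{+}$ and $\xi_{-}$ swap, which is consistent with $\mathrm{C}$ swapping them). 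This shows $\mathrm{C}\dom(\cD_{\bA}^{(-)})=\dom(\cD_{-\bA}^{(-)})$, and on this domain $\mathrm{C}\,\cD_{\bA}^{(-)}\psi = \mathrm{C}(-i\bsigma(\nabla)\psi) = -i\bsigma(\nabla)\mathrm{C}\psi = \cD_{-\bA}^{(-)}\mathrm{C}\psi$, giving the intertwining relation.

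For the second statement, $\cD_{\bA}^{(+)}=\cD_{\bA_{\mathrm{C}}}^{(-)}$, I would argue that both operators act as $-i\bsigma(\nabla)$ and that their domains coincide. The potential $\bA_{\mathrm{C}}=2\pi(1-\alpha)[-S]$ imposes the jump $f|_{S_+}=e^{-2i\pi(1-\alpha)}f|_{S_-}=e^{2i\pi\alpha}f|_{S_-}$ across $-S$, i.e.\ $f|_{S_-}=e^{-2i\pi\alpha}f|_{S_+}$; but this is the \emph{same} space of $L^2$-functions on $\Oms$ as the one cut out by $\bA=2\pi\alpha[S]$ — reversing orientation of $S$ and replacing $\alpha$ by $1-\alpha$ leaves the phase-jump subspace invariant. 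Hence $\dom(\cD_{\bA}^{(\min)})=\dom(\cD_{\bA_{\mathrm{C}}}^{(\min)})$ and likewise for the maximal domains. The difference between the two extensions lies only in which combination $\langle\xi_{\mp},\chi_{\delta,\gamma}\psi\rangle\xi_{\mp}$ is required to lie in the minimal domain. For $\bA_{\mathrm{C}}$ the surface $-S$ has normal $-\bN$, hence the Seifert frame is $(\bT,-\bS,-\bN)$ if we keep the orientation of $\gamma$ as the boundary of $-S$ — which actually reverses $\gamma$ as well, so the relevant tangent is $-\bT$ and the eigenspinors of $\bsigma((-\bT)^{\flat})$ with eigenvalues $\pm1$ are $\xi_{\mp}$. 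Thus the ``$(-)$''-condition for $\bA_{\mathrm{C}}$, namely that $\langle\xi_{+},\chi_{\delta,\gamma}\psi\rangle\xi_{+}$ (the eigenspinor for the $+1$ eigenvalue of $\bsigma((-\bT)^{\flat})$, which is $\xi_{-}$ in the original frame\,) lie in $\dom(\cD^{(\min)})$, becomes precisely the ``$(+)$''-condition for $\bA$. Comparing the definitions term by term then yields $\dom(\cD_{\bA}^{(+)})=\dom(\cD_{\bA_{\mathrm{C}}}^{(-)})$, and since both act as $-i\bsigma(\nabla)$ the operators are equal.

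\textbf{Main obstacle.} The routine parts (antilinearity bookkeeping, $\mathrm{C}$ commuting with $-i\bsigma(\nabla)$ and with $\chi_{\delta,\gamma}$) are straightforward; the delicate point is tracking \emph{all} the orientation reversals simultaneously — of the Seifert surface $S$, of its normal $\bN$, of the boundary knot $\gamma$ and hence of its tangent $\bT$ — and checking that the net effect is exactly the swap $\xi_{+}\leftrightarrow\xi_{-}$ together with $\alpha\leftrightarrow 1-\alpha$, with the relative phases fixed by \eqref{eq:rel_phase} behaving consistently. I would handle this by fixing one explicit choice of orientations at the outset and carrying the signs through every defining formula, rather than arguing heuristically.
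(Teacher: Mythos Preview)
Your proposal is correct and takes essentially the same approach as the paper: for the first claim use that $\mathrm{C}$ interchanges the complex lines $\C\xi_{\pm}$, commutes with $-i\bsigma(\nabla)$, and is local (hence intertwines the minimal/maximal domains); for the second claim use that $\alpha\leftrightarrow 1-\alpha$ together with $S\leftrightarrow -S$ leaves the phase-jump condition unchanged while swapping $\C\xi_{\pm}^{S}\leftrightarrow\C\xi_{\mp}^{-S}$. One small slip: the identity you write, $\langle \xi_{+}, \chi_{\delta,\gamma}\,\mathrm{C}\psi\rangle\,\xi_{+} = \mathrm{C}\big(\langle \xi_{+}, \chi_{\delta,\gamma}\psi\rangle\,\xi_{+}\big)$, is not correct as stated --- the left-hand side actually involves the $\xi_{-}$-component of $\psi$; the correct relation (up to a constant phase) is $\langle \xi_{-}, \chi_{\delta,\gamma}\,\mathrm{C}\psi\rangle\,\xi_{-} = \mathrm{C}\big(\langle \xi_{+}, \chi_{\delta,\gamma}\psi\rangle\,\xi_{+}\big)$, which combined with your own parenthetical observation $\xi_{-}^{S}=\xi_{+}^{-S}$ gives exactly the $(-)$-condition for $-\bA$.
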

\begin{proof}
	To show the first statement, we recall \eqref{eq:C_interc}: 
	the charge conjugation $\mathrm{C}$ interchanges pointwise the two complex lines
	$\C\xi_{\pm}(\bp)$. Furthermore, $\mathrm{C}$ commutes with the free Dirac operator $-i\bsigma(\nabla)$ 
	and both operators are \emph{local}. This then implies that $\cD_{\bA}^{(-)}$ and 
	$\cD_{-\bA}^{(-)}$ are anti-unitarily equivalent through $\mathrm{C}$.
	
	For the second claim we note that the phase jumps for both operators are the same 
	and the complex lines $\C\xi_{\pm}(\bp)$ 
	for $S$ coincide with the complex lines $\C \xi_{\mp}(\bp)$ for $-S$.
\end{proof}

\subsection{Proof of Theorem~\ref{thm:dirac_sa_knot}}
Before we begin with the proof, it is necessary to introduce some technical tools.

\subsubsection{Coordinates for $B_{\eps}[\gamma]$}\label{sec:coord}
We will need coordinates for the tubular neighborhood of the curve.
We have already obtained two coordinates, namely the geodesic distance from the
curve $\rho_{\gamma} =: \rho$ and the arc length parameter $s$. 
To efficiently describe the phase jump of the function, it is convenient to 
to construct an angle function $\theta$ on 
$B_{\eps}[\gamma] \setminus \gamma$. We define it with the help of the Seifert frame through
\begin{align}\label{eq:drho}
	\d \rho = \cos(\theta) \bS^{\flat} + \sin(\theta) \bN^{\flat}.
\end{align}
In other words, for $\bp \in B_{\eps}[\gamma]\setminus \gamma$ we have
$$
	\bp = \exp_{\gamma(s)}(\rho(\bp)[\cos(\theta)\bS(s) + \sin(\theta)\bN(s)]).
$$
As the Seifert frame gives a trivialization of the normal fibre of the curve, these coordinates provide the inverse
of the exponential map from the neighborhood of the null-section in this trivialization onto $B_{\eps}[\gamma]$.

From the definition it is clear that the push-forward $\exp_{*}(\partial_{\rho})$ is orthogonal to 
$\exp_{*}(\partial_s)$ and $\exp_{*}(\partial_\theta)$. From \eqref{eq:geodesic_s3} we also obtain
\begin{equation}\label{eq:partiald_coord}
	\left\{
	\begin{array}{rcl}
		\exp_{*}(\partial_{s}) &=& (\cos(\rho)-\sin(\rho))(\kappa_g(s) \cos(\theta) + \kappa_n(s)\sin(\theta))\bT(s)\\
		&& +\sin(\rho)\tau_r(s)\left[-\sin(\theta)\bS(s) + \cos(\theta)\bN(s)\right]\\
		\exp_{*}(\partial_{\rho}) &=& \cos(\theta)\bS(s) + \sin(\theta)\bN(s)\\
		\exp_{*}(\partial_{\theta}) &=& \sin(\rho)\left[-\sin(\theta)\bS(s) + \cos(\theta)\bN(s)\right].
	\end{array}\right.
\end{equation}
Setting $h(\bp) := \cos(\rho) - \sin(\rho)(\kappa_g(s)\cos(\theta) +\kappa_n(s)\sin(\theta))$, we get that
\begin{align}\label{def:bg}
	(\bT,(\d\rho)^{\sharp},\bG) :&=(\bT,\exp_{*}(\partial_{\rho}), \sin(\rho)^{-1}\exp_{*}(\partial_{\theta}))\\
	&= (h^{-1}(\exp_{*}(\partial_{s}) - \tau_r \exp_{*}(\partial_{\theta})),\exp_{*}(\partial_{\rho}), \sin(\rho)^{-1}\exp_{*}(\partial_{\theta}))\nonumber
\end{align}
is an orthonormal basis of $\rT\, (B_{\eps}[\gamma] \setminus \gamma)$.
The pullback of the volume form is given by
\begin{equation}\label{eq:pullb_volform}
	\exp^{*}(\vol_{g_3}) = \exp^{*}(\bT \wedge \d\rho \wedge \bG^{\flat}) = h \sin(\rho)\,\d s \wedge \d\rho \wedge \d\theta.
\end{equation}

\subsubsection{The model case}\label{ssec:T_straight_line}
Consider the manifold
$$
	\T_{\ell} \times \R^2 := (\R / (\ell \Z)) \times \R^2,
$$
equipped with the flat metric
and coordinates $(s,u_1,u_2)$. On this manifold we want analyze the Dirac operator with
magnetic field supported on the curve $\T_{\ell} \times\{0\}\subset \T_{\ell} \times\R^2$.
Understanding this particular model case will be of great use 
in the analysis of the Dirac operator whose magnetic field is supported on a general knot.
The idea is that for a general knot in $\S^3$, the different self-adjoint extensions will be characterized by the 
behavior of the spinors close to the curve. In a sufficiently small tubular neighborhood of the curve
we will see that the action of the (general) Dirac operator
is closely related to the model operator on $\T_{\ell} \times\R^2$.

The $Spin^c$ spinor bundle is $\T_{\ell} \times \R^2 \times \C^2$, endowed with the Clifford map $\wt{\bsigma}$ 
defined through
$$
	\wt{\bsigma}(\d s) = \sigma_3, \quad \wt{\bsigma}(\d u_1) = \sigma_1, \quad \wt{\bsigma}(\d u_2) = \sigma_2.
$$
Furthermore, the symbol $\wt{\nabla}$ denotes the canonical connection on the above $Spin^c$ spinor bunde.
We will study the problem in the (radial) gauge
\begin{equation}\label{def:std_gauge}
	\bA(s,u_1,u_2)
	:=\alpha \frac{-u_2\d u_1 + u_1\d u_2}{u_1^2+u_2^2}
	\in \rT^*_{(s,u)}\left(\T_{\ell}\times\R^2 \right),
\end{equation}
with $\alpha \in [0,1)$.
\begin{rem}\label{rem:dmod_singg}
	Note that $\bA = \alpha \d \theta$, where $e^{i\theta} = (u_1+iu_2)(u_1^2+u_2^2)^{-1/2}$. 
	It is linked to the singular gauge
	$$
		\bA_0[\omega] = 2\pi \alpha\int_{\{\theta = 0\}}\omega
	$$
	through the gauge transformation $e^{i\alpha \theta}$, where $\theta$ has branch cut along $\theta=0$.
	One can therefore write formally
	\begin{align}\label{eq:dmodel_gtraf}
		\wt{\bsigma}(-i\wt{\nabla} + \bA) = e^{-i\alpha\theta}\wt{\bsigma}(-i\wt{\nabla})e^{i\alpha\theta}.
	\end{align}
\end{rem}

The minimal operator $\cD_{\T_{\ell},\alpha}^{(\min)} := \wt{\bsigma}(-i\wt{\nabla} + \bA)$ has domain 
$$
	\dom(\cD_{\T_{\ell},\alpha}^{(\min)}) 
	:= \mathrm{clos}_{\mathcal{G}}\left(C_0^{\infty}(\T_{\ell} \times (\R^2 \setminus \{0\}))^2\right).
$$
According to von Neumann's theorem, all self-adjoint extensions can be obtained through 
the study of the deficiency spaces of the maximal operator
$$
	\cD_{\T_{\ell},\alpha}^{(\max)} := \big(\cD_{\T_{\ell},\alpha}^{(\min)}\big)^*.
$$
Define $\hat{H}^{-1}(\T_{\ell}) := \ell_2(\T_{\ell}^*;\langle j\rangle^{-2}\d\mu_{\T_{\ell}^*}(j))$, where
$\T_{\ell}^* := \tfrac{2\pi}{\ell}\Z$ is the Pontryagin dual of $\T_{\ell}$ and $\d\mu_{\T_{\ell}^*}$ is its counting measure.
\begin{lemma}\label{lem:model_c_def_sp}
	The deficiency spaces of $\cD_{\T_{\ell},\alpha}^{(\max)}$ are given by
	\begin{align*}
		&\ker\big(\cD_{\T_{\ell},\alpha}^{(\max)} \mp i\big)\\
		&\quad = \left\{f_{\pm i}(\underline{c}) := \frac{1}{\sqrt{2\pi\ell}}\sum_{j \in \T_{\ell}^*} c_j e^{ijs}
		\begin{pmatrix}K_{1-\alpha}(r\langle j\rangle)e^{-i\theta}\\
		\frac{(\pm1-ij)}{\langle j\rangle}K_{\alpha}(r\langle j\rangle)\end{pmatrix} 
		: \underline{c} \in \hat{H}^{-1}(\T_{\ell})\right\},
	\end{align*}
	where $K_{\alpha}$ and $K_{1-\alpha}$ denote the modified Bessel functions of the second kind.
\end{lemma}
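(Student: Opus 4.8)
The plan is to exploit the symmetries of $\cD_{\T_{\ell},\alpha}^{(\max)}$ to separate variables, reduce the equation $\cD_{\T_{\ell},\alpha}^{(\max)}f=\pm i f$ to a family of decoupled modified Bessel equations, and then keep only the square-integrable solutions. Since $\dom(\cD_{\T_{\ell},\alpha}^{(\min)})$ is the graph closure of $C_0^{\infty}(\T_{\ell}\times(\R^2\setminus\{0\}))^2$, a spinor $f$ lies in $\ker(\cD_{\T_{\ell},\alpha}^{(\max)}\mp i)$ exactly when $f\in L^2(\T_{\ell}\times\R^2)^2$ and $\wt{\bsigma}(-i\wt{\nabla}+\bA)f=\pm i f$ in $\sD'(\T_{\ell}\times(\R^2\setminus\{0\}))^2$; by elliptic regularity such an $f$ is smooth on $\T_{\ell}\times(\R^2\setminus\{0\})$. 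In the coordinates $(s,u_1,u_2)$ the operator is $\sigma_3(-i\partial_s)+D_{2\mathrm D}^{\bA}$ with $D_{2\mathrm D}^{\bA}:=\sigma_1(-i\partial_{u_1}+A_1)+\sigma_2(-i\partial_{u_2}+A_2)$, and the two structural facts I would use are that $D_{2\mathrm D}^{\bA}$ anticommutes with $\sigma_3$ and that the full operator commutes with translations in $s$ and with the total angular momentum $-i\partial_{\theta}+\tfrac12\sigma_3$ (the radial gauge $\bA=\alpha\,\d\theta$ being rotation invariant). Decomposing $f$ into the corresponding joint eigenmodes,
\[
	f=\frac{1}{\sqrt{2\pi\ell}}\sum_{j\in\T_{\ell}^*}\sum_{m\in\Z}e^{ijs}\begin{pmatrix}u_{j,m}(r)\,e^{i(m-1)\theta}\\ v_{j,m}(r)\,e^{im\theta}\end{pmatrix},
\]
the eigenvalue equation turns into, for each $(j,m)$, a first-order $2\times 2$ system in $r$ coupling $u_{j,m}$ and $v_{j,m}$.

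To solve these systems I would square. Using the anticommutation one gets $\big(\sigma_3(-i\partial_s)+D_{2\mathrm D}^{\bA}\big)^2=-\partial_s^2+(D_{2\mathrm D}^{\bA})^2$, and on $\R^2\setminus\{0\}$ (where the Aharonov--Bohm field vanishes) $(D_{2\mathrm D}^{\bA})^2$ is just the magnetic Laplacian $\sum_{k=1}^2(-i\partial_{u_k}+A_k)^2$. Hence $\cD_{\T_{\ell},\alpha}^{(\max)}f=\pm i f$ implies, on the $j$-th $s$-mode, $\sum_{k}(-i\partial_{u_k}+A_k)^2 f_j=-\langle j\rangle^2 f_j$, which in the channel $(j,m)$ is the modified Bessel equation of order $|m-1+\alpha|$ for $u_{j,m}$ and of order $|m+\alpha|$ for $v_{j,m}$, with argument $r\langle j\rangle$; its solution space is spanned by $I_{\nu}(r\langle j\rangle)$ and $K_{\nu}(r\langle j\rangle)$. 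Square-integrability at infinity rules out $I_{\nu}$, so $u_{j,m}$ and $v_{j,m}$ are multiples of the respective $K$'s, and the first-order Dirac system then fixes the ratio of the two coefficients: using the recurrence relations for $K_{\nu}$ one checks that the system is (nontrivially) solvable and yields, up to a scalar, the pair $\big(K_{|m-1+\alpha|}(r\langle j\rangle),\,\tfrac{\pm1-ij}{\langle j\rangle}K_{|m+\alpha|}(r\langle j\rangle)\big)$. Since $K_{\nu}(x)\sim c_{\nu}x^{-\nu}$ as $x\to0^+$ for $\nu>0$, square-integrability near the origin requires both orders to be $<1$; for $0<\alpha<1$ the inequalities $|m-1+\alpha|<1$ and $|m+\alpha|<1$ hold simultaneously only for $m=0$, in which case the two orders are $1-\alpha$ and $\alpha$. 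Thus, in each $s$-mode $j$, the $L^2$-solution space of $\cD_{\T_{\ell},\alpha}^{(\max)}f=\pm i f$ is one-dimensional and spanned by the spinor appearing in the statement.

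It remains to assemble the modes and identify the admissible sequences $\underline c=(c_j)_j$. With $f_{\pm i}(\underline c)$ as written, orthogonality of $\{e^{ijs}\}_j$, the $\theta$-independence of $|K_{1-\alpha}(r\langle j\rangle)e^{-i\theta}|^2+\big|\tfrac{\pm1-ij}{\langle j\rangle}K_\alpha(r\langle j\rangle)\big|^2$, and $|\pm1-ij|=\langle j\rangle$ give
\[
	\bnorm{f_{\pm i}(\underline c)}_{L^2}^2=\sum_{j\in\T_{\ell}^*}|c_j|^2\int_0^{\infty}\big(|K_{1-\alpha}(r\langle j\rangle)|^2+|K_{\alpha}(r\langle j\rangle)|^2\big)\,r\,\d r=C_{\alpha}\sum_{j\in\T_{\ell}^*}\langle j\rangle^{-2}|c_j|^2,
\]
where the last step is the substitution $x=r\langle j\rangle$ and $C_{\alpha}:=\int_0^{\infty}\big(|K_{1-\alpha}(x)|^2+|K_{\alpha}(x)|^2\big)x\,\d x$ is finite precisely because $0<\alpha<1$. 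Hence $f_{\pm i}(\underline c)\in L^2$ iff $\sum_j\langle j\rangle^{-2}|c_j|^2<\infty$, i.e.\ $\underline c\in\hat H^{-1}(\T_{\ell})$, and conversely any such series defines an element of the deficiency space, since it solves the equation mode-by-mode on $\T_{\ell}\times(\R^2\setminus\{0\})$, converges in $L^2$, and $\cD_{\T_{\ell},\alpha}^{(\max)}$ is closed. This yields the asserted description of $\ker(\cD_{\T_{\ell},\alpha}^{(\max)}\mp i)$. I expect the main obstacle to be the bookkeeping of the two anticommuting halves and of the spin-shifted angular momenta — writing the correct first-order radial system and verifying via the Bessel recurrences that the $K$--$K$ pair with coefficient $\tfrac{\pm1-ij}{\langle j\rangle}$ genuinely solves it, and that no other channel survives — together with making sure that an $L^2$ solution of the equation on $\T_{\ell}\times(\R^2\setminus\{0\})$ carries no hidden distributional mass at $r=0$, which here is automatic because the equation is only imposed off the curve.
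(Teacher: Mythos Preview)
Your proof is correct and follows essentially the same route as the paper: Fourier transform in $s$, square the operator to reduce to modified Bessel equations, discard $I_\nu$ by integrability at infinity and all but one angular channel by integrability at $0$, then return to the first-order system to fix the ratio $(\pm1-ij)/\langle j\rangle$ and compute the $L^2$-norm to identify $\hat H^{-1}(\T_\ell)$. The only cosmetic difference is that you decompose with respect to the total angular momentum $-i\partial_\theta+\tfrac12\sigma_3$ (which pairs the components from the start), whereas the paper decomposes each spin component separately in $e^{in\theta}$ and then uses the condition $\wt{\bsigma}(-i\wt{\nabla}+\bA)\psi\in L^2$ to eliminate the two spurious parameters $c_{0;+}$ and $c_{-1;-}$; your packaging is slightly cleaner but the content is the same.
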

\noindent The proof of this lemma is given in the Appendix. 
The domain of $\cD_{\T_{\ell},\alpha}^{(\max)}$ can therefore be written as
\begin{multline}\label{eq:dtmax_dom}
	\dom\big(\cD_{\T_{\ell},\alpha}^{(\max)}) = \dom(\cD_{\T_{\ell},\alpha}^{(\min)}) \overset{\perp_{\cG}}{\oplus}
	\Bigg\{\frac{1}{\sqrt{2\pi\ell}}
	\sum_{j \in \T_{\ell}^*}a_je^{ijs}\begin{pmatrix} K_{1-\alpha}(r\langle j\rangle)e^{-i\theta}\\
	-ij\langle j\rangle^{-1} K_{\alpha}(r\langle j\rangle)\end{pmatrix}\\
	+ \frac{b_j}{\langle j\rangle}e^{ijs}\begin{pmatrix} 0\\
	K_{\alpha}(r\langle j\rangle)\end{pmatrix}
	: \underline{a},\underline{b} \in \hat{H}^{-1}(\T_{\ell})
	\Bigg\},
\end{multline}
and all self-adjoint extensions of $\cD_{\T_{\ell},\alpha}^{(\min)}$ are characterized by isometries 
$$
	U: \ker(\cD_{\T_{\ell},\alpha}^{(\max)}-i) \to \ker(\cD_{\T_{\ell},\alpha}^{(\max)}+i),
$$
which in this case correspond to isometries on $\hat{H}^{-1}(\T_{\ell})$.

We are interested in two particular extensions $\cD_{\T_{\ell},\alpha}^{(\pm)}$ with domains
\begin{align*}
	&\dom\big(\cD_{\T_{\ell},\alpha}^{(+)} \big):=\dom\big(\cD_{\T_{\ell},\alpha}^{(\min)} \big)
	\overset{\perp_{\cG}}{\oplus}
	\left\{ \frac{1}{\sqrt{2\pi\ell}} \sum_{j \in \T_{\ell}^*} \lambda_j e^{ijs}
	\begin{pmatrix}K_{1-\alpha}(r\langle j\rangle)e^{-i\theta}\\0\end{pmatrix}: \ul \in\ell_2(\T_{\ell}^*)\right\}\\
	&\dom\big(\cD_{\T_{\ell},\alpha}^{(-)} \big):=\dom\big(\cD_{\T_{\ell},\alpha}^{(\min)} \big)
	\overset{\perp_{\cG}}{\oplus}
	\left\{ \frac{1}{\sqrt{2\pi\ell}} \sum_{j \in \T_{\ell}^*} \lambda_j e^{ijs}
	\begin{pmatrix}0\\K_{\alpha}(r\langle j\rangle)\end{pmatrix}: \ul \in\ell_2(\T_{\ell}^*)\right\},
\end{align*}
which correspond to the isometries
\begin{align*}
	U^{(+)}: f_{i}(\underline{c}) \mapsto f_{-i}\big(\big(\tfrac{1-ij}{1+ij}c_j\big)_j\big),
	\quad U^{(-)}: f_{i}(\underline{c}) \mapsto f_{-i}(-\underline{c}).
\end{align*}
These extensions should be thought of as the situation when the singular part of the 
spinor ``aligns" with or against the magnetic field.
For any of these extensions we can decompose $f \in \dom\big(\cD_{\T_{\ell},\alpha}^{(\pm)}\big)$ 
with respect to the splittings above as 
\begin{align}\label{eq:model_decomp}
	f = f_0+f_{\sing}(\ul)
\end{align}
with $f_0 \in \dom\big(\cD_{\T_{\ell},\alpha}^{(\min)} \big)$.
Furthermore, any $f \in \dom\big(\cD_{\T_{\ell},\alpha}^{(\pm)}\big)$ satisfies
\begin{align}\label{eq:dpm_decouple}
	\int |\cD_{\T_{\ell},\alpha}^{(\pm)}f|^2
	= \int|(\partial_s f)\big|_{\theta\neq 0}|^2 + \int |(\wt{\bsigma}(-i\wt{\nabla}_u + \bA_u)f)\big|_{\theta\neq 0}|^2,
\end{align}
which can be verified with the help of an explicit calculation 
(By Stokes' formula the equality is true for $f=f_0+f_{\sing}(\ul)$, where $f_0\in C^\infty_0(\T_\ell\times (\R^2\setminus\{0\}))$ and $\ul$
is a sequence which $\lambda_j=0$ for $|j|$ big enough, hence it is true for $f\in \dom\big(\cD_{\T_{\ell},\alpha}^{(\pm)}\big)$ 
by density in the graph norm).
Another computation yields
\begin{align*}
	&\cD_{\T_{\ell},\alpha}^{(+)}\Big(\sum_{j\in\T_\ell^*}\lambda_j e^{ijs}
	\begin{pmatrix}K_{1-\alpha}(r\langle j\rangle)e^{-i\theta}\\ 0 \end{pmatrix}\Big)
	=i\sum_{j\in\T_\ell^*} \lambda_j e^{ijs}\begin{pmatrix}jK_{1-\alpha}(r\langle j\rangle)e^{-i\theta}\\ 
	\langle j\rangle K_\alpha(r\langle j\rangle)\end{pmatrix},\\
	&\cD_{\T_{\ell},\alpha}^{(-)}\Big(\sum_{j\in\T_\ell^*}\lambda_j e^{ijs}
	\begin{pmatrix}0\\ K_{\alpha}(r\langle j\rangle) \end{pmatrix}\Big)
	=i\sum_{j\in\T_\ell^*} \lambda_j e^{ijs}\begin{pmatrix}\langle j\rangle K_{1-\alpha}(r\langle j\rangle)e^{-i\theta}\\ 
	-j K_\alpha(r\langle j\rangle) \end{pmatrix},
\end{align*}
so that we have a lower bound on the graph norm $\norm{\,\cdot\,}_{\T_{\ell}}$
\begin{equation}\label{eq:sing_contr}
	\norm{f}_{\T_{\ell}}^2=\norm{f_0}_{\T_{\ell}}^2+\norm{f_{\sing}}_{\T_{\ell}}^2
	\ge (C_\alpha+C_{1-\alpha}) \norm{\ul}_{\ell_2(d\mu_{\T_{\ell}^*})}^2.
\end{equation}
In other words, the graph norm  controls the $\ell_2$-norm of $\ul$. Remark also that we have:
\begin{equation}\label{eq:sing_psi_la}
	\int|f_{\sing}(\ul)|^2=\Big(\int_{0}^{+\infty}|K_\alpha(r)|^2rdr\Big)\sum_{j\in\T_\ell^*} \frac{|\lambda_j|^2}{1+j^2}.
\end{equation}

We now state several results that will come in handy when proving the self-adjointness of the operator 
in the general case.
Given $f, g \in \dom\big(\cD_{\T_{\ell},\alpha}^{(\max)}\big)$, we set
$$
	f =(f_+,\,f_-)^{\rT} , \quad g = (g_+,\,g_-)^{\rT}.
$$
Using Stokes' theorem, it is then easy to show that
\begin{multline}\label{eq:dtmax_ibp}
	\cip{\cD_{\T_{\ell},\alpha}^{(\max)}f}{g}_{L^2} - \cip{f}{\cD_{\T_{\ell},\alpha}^{(\max)}g}_{L^2}\\
	= -i \lim_{\rho \to 0}\int_0^{\ell}\int_{0}^{2\pi}\rho
	\left[e^{i\theta}\overline{f_{-}}g_{+} + e^{-i\theta}\overline{f_{+}}g_{-}\right](s,\rho,\theta)\, \d s\d\theta,
\end{multline}
where we have introduced polar coordinates $(\rho,\theta)$ for $(u_1,u_2) \in \R^2$.
Splitting $f=f_0 + f_{\sing}$ and $g = g_0 + g_{\sing}$ 
according to \eqref{eq:dtmax_dom}, we can replace $f, g$ in \eqref{eq:dtmax_ibp} by
their singular parts by using the defining properties of the maximal operator.
Note that
$$
	f \in \dom\big(\cD_{\T_{\ell},\alpha}^{(\pm)}\big) \quad\textrm{ if and only if }\quad (f_{\sing})_{\pm} \equiv 0,
$$
so that the boundary terms vanish when considering the respective extensions.

\subsubsection{Proof of Theorem~\ref{thm:dirac_sa_knot}}
	To keep the notation short, we denote 
	$$
		\cD_{\min} := \cD_{\bA}^{(\min)}, \quad \cD_{\max} := \big(\cD_{\bA}^{(\min)}\big)^*.
	$$
	The independence of the domain (resp. the operator) on the cutoff function is clear, since difference of two such cutoff functions
	$\chi_{\eps_0,\gamma}$ and $\wt{\chi}_{\eps_0',\gamma}$ will vanish on a small enough neighborhood of the curve.
	
	\medskip
	\noindent\textit{Analysis of $\dom(\cD_{\max})$:} 
	The domain of $\cD_{\max}$ is defined as follows: $\psi \in \dom(\cD_{\max})$ if there exists an $w \in L^2(\S^3)^2$ such that
	for any $\phi \in \dom(\cD_{\min})$ we have
	$$
		\cip{\psi}{\cD_{\min}\phi}_{L^2} = \cip{w}{\phi}_{L^2}.
	$$
	Restricted to $\Omega_S$, we get that $\psi \in \mathcal{A}_S$, see \eqref{eq:def_A_S}. We have the following characterization.
	
	\begin{lemma}\label{lem:charac_D_max}
	   For any pair $\psi\in \dom(\cD_{\max})$ and $u\in \sD(\S^3\setminus\gamma)$,
	   we have: $u\psi\in\dom(\cD_{\bA}^{(\min)})$.
	   
	   Moreover, so we have
	    \[
	     \dom(\cD_{\max}) = \{\psi \in \mathcal{A}_S: 
		\left.\psi\right|_{S_+} = e^{-2\pi i \alpha} \left.\psi\right|_{S_-} \in H_{\loc}^{1/2}( S\setminus \gamma)^2\},
	    \] 
	    and any two elements $\psi,\phi \in \dom(\cD_{\max})$ satisfy
	\begin{align}\label{eq:dmax_stokes}
		\cip{\cD_{\max}\psi}{\phi}_{L^2} - \cip{\psi}{\cD_{\max}\phi}_{L^2}
		= \lim_{\eps \to 0}  \int_{\partial B_{\eps}[\gamma]} \iota_{\d \rho^{\sharp}} \left[\cip{i\bsigma(\d \rho)\psi}{\phi} \vol_{g_3}\right].
	\end{align}
	\end{lemma}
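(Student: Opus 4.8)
The plan is to establish the three assertions in the order: (i) every $\psi\in\dom(\cD_{\max})$ satisfies the transmission condition $\psi|_{S_+}=e^{-2i\pi\alpha}\psi|_{S_-}$ in $H^{-1/2}_{\loc}(S\setminus\gamma)^2$; (ii) such a $\psi$ is locally $H^1$ away from $\gamma$, which yields both $u\psi\in\dom(\cD_{\bA}^{(\min)})$ and the improvement of the traces to $H^{1/2}_{\loc}$, hence the inclusion ``$\subseteq$'' of the domain; and (iii) the reverse inclusion together with \eqref{eq:dmax_stokes}. Throughout I test against the elements $\phi$ of the graph-norm dense pre-closure domain of $\cD_{\bA}^{(\min)}$, i.e. $\phi\in\sD(\Oms)^2$ with $\supp\phi$ away from $\gamma$ and $\phi|_{S_+}=e^{-2i\pi\alpha}\phi|_{S_-}$; note that every $g\in C^\infty_c(S\setminus\gamma)^2$ arises as $\phi|_{S_-}$ for some such $\phi$ (take a collar of $S$ and set $\phi=g$ near the upper side, $e^{-2i\pi\alpha}g$ near the lower side, cut off away from $\gamma$).

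For (i), fix $\psi\in\dom(\cD_{\max})$. Restricting the defining identity $\cip{\psi}{\cD_{\bA}^{(\min)}\phi}_{L^2}=\cip{\cD_{\max}\psi}{\phi}_{L^2}$ to $\phi$ supported away from $S$ shows $\psi|_{\Oms}\in\cA_S$ and $\cD_{\max}\psi=-i\bsigma(\nabla)\psi|_{\Oms}$, so by Proposition~\ref{prop:gen_stokes} the traces $\psi|_{S_\pm}$ exist in $H^{-1/2}_{\loc}(S\setminus\gamma)^2$. Now subtract the generalized Stokes identity \eqref{eq:general_stokes} for the pair $(\psi,\phi)$ (whose $\partial B_\eps[\gamma]$-term is absent once $\eps<\dist(\supp\phi,\gamma)$) from the defining identity: the left-hand sides agree, so the surface term over $S$ must vanish. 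Contracting it against $\d\rho^{\sharp}$ only the conormal component of $\bsigma$ survives, and inserting $\phi|_{S_+}=e^{-2i\pi\alpha}\phi|_{S_-}$ rewrites that term as the $H^{-1/2}$--$H^{1/2}$ pairing of $i\bsigma(\nu)\big(e^{2i\pi\alpha}\psi|_{S_+}-\psi|_{S_-}\big)$ against $\phi|_{S_-}$, where $\nu$ is the unit conormal of $S$. Since $\phi|_{S_-}$ runs through $C^\infty_c(S\setminus\gamma)^2$ and $\bsigma(\nu)$ is pointwise invertible ($\bsigma(\nu)^2=\mathrm{Id}$), this forces the transmission condition in $H^{-1/2}_{\loc}(S\setminus\gamma)^2$.

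For (ii), it suffices to show that each $\bp_0\in\S^3\setminus\gamma$ has a neighborhood on which $\psi\in H^1$ in the sense of $H^1(\Oms)$, that is $H^1$ on each side of $S$ with matching traces: a partition of unity then gives $u\psi\in H^1(\Oms)^2$ for every $u\in\sD(\S^3\setminus\gamma)$, and since $u\psi$ vanishes near $\gamma$, keeps the phase jump, and has trace in $H^{1/2}(S)^2$, Proposition~\ref{prop:egalite_min_dom} identifies $u\psi$ with an element of $\dom(\cD_{\bA}^{(\min)})$. Away from $S$ this is interior elliptic regularity for the first-order elliptic operator $-i\bsigma(\nabla)$, using $\cD_{\max}\psi\in L^2$. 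Near a point of $S\setminus\gamma$ I pick a small ball $B\not\ni\gamma$ split by $S$ into $B_\pm$ and ``fold'' by setting $\tilde\psi:=\psi$ on $B_+$ and $\tilde\psi:=e^{-2i\pi\alpha}\psi$ on $B_-$; as $e^{-2i\pi\alpha}$ is central and commutes with $-i\bsigma(\nabla)$, one has $-i\bsigma(\nabla)\tilde\psi\in L^2(B_\pm)$, while by step (i) $\tilde\psi|_{S_+}=\tilde\psi|_{S_-}$ in $H^{-1/2}_{\loc}$, so (by the computation underlying Proposition~\ref{prop:gen_stokes}) the distribution $-i\bsigma(\nabla)\tilde\psi$ carries no term supported on $S$ and lies in $L^2(B)$. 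Elliptic regularity then gives $\tilde\psi\in H^1_{\loc}(B)$, and unfolding returns $\psi\in H^1(B_\pm)$ with traces in $H^{1/2}_{\loc}$. This proves the first assertion of the lemma, and taking $u\equiv1$ near an arbitrary compact subset of $S\setminus\gamma$ upgrades $\psi|_{S_\pm}$ to $H^{1/2}_{\loc}(S\setminus\gamma)^2$; together with (i) this is the inclusion ``$\subseteq$''.

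For (iii), conversely let $\psi\in\cA_S$ with $\psi|_{S_+}=e^{-2i\pi\alpha}\psi|_{S_-}\in H^{1/2}_{\loc}(S\setminus\gamma)^2$. In \eqref{eq:general_stokes} for $(\psi,\phi)$ the $S$-term now cancels because \emph{both} spinors carry the same jump (the scalar factors combining to $|e^{-2i\pi\alpha}|^2=1$), so $\cip{-i\bsigma(\nabla)\psi}{\phi}_{L^2}=\cip{\psi}{-i\bsigma(\nabla)\phi}_{L^2}$ for all admissible $\phi$, hence, by graph-norm density, for all $\phi\in\dom(\cD_{\bA}^{(\min)})$; thus $\psi\in\dom(\cD_{\max})$ with $\cD_{\max}\psi=-i\bsigma(\nabla)\psi|_{\Oms}$. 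Finally, for $\psi,\phi\in\dom(\cD_{\max})$ the regularity from (ii) (with $u\equiv1$ near $\overline{V_{S,\eps}}$) gives $\psi,\phi\in H^1(V_{S,\eps})^2$, so Proposition~\ref{prop:gen_stokes} applies on $V_{S,\eps}$; its $S$-term vanishes as before, and letting $\eps\to0$ --- the left-hand side tending to $\cip{\cD_{\max}\psi}{\phi}_{L^2}-\cip{\psi}{\cD_{\max}\phi}_{L^2}$ since $V_{S,\eps}\nearrow\Oms$ --- leaves precisely the $\partial B_\eps[\gamma]$-integral of \eqref{eq:dmax_stokes}. The main obstacle is step (ii): because $\psi|_{S_\pm}$ are a priori only $H^{-1/2}_{\loc}$, the fact that the folded spinor really has $-i\bsigma(\nabla)\tilde\psi\in L^2$ with no surface-supported piece --- the input to elliptic regularity --- hinges on having first extracted the transmission condition in step (i) and on the scalar phase commuting with Clifford multiplication.
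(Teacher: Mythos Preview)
Your proposal is correct and follows essentially the same route as the paper: you extract the transmission condition by testing the generalized Stokes formula against minimal-domain elements, then use the same ``folding'' trick (multiplying one side by the constant phase so the jump disappears) together with interior elliptic regularity for the free Dirac operator to obtain local $H^1$-regularity, and finally read off \eqref{eq:dmax_stokes} from \eqref{eq:general_stokes} on $V_{S,\eps}$. The only cosmetic difference is that the paper applies the fold directly to $u\psi$ with $u\in\sD(\S^3\setminus\gamma)$, whereas you first localize to a small ball and then patch via a partition of unity; both yield the same conclusion.
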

	\begin{proof}
	We use Proposition~\ref{prop:gen_stokes}; testing $\psi \in \dom(\cD_{\max})$ against 
	any $\phi \in \dom(\cD_{\min})$ with $\supp \phi \cap \gamma = \emptyset$. For $\eps>0$ small enough
	in \eqref{eq:general_stokes}, the boundary term over
	$\partial B_{\eps}[\gamma]$ vanishes. To make the remaining boundary terms vanish (for any such $\phi\in \dom(\cD_{\min})$), 
	$\psi$ needs to have the same phase jump across the surface $S$ as the elements in $\dom(\cD_{\min})$.
	Pick now $u\in \sD(\S^3\setminus\gamma)$ localized around the interior of $S$ (that is such that the projection onto $S$
	is defined on $\supp u$ with values in the interior of $S$). The surface $S$ then splits $\supp u$ into two, the part $\omega_+$
	above $S$ and $\omega_-$ below $S$. Define:
	\[
	 \wt{\psi}:= e^{2i\pi\alpha}u\psi\mathds{1}_{\omega_+}+u\psi\mathds{1}_{\omega_-}.
	\]
	By testing against smooth functions in $\sD(\S^3)^2$, and using
	Stokes' formula, we obtain $\bsigma(-i\nabla)\wt{\psi}\in L^2(\S^3)^2$, hence $\wt{\psi}\in H^1(\S^3)^2$.
	For almost all $\bp\in \supp\, u$, we have $|\nabla \wt{\psi}(\bp)|=|\nabla (u\psi)|$, which shows that
	$u\psi\in \dom(\cD_{\min})$ and $\psi|_{S_{\pm}}\in H^{1/2}_{\loc}(S\setminus\gamma)^2$. 
	The extension to all $u \in \sD(\S^3\setminus\gamma)$ is obvious.
	
	Reciprocally let $\psi \in \mathcal{A}_S$ with traces $\psi|_{S_{\pm}}\in H_{\loc}^{1/2}( S\setminus \gamma)^2$
	satisfying the phase jump condition. Let $\phi\in\dom(\cD_{\min})$
	with $\supp\,\phi\cap\gamma=\emptyset$. By Proposition~\ref{prop:gen_stokes}, we have:
	\[
	 \cip{\bsigma(-i\nabla)\psi|_{\Omega_S}}{\phi}_{L^2(\S^3)^2}-\cip{\psi}{\cD_{\min}\phi}_{L^2(\S^3)^2}=0.
	\]
	By density in the graph norm, this holds for all $\phi\in\dom(\cD_{\min})$ and $\psi\in\dom(\cD_{\max})$.
	
	At last, \eqref{eq:dmax_stokes} follows from \eqref{eq:general_stokes}: as $\eps\to 0$, the left-hand side
	of \eqref{eq:general_stokes} converges to that of \eqref{eq:dmax_stokes} by dominated convergence, hence
	the right-hand side also converges giving \eqref{eq:dmax_stokes}.
	\end{proof}

	We conclude the analysis of the maximal operator with a technical lemma that will be useful several times later on.
	\begin{lemma}\label{lem:reg_rho_dmax}
		Given $\psi \in \dom(\cD_{\max})$ with $\supp \psi \subset B_{\delta}[\gamma]$, 
		then we have $\rho\psi\in\dom(\cD_{\min})$ and $\rho \nabla_X\big|_{\Oms}\psi \in L^2(\Omega_S)^2$ for any smooth vectorfield $X$.
	\end{lemma}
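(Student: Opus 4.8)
The plan is to exploit the model case from Section~\ref{ssec:T_straight_line} together with the characterization of $\dom(\cD_{\max})$ in Lemma~\ref{lem:charac_D_max}. First I would observe that multiplication by the smooth-away-from-$\gamma$ function $\rho$ preserves the phase jump condition across $S$, so by Lemma~\ref{lem:charac_D_max} it suffices to show $\bsigma(-i\nabla)(\rho\psi)\big|_{\Oms}\in L^2(\Omega_S)^2$ together with $\rho\psi\in H^1(\Omega_S)^2$, so that $\rho\psi$ lands in $H^1_{\bA}(\S^3)^2=\dom(\cD_{\bA}^{(\min)})$ by Proposition~\ref{prop:egalite_min_dom}. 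Formally, using $[-i\bsigma(\nabla),\rho]=-i\bsigma(\d\rho)=-i\bsigma((\d\rho)^{\flat})$ applied to $\psi$ (this is a bounded multiplication operator, with $|\d\rho|=1$), one gets $\bsigma(-i\nabla)(\rho\psi)=\rho\,\bsigma(-i\nabla)\psi - i\bsigma(\d\rho)\psi$. Both terms are manifestly in $L^2$: the first because $\rho$ is bounded on $\supp\psi\subset B_\delta[\gamma]$ and $\psi\in\dom(\cD_{\max})$ so $\bsigma(-i\nabla)\psi\in L^2(\Omega_S)^2$, the second because $\bsigma(\d\rho)$ is a pointwise partial isometry. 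The only subtlety is justifying this commutator identity in the distributional sense on $\Omega_S$, which is routine since $\rho$ is smooth on $\S^3\setminus\gamma\supset\Omega_S$.

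The genuinely new content is the claim $\rho\nabla_X\big|_{\Oms}\psi\in L^2(\Omega_S)^2$ for every smooth vector field $X$, and for this I would work in the coordinates $(s,\rho,\theta)$ of Section~\ref{sec:coord} on $B_\delta[\gamma]\setminus\gamma$, reduce to the flat model by the comparison that underlies the whole proof of Theorem~\ref{thm:dirac_sa_knot}, and then invoke the explicit description of $\dom(\cD_{\T_\ell,\alpha}^{(\max)})$ in \eqref{eq:dtmax_dom}. Concretely, decompose $\psi=\psi_0+\psi_{\sing}(\underline a)+\psi_{\sing}'(\underline b)$ with $\psi_0\in\dom(\cD_{\T_\ell,\alpha}^{(\min)})$ (for which $\nabla\psi_0\in L^2$ already, hence so is $\rho\nabla\psi_0$) and the singular parts built from $K_{1-\alpha}(r\langle j\rangle)e^{-i\theta}$ and $K_\alpha(r\langle j\rangle)$. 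The Bessel asymptotics $K_\nu(r)\sim c_\nu r^{-\nu}$ as $r\to 0^+$ (for $0<\nu<1$) show that the gradient of each singular mode behaves like $r^{-\nu-1}$ near $r=0$, which is not locally $L^2$ in $\R^2$ with measure $r\,dr$ when $\nu\ge 0$; multiplying by $\rho\sim r$ lowers this to $r^{-\nu}$, and $\int_0^1 r^{-2\nu}r\,dr<\infty$ exactly because $\nu<1$. Summing over $j\in\T_\ell^*$ converges because $\underline a,\underline b\in\hat H^{-1}(\T_\ell)=\ell_2(\langle j\rangle^{-2})$ and the $\rho$-weight also tames the $\langle j\rangle$-growth coming from differentiating $e^{ijs}$ and from the argument $r\langle j\rangle$; here one uses $\rho\langle j\rangle\,K_\nu(\rho\langle j\rangle)$ being uniformly bounded and $\rho^{1-\nu}$-controlled, together with a rescaling $r\mapsto r\langle j\rangle$ in the radial integral to extract the correct power of $\langle j\rangle^{-1}$ matching the $\hat H^{-1}$ summability.

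The main obstacle is therefore bookkeeping: transferring the flat-model estimate to the curved neighborhood $B_\delta[\gamma]$ requires controlling the error terms in \eqref{eq:partiald_coord} and \eqref{eq:pullb_volform}, in particular the factor $h(\bp)=\cos\rho-\sin\rho(\kappa_g\cos\theta+\kappa_n\sin\theta)$, which is bounded away from $0$ precisely by the choice of $\delta$ recorded in the footnote before Theorem~\ref{thm:dirac_sa_knot}; once $h$ and $\sin(\rho)/\rho$ are two-sided bounded on $\supp\psi$, the curved radial integrals are comparable to the flat ones and the estimate goes through. I would also need the fact, already used implicitly in the proof of Theorem~\ref{thm:dirac_sa_knot}, that $\psi\in\dom(\cD_{\max})$ with support in $B_\delta[\gamma]$ admits such a mode decomposition with $\hat H^{-1}$ coefficients; this follows from Lemma~\ref{lem:charac_D_max} and \eqref{eq:dtmax_dom} after the model comparison. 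Everything else — the $L^2$ claim for $\bsigma(-i\nabla)(\rho\psi)$ and the conclusion $\rho\psi\in\dom(\cD_{\min})$ — is then immediate.
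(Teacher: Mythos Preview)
Your proposal has a genuine circularity gap. To prove $\rho\nabla_X\psi\in L^2$ you invoke the model-case decomposition \eqref{eq:dtmax_dom} of $\psi$, but transferring a general $\psi\in\dom(\cD_{\max})$ on $\S^3$ to an element of $\dom(\cD_{\T_\ell,\alpha}^{(\max)})$ requires controlling the error operators $\cE_1$ in \eqref{eq:D_gamma_in_T}. Those are first-order differential operators whose coefficients vanish like $\rho$, and showing $(\cE_1 f)|_{\theta\neq 0}\in L^2$ (see \eqref{eq:E1_comp_1}--\eqref{eq:E1_comp_2}) is exactly the statement $\rho\nabla_X\psi\in L^2$ for $\psi\in\dom(\cD_{\max})$, i.e.\ the lemma you are trying to prove. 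In the paper the logical order is the reverse: Lemma~\ref{lem:reg_rho_dmax} is established first, and only then is the comparison with the model case carried out.

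The paper's own proof avoids the model case entirely. It writes $\rho\psi=\lim_{\eps\to 0}\rho\,\eta(\rho/\eps)\psi$ where $\eta=1-\chi$ vanishes near $0$; each approximant lies in $\dom(\cD_{\min})$ by Lemma~\ref{lem:charac_D_max} (since $\rho\,\eta(\rho/\eps)\in\sD(\S^3\setminus\gamma)$), and dominated convergence gives convergence in the graph norm because $d(\rho\,\eta(\rho/\eps))=\eta(\rho/\eps)\,d\rho+(\rho/\eps)\eta'(\rho/\eps)\,d\rho$ is uniformly bounded and converges pointwise to $d\rho$. Closedness of $\cD_{\min}$ then yields $\rho\psi\in\dom(\cD_{\min})$. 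The Lichnerowicz identity \eqref{eq:dirac_gradient} immediately gives $\rho\psi\in H^1(\Omega_S)^2$, and the Leibniz rule $\rho\nabla_X\psi=\nabla_X(\rho\psi)-X(\rho)\psi$ finishes the second claim. Your commutator computation for $\bsigma(-i\nabla)(\rho\psi)$ is correct but by itself only places $\rho\psi$ in $\dom(\cD_{\max})$; the missing step is precisely the $H^1$ regularity, and the clean way to obtain it is this closure-plus-Lichnerowicz argument rather than the Bessel bookkeeping.
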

	\begin{proof}
		Let $\chi\in\sD(\R,[0,1])$ with $\supp\,\chi\in [-2^{-1},2^{-1}]$ and $\eta=1-\chi$.
		By dominated convergence we can write
		$$
			\rho\psi = \lim_{\eps \to 0}\rho \eta(\rho/\eps)\psi,
		$$
		where the limit is understood in the graph norm, and by the previous lemmma
		$\rho \eta(\rho/\eps)\psi \in \dom(\cD_{\min})$.
		By \eqref{eq:dirac_gradient} we know that $\rho\psi \in H^1(\Omega_S)^2$. Furthermore, for any 
		vector field $X$ we have on $\Omega_S$ that
		$$
			\rho\nabla_X\psi = \nabla_X(\rho\psi) - X(\rho)\psi.
		$$
		Since $X(\rho)$ is bounded, the claim follows.
	\end{proof}
	
	\medskip
	\noindent\textit{Symmetry:}
	Let $\psi,\phi \in \dom\big(\cD_{\bA}^{(\pm)}\big)$. Let us first assume that both
	$\xi_{\mp}^*(\chi_{\eps_0,\gamma}\psi)\xi_{\mp}$ and $\xi_{\mp}^*(\chi_{\eps_0,\gamma}\phi)\xi_{\mp}$ have support away from $\gamma$.
	From \eqref{eq:rel_phase} and \eqref{eq:drho} it follows that
	\begin{equation}\label{eq:sigma_drho}
		\begin{pmatrix}\cip{\xi_+}{\bsigma(\d \rho)\xi_+} & \cip{\xi_+}{\bsigma(\d \rho)\xi_-} \\
		\cip{\xi_-}{\bsigma(\d \rho)\xi_+} & \cip{\xi_-}{\bsigma(\d \rho)\xi_-}\end{pmatrix}
		= \begin{pmatrix} 0 & e^{-i\theta} \\e^{i\theta} & 0\end{pmatrix},
	\end{equation}
	and the boundary terms in \eqref{eq:dmax_stokes} vanish for $\eps$ small enough. Since the left-hand side
	of \eqref{eq:dmax_stokes} is continuous in the graph norm, the general case follows by approximation
	of a sequence of elements who's spin down (resp. spin up) component vanishes on $\gamma$.
	
	\medskip
	\noindent\textit{Self-adjointness:}
	For the proof of the self-adjointess we need an alternative characterization of $\dom(\cD_{\max})$ in terms
	of the coordinates $(s,\rho,\theta)$ and the model case discussed in Section~\ref{ssec:T_straight_line}.
	
	We pick any $\psi \in \dom(\cD_{\max})$. After localizing with $\chi_{\delta,\gamma}$,
	we can assume that $\supp \psi \subset B_{\delta}[\gamma]$.
	We decompose $\psi$ as 
	\begin{align}\label{def:f_pm}
		\psi(\bp) &= f(s(\bp),\rho(\bp),\theta(\bp)) \cdot \xi(\bp)\nn\\
		&= f_{+}(s(\bp),\rho(\bp),\theta(\bp))\xi_{+}(\bp) + f_{-}(s(\bp),\rho(\bp),\theta(\bp))\xi_{-}(\bp)
	\end{align}
	and regard $f$ from now on as a function of $(s,\rho,\theta)$. 
	A computation shows that on $B_{2\delta}[\gamma] \setminus S^c$
	we have
	\begin{align}\label{eq:D_max_close_knot}
		\begin{pmatrix}\cip{\xi_+}{\cD_{\max}\psi} \\ \cip{\xi_-}{\cD_{\max}\psi}\end{pmatrix}(\bp)
		= \big(\wtd + \mathcal{E}_1 + \mathcal{E}_0 \big) f(s(\bp),\rho(\bp),\theta(\bp))
	\end{align}
	with
	\begin{equation}\label{eq:D_gamma_in_T}
		\left\{
		\begin{array}{rcl}
			\wtd&=& -i\begin{pmatrix} \partial_s & e^{-i\theta}(\partial_{\rho} - \frac{i}{\rho}\partial_{\theta})\\
			e^{i\theta}(\partial_{\rho} + \frac{i}{\rho}\partial_{\theta}) & -\partial_s\end{pmatrix},\\
			\cE_1 &=& -i \begin{pmatrix} (h^{-1}-1)\partial_s & \frac{-ie^{-i\theta}(\rho-\sin\rho)}{\rho \sin\rho}\partial_{\theta}\\
			\frac{ie^{i\theta}(\rho-\sin\rho)}{\rho \sin\rho}\partial_{\theta} & -(h^{-1}-1)\partial_s\end{pmatrix}
		 	+ i\sigma_3 \tfrac{\tau_r}{h} \partial_\theta,\\
			\cE_0 &=& -i(\sigma_3 M_{\xi}(\bT) + \sigma_1M_{\xi}(\bS) + \sigma_2M_{\xi}(\bN)).
		\end{array}\right.
	\end{equation}
	Here, all partial derivatives are acting on the set $O_{S}$, which is defined as the image
	of $B_{2\delta}[\gamma] \cap \Omega_S$ under the $(s,\rho,\theta)$-coordinate map.
	
	The general strategy is now as follows. We will show that $\cE_0f$ and $(\cE_1f)\big|_{\theta\neq 0}$ are square integrable,
	hence $\psi$ is in $\dom(\cD_{\max})$ if and only if $(\wtd f)\big|_{\theta\neq 0}$ is square integrable, that is:
	$f \in \dom(\cD_{\T_{\ell},\alpha}^{(\max)})$.
	\begin{rem}
	Note that on $\T \times B_{\delta}[0;\R^2]$,
	being square integrable with respect to the volume form \eqref{eq:pullb_volform} is \textit{equivalent} to being
	square integrable with respect to the standard volume form $\rho\,\d s\wedge \d\rho \wedge \d\theta$.
	\end{rem}

	\begin{lemma}
		Given $\psi \in \dom(\cD_{\max})$ with $\supp \psi \subset B_{\delta}[\gamma]$, 
		$\cE_0f$ and $(\cE_1f)_{|_{\theta\neq 0}}$ are square integrable.
	\end{lemma}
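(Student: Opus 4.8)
\emph{Plan.} I would handle the two error terms separately, reading off from \eqref{eq:D_gamma_in_T} which derivatives of $f=(f_+,f_-)$ (from the decomposition $\psi=f_+\xi_++f_-\xi_-$ of \eqref{def:f_pm}) enter each entry and against which coefficients. The term $\cE_0 f$ will be controlled by boundedness of the relevant connection coefficients; the term $\cE_1 f$ will need the $\rho$-weighted regularity of $f$ furnished by Lemma~\ref{lem:reg_rho_dmax}.

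\emph{The term $\cE_0 f$.} By \eqref{eq:D_gamma_in_T}, $\cE_0$ is multiplication by $-i(\sigma_3 M_\xi(\bT)+\sigma_1 M_\xi(\bS)+\sigma_2 M_\xi(\bN))$, whose entries $\cip{\xi_a}{\nabla_X\xi_b}$ are continuous on $B_\eps[\gamma]$ because $\xi_\pm$ and the Seifert frame are smooth there (being parallel transports along the geodesics \eqref{eq:geodesic_s3}), hence bounded on $\overline{B_\delta[\gamma]}$. Since $|f_\pm|=|\cip{\xi_\pm}{\psi}|\le|\psi|$ and $\psi\in L^2$, we have $f\in L^2(O_S)^2$, and therefore $\cE_0 f\in L^2(O_S)^2$.

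\emph{The term $\cE_1 f$.} As $\supp\psi\subset B_\delta[\gamma]$, Lemma~\ref{lem:reg_rho_dmax} gives $\rho\,\nabla_X\psi\in L^2(\Omega_S)^2$ for every smooth vector field $X$. I would pass to the Cartesian-type coordinates $u_1=\rho\cos\theta$, $u_2=\rho\sin\theta$ on the normal fibres, so that $(s,u_1,u_2)$ is a genuine smooth coordinate system near $\gamma$ and $\exp_*(\partial_s),\exp_*(\partial_{u_1}),\exp_*(\partial_{u_2})$ are smooth vector fields there. Expanding $\nabla_X\psi=(Xf_+)\xi_++(Xf_-)\xi_-+f_+\nabla_X\xi_++f_-\nabla_X\xi_-$, with the last two terms in $L^2$ since $\nabla_X\xi_\pm$ is bounded on $\overline{B_\delta[\gamma]}$ and $f_\pm\in L^2$, I would read off
$$
	\rho\,\partial_s f_\pm\in L^2(O_S),\qquad \rho\,\partial_{u_i}f_\pm\in L^2(O_S)\quad(i=1,2).
$$
Since $\partial_\theta=-u_2\partial_{u_1}+u_1\partial_{u_2}$ and $|u_i|\le\rho$, this also gives $\partial_\theta f_\pm\in L^2(O_S)$.

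\emph{Coefficient bounds and the main obstacle.} Finally I would check that, on the region $\{0<\rho\le\delta\}$ where $f$ is supported, all coefficients in \eqref{eq:D_gamma_in_T} are bounded: $h^{-1}-1$ is bounded there (the choice of $\delta$ gives $h\ge1-\eps>0$) and is $O(\rho)$ near $\gamma$; $\frac{\rho-\sin\rho}{\rho\sin\rho}$ is bounded (it is $O(\rho)$ near $\gamma$); and $\tau_r/h$ is bounded since $\tau_r$ is smooth. Hence each entry of $\cE_1 f$ is a bounded function times $\rho\,\partial_s f_\pm$ or times $\partial_\theta f_\pm$, both in $L^2(O_S)$, so $(\cE_1 f)|_{\theta\ne0}\in L^2(O_S)^2$. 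The only step that is not bookkeeping is the implication $\rho\,\partial_{u_i}f_\pm\in L^2\Rightarrow\partial_\theta f_\pm\in L^2$: Lemma~\ref{lem:reg_rho_dmax} supplies only $\rho$-weighted gradient bounds, and the point is that expressing the angular derivative through the \emph{smooth} Cartesian fields $\partial_{u_1},\partial_{u_2}$, whose coefficients $u_i$ vanish like $\rho$ at $\gamma$, precisely absorbs that weight.
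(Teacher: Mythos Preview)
Your argument is correct and rests on the same essential ingredient as the paper, namely Lemma~\ref{lem:reg_rho_dmax}, but the bookkeeping is organized differently. The paper translates each piece of $\cE_1$ back to a geometric operator acting on $\psi$ itself: after adding a harmless order-zero correction, the combination $(h^{-1}-1)\partial_s - \tfrac{\tau_r}{h}\partial_\theta$ is rewritten as $(1-h)\nabla_{\bT} - \tau_r\sin\rho\,\nabla_{\bG}$ applied to $\psi$, and similarly $\tfrac{\rho-\sin\rho}{\rho\sin\rho}\partial_\theta$ becomes $\tfrac{\rho-\sin\rho}{\rho}\nabla_{\bG}$; since $1-h$, $\sin\rho$, and $\tfrac{\rho-\sin\rho}{\rho}$ are all $\mathcal O(\rho)$, Lemma~\ref{lem:reg_rho_dmax} is invoked directly on the $\S^3$ side (see \eqref{eq:E1_comp_1}--\eqref{eq:E1_comp_2}). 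You instead push the regularity from $\psi$ to $f$ in the Cartesian normal coordinates $(s,u_1,u_2)$, extract $\rho\,\partial_s f_\pm$ and $\partial_\theta f_\pm$ in $L^2$ (the latter via $\partial_\theta=-u_2\partial_{u_1}+u_1\partial_{u_2}$, which absorbs the $\rho$-weight coming from Lemma~\ref{lem:reg_rho_dmax}), and then bound $\cE_1 f$ entry by entry. The two routes are equivalent; yours is slightly more hands-on and makes the absorption of the $\rho$-weight by the angular derivative very explicit, while the paper's version stays closer to the intrinsic geometry and avoids introducing the auxiliary Cartesian chart.
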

	\begin{proof}
		The statement for $\cE_0$ is evidently true, since the operator does not contain any differential operators
		and the involved matrices are bounded, hence the expression is square integrable.
		
		For $\cE_1$ we proceed as follows.
		\begin{align*}
			\cE_1
			= -i\sigma_3\left((h^{-1}-1)\partial_s - \tfrac{\tau_r}{h}\partial_{\theta}\right)
			-i(-\sin\theta \sigma_1 + \cos\theta \sigma_2)\frac{\rho-\sin\rho}{\rho\sin\rho}\partial_{\theta}.
		\end{align*}
		By introducing an additional order-zero term (which is again square integrable) we obtain the expression
		\begin{align}\label{eq:E1_comp_1}
			&\big(-i\sigma_3\big[(h^{-1}-1)\partial_s - \tfrac{\tau_r}{h}\partial_{\theta}\nn\\
			&\quad +\left((1-h)M_{\xi}(\bT)-\tau_r\sin\rho M_{\xi}(\bG)\right)\big]f_{|_{\theta\neq 0}}\big)(s(\bp),\rho(\bp),\theta(\bp))\nn\\
			&\quad = \begin{pmatrix}\cip{\xi_+}{-i\sigma(\bT^{\flat})[(1-h)\nabla_{\bT} - \tau_r\sin\rho\nabla_{\bG}]\psi_{|_{\Oms}}}\\
			\cip{\xi_-}{-i\sigma(\bT^{\flat})[(1-h)\nabla_{\bT} - \tau_r\sin\rho\nabla_{\bG}]\psi_{\Oms}}
			\end{pmatrix}(\bp),
		\end{align}
		which is square integrable by Lemma~\ref{lem:reg_rho_dmax}, since $1-h(\bp) = \mathcal{O}(\rho)$.
		We recall that $\bG\in\Gamma(\rT\S^3)$ is the vector field defined by the requirement that
		$(\bT,(\d\rho)^{\sharp},\bG)$ is orthonormal \eqref{def:bg}, and that the vectorfield 
		$(\tfrac{1}{h_n}-1)\partial_{s_n}-\tfrac{\tau_n}{h_n}\partial_{\theta_n}$ is the pushforward of 
		$(1-h_n)\bT^{(n)}-\tau_n\sin(\rho_n)\bG^{(n)}$ through the coordinate map $\bp\mapsto (s,\theta,\rho)(\bp)$.
		Similarly $\tfrac{\rho_n-\sin(\rho_n)}{\rho_n\sin(\rho_n)}\partial_{\theta_n}$ 
		corresponds to $\tfrac{\rho_n-\sin(\rho_n)}{\rho_n}\bG^{(n)}$ and we have:
		\begin{multline}\label{eq:E1_comp_2}
			\big(-i(-\sin\theta \sigma_1 + \cos\theta \sigma_2)\Big[\frac{\rho-\sin\rho}{\rho\sin\rho}\partial_{\theta}
			+\tfrac{\rho-\sin\rho}{\rho}M_{\xi}(\bG)\Big]f\big)_{|_{\theta\neq 0}}(s(\bp),\rho(\bp),\theta(\bp))\\
			= \begin{pmatrix}\cip{\xi_+}{-i\sigma(\bG^{\flat})\tfrac{\rho-\sin\rho}{\rho}\nabla_{\bG}\psi_{|_{\Oms}}}\\
			\cip{\xi_-}{-i\sigma(\bG^{\flat})\tfrac{\rho-\sin\rho}{\rho}\nabla_{\bG}\psi_{|_{\Oms}}}
			\end{pmatrix}(\bp),
		\end{multline}
		which is again square integrable by Lemma~\ref{lem:reg_rho_dmax}.
	\end{proof}
	
	Through a local gauge transformation we may assume that
	$$
		S \cap B_{\delta}[\gamma] = \{\bp \in B_{\delta}[\gamma]: \theta(\bp)=0\}.
	$$
	The action of $\wtd$ on $f$ in \eqref{eq:D_gamma_in_T} 
	is then exactly the one of the model operator
	$\cD_{\T_{\ell}, \alpha}^{(\max)}$ in the singular gauge (recall in particular the 
	explanation in Remark~\ref{rem:dmod_singg}), thus
	$f \in \dom(\cD_{\T_{\ell}, \alpha}^{(\max)})$ as a function of $(s,\rho,\theta)$.
	
	Using \eqref{eq:sigma_drho} in \eqref{eq:dmax_stokes} and
	decomposing $\phi$ as $\phi = g_{+}\xi_{+} + g_{-}\xi_{-}$
	one readily obtains
	\begin{align*}
		&\cip{\cD_{\max}\psi}{\phi}_{L^2} - \cip{\psi}{\cD_{\max}\phi}_{L^2}\\
		&\quad = -i\lim_{\delta \to 0} \int_{\partial B_{\delta}[\gamma]} 
		\left[e^{i\theta}\overline{f_{-}} g_{+} + e^{-i\theta}\overline{f_{+}} g_{-}\right] \bG^{\flat} \wedge \bT^{\flat}\\
		&\quad = -i\lim_{\delta \to 0} \int_{s=0}^{\ell} \int_{\theta=0}^{2\pi} h(s,\delta,\theta)\sin(\delta) \d s \d\theta
		\left[e^{i\theta}\overline{f_{-}}g_{+} + e^{-i\theta}\overline{f_{+}}g_{-}\right],
	\end{align*}
	where we used that $\d\rho \wedge \bG^{\flat} \wedge \bT^{\flat} = \vol_{g_3} 
	= h(s,\rho,\theta)\sin(\rho)\,\d s\wedge \d\rho \wedge \d\theta$.
	
	\begin{lemma}\label{lem:sa_knot_tech}
		We have
		\begin{multline*}
			-i\lim_{\delta \to 0} \int_{0}^{\ell} \int_{0}^{2\pi} h(s,\delta,\theta)\sin(\delta)
			\left[e^{i\theta}\overline{f_{-}}g_{+} + e^{-i\theta}\overline{f_{+}}g_{-}\right]\, \d\theta\d s\\
			=-i\lim_{\delta \to 0} \int_{0}^{\ell} \int_{0}^{2\pi}
			\delta \left[e^{i\theta}\overline{f_{-}}g_{+} + e^{-i\theta}\overline{f_{+}}g_{-}\right]\,\d\theta\d s.
		\end{multline*}
	\end{lemma}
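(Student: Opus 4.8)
The plan is to Taylor-expand the weight $h(s,\delta,\theta)\sin(\delta)$ in $\delta$ and show that its discrepancy with $\delta$ contributes nothing in the limit. Writing $\kappa(s,\theta):=\kappa_g(s)\cos\theta+\kappa_n(s)\sin\theta$, the definition of $h$ gives the exact identity
$$
	h(s,\delta,\theta)\sin\delta-\delta=\Big(\tfrac12\sin(2\delta)-\delta\Big)-\sin^2(\delta)\,\kappa(s,\theta),
$$
in which the first summand does not depend on $(s,\theta)$ and is $\mathcal{O}(\delta^3)$, while $\sin^2\delta=\mathcal{O}(\delta^2)$. Abbreviating $G_\delta(s,\theta):=\big(e^{i\theta}\overline{f_-}g_++e^{-i\theta}\overline{f_+}g_-\big)(s,\delta,\theta)$, it therefore suffices to prove that $(\tfrac12\sin 2\delta-\delta)\int_0^\ell\int_0^{2\pi}G_\delta\,\d\theta\,\d s\to0$ and $\sin^2\delta\int_0^\ell\int_0^{2\pi}\kappa(s,\theta)G_\delta\,\d\theta\,\d s\to0$ as $\delta\to0$.

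For both estimates I would first split the boundary data. Since $f$ and $g$, viewed as functions of $(s,\rho,\theta)$, have already been shown to lie in $\dom(\cD_{\T_\ell,\alpha}^{(\max)})$, write $f=f_0+f_{\sing}$ and $g=g_0+g_{\sing}$ as in \eqref{eq:dtmax_dom}, with $f_0,g_0\in\dom(\cD_{\T_\ell,\alpha}^{(\min)})$ and all singular parts parametrized by sequences in $\hat{H}^{-1}(\T_\ell)$. Two quantitative facts are needed. First, the flat analogue of \eqref{eq:dirac_gradient} (with no scalar-curvature term, since $\T_\ell\times\R^2$ is flat and $\d\bA=0$ off the curve) gives $\int|(\wt{\nabla}f_0)_{|\theta\neq0}|^2=\int|\cD_{\T_\ell,\alpha}^{(\min)}f_0|^2<\infty$; combined with a one-dimensional Cauchy--Schwarz estimate in the radial variable (writing $f_0(s,\delta,\theta)=f_0(s,\delta_0,\theta)-\int_\delta^{\delta_0}\partial_\rho f_0\,\d\rho$ with $\delta_0$ the fixed localization radius, and using $\int_\delta^{\delta_0}\rho^{-1}\d\rho=\log(\delta_0/\delta)$) this yields $\int_0^\ell\int_0^{2\pi}|f_0(s,\delta,\theta)|^2\,\d\theta\,\d s\le C(1+\log(1/\delta))$, and similarly for $g_0$. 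Second, the standard bounds $K_\nu(x)\le C_\nu x^{-\nu}e^{-x/4}$ for $x>0$, $\nu\in(0,1)$, together with $\sup_{x>0}x^ae^{-cx}=(a/ec)^a$ and the $\hat{H}^{-1}$-summability of the coefficients, give $\int_0^\ell\int_0^{2\pi}|(f_{\sing})_\pm(s,\delta,\theta)|^2\,\d\theta\,\d s\le C\delta^{-2}$ (the powers $\delta^{-2\alpha}$ and $\delta^{-2(1-\alpha)}$ coming from $K_\alpha$ and $K_{1-\alpha}$ are exactly compensated by the weights $\langle j\rangle^{-2}$ supplied by $\hat{H}^{-1}$), and similarly for $g$. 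Expanding $G_\delta$ into its $f_0g_0$, $f_0g_{\sing}$, $f_{\sing}g_0$ and $f_{\sing}g_{\sing}$ pieces and applying Cauchy--Schwarz on $\T_\ell\times\mathbb{S}^1$, these bounds give $\big|\int_0^\ell\int_0^{2\pi}G_\delta\,\d\theta\,\d s\big|\le C\delta^{-2}$, so the first term above is $\mathcal{O}(\delta^3)\cdot\mathcal{O}(\delta^{-2})=\mathcal{O}(\delta)\to0$.

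For the second term the point is an angular cancellation. By the explicit form of the deficiency-space elements in \eqref{eq:dtmax_dom}, the spin-down singular component $(f_{\sing})_-$ is independent of $\theta$ while the spin-up singular component $(f_{\sing})_+$ carries exactly the factor $e^{-i\theta}$ (and likewise for $g$); hence the doubly-singular contribution to $G_\delta$, namely $e^{i\theta}\overline{(f_{\sing})_-}(g_{\sing})_++e^{-i\theta}\overline{(f_{\sing})_+}(g_{\sing})_-$, is a function of $s$ alone. Since $\int_0^{2\pi}\kappa(s,\theta)\,\d\theta=0$, this doubly-singular part drops out of $\int_0^\ell\int_0^{2\pi}\kappa(s,\theta)G_\delta\,\d\theta\,\d s$, and the remaining regular--regular and regular--singular contributions are bounded in $L^1(\d\theta\,\d s)$ by $C\delta^{-1}(1+\log(1/\delta))$ thanks to the two facts above. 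Multiplying by $\sin^2\delta=\mathcal{O}(\delta^2)$ and the bounded factor $\norm{\kappa}_{L^\infty}$ gives $\mathcal{O}(\delta\log(1/\delta))\to0$, which together with the previous paragraph proves the lemma.

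The step I expect to be the main obstacle is the second quantitative fact: establishing the uniform-in-$j$ decay estimates for $K_\alpha$ and $K_{1-\alpha}$ and keeping careful track of the interplay between the $\hat{H}^{-1}$ (rather than $\ell_2$) summability of the coefficients and the small-argument blow-up of the two Bessel functions, so that one obtains the borderline power $\delta^{-2}$ — borderline, but still sufficient once it is paired with the $\mathcal{O}(\delta^3)$ prefactor in the first term and with the angular cancellation in the second.
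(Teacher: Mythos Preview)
Your proof is correct but takes a genuinely different route from the paper's. You perform a direct quantitative analysis: split $f,g$ into regular and singular parts via \eqref{eq:dtmax_dom}, bound the $L^2(\d\theta\,\d s)$-traces of the regular parts by $C(1+\log(1/\delta))$ using the flat Lichnerowicz identity and a radial Cauchy--Schwarz, bound the singular traces by $C\delta^{-2}$ via the Bessel estimate $x^2K_\nu(x)^2\le C$ combined with $\hat H^{-1}$-summability, and then exploit the angular cancellation $\int_0^{2\pi}\kappa(s,\theta)\,\d\theta=0$ on the doubly-singular piece (which is indeed $\theta$-independent). All of these steps check out; the only cosmetic slip is that the mixed reg--sing pieces give $\mathcal O\big(\delta^{-1}(1+\log(1/\delta))^{1/2}\big)$ rather than $\mathcal O(\delta^{-1}(1+\log(1/\delta)))$, which is immaterial.

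The paper, by contrast, gives a three-line structural argument. Writing $h(s,\delta,\theta)\sin\delta=\delta(1+\delta w)$ with $w\in C^1(B_\eps[\gamma])$, the discrepancy term is exactly the model boundary form \eqref{eq:dtmax_ibp} with $f$ replaced by $f_w:=\rho w f\chi_1$. By the flat analogue of Lemma~\ref{lem:reg_rho_dmax} (multiplication by $\rho$ sends the maximal domain into the minimal one, and multiplication by a $C^1$ function preserves the latter), $f_w\in\dom(\cD_{\T_\ell,\alpha}^{(\min)})$; hence that boundary form equals $\langle\cD^{(\max)}f_w,g\rangle-\langle f_w,\cD^{(\max)}g\rangle=0$ by the very definition of $\cD^{(\max)}$ as the adjoint of $\cD^{(\min)}$. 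This avoids any Bessel asymptotics or case splitting and uses only a regularity fact that the paper needs anyway. Your approach, on the other hand, is self-contained (it does not invoke the $\rho$-regularity lemma) and gives explicit decay rates for each component, which can be useful information in its own right.
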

	\noindent With the above Lemma (whose proof is relegated to the Appendix) 
	and \eqref{eq:dtmax_ibp} we see that
	\begin{multline*}
		\cip{\cD_{\max}\psi}{\phi}_{L^2(\S^3)^2} - \cip{\psi}{\cD_{\max}\phi}_{L^2(\S^3)^2}\\
		= \cip{\cD_{\T_{\ell},\alpha}^{(\max)}f}{g}_{L^2(\T \times \R^2)^2} 
		- \cip{f}{\cD_{\T_{\ell},\alpha}^{(\max)}g}_{L^2(\T \times \R^2)^2}.
	\end{multline*}
	
	We now have that
	$$
		\psi \in \dom\big(\cD_{\bA}^{(\pm)}\big) \quad \textrm{ if and only if } \quad 
		f \in \dom\big(\cD_{\T_{\ell},\alpha}^{(\pm)}\big).
	$$
	This follows from the fact that on $\Omega_S$, for any vector field $X$ we have
	$$
		\nabla_{X}(f_{\pm}\xi_{\pm}) = X(f_{\pm}) + f_{\pm}\nabla_{X}\xi_{\pm},
	$$
	implying that $f_{\pm}\xi_{\pm} \in H^1(\Omega_S)^2$ if and only if
	$f_{\pm}$ is a $H^1$-function of $(s,\rho,\theta) \in \T_{\ell} \times B_{\delta}[0]$, meaning
	$(f_+,0)$ (resp $(0,f_-)$) is in the domain of $\cD_{\T,\alpha}^{(\min)}$.
	Therefore the self-adjointness of $\dom\big(\cD_{\T,\alpha}^{(\pm)}\big)$ implies the 
	self-adjointness of $\cD_{\bA}^{(\pm)}$.

\subsection{Generalization to a magnetic link}\label{sec:gen_mag_link}
\subsubsection{The singular gauge for a magnetic link}
For $K \in \N$, consider a $K$-link $\gamma = \bigcup_{k=1}^K \gamma_k \subset \S^3$. 
For each knot $\gamma_k$ we pick a Seifert surface $S_k$ with $\partial S_k = \gamma_k$
and set $\bA_k=2\pi\alpha_k\, [S_k]$, $\alpha_k \in \Tf$.
We may then work with the gauge 
potential\footnote{Observe that the formulation in the Seifert gauge with a \textit{common} Seifert surface
is impossible when the fluxes on different boundary curves do not agree.}
\begin{equation}\label{eq:def_sing_gauge_link}
	\bA =\sum_k \bA_k 
	= \sum_k 2\pi\alpha_k\, [S_k].
\end{equation}
The important difference to the case of a single knot is that now two different $S_k$'s may intersect.
We assume that for all $k\neq k'$, the curve $\gamma_k$ is \emph{transverse} to $S_{k'}$,
meaning
$$
	\forall \bp \in \gamma_k \cap S_{k'}, \quad \rT_{\bp}\gamma_k + \rT_{\bp}S_{k'} = \rT_{\bp}\S^3.
$$
Observe that this is always possible; by a careful analysis of the proof of the
Extension Theorem in \cite[Chapter 2]{GuiPoll74} we can always deform the Seifert surfaces infinitesimally
to obtain transversality. The embedding property is preserved since the class of smooth embeddings
is open in the Whitney topology, see \cite[Thm.~1.4]{Hirsch76}.

We denote the space of smooth knots in $\S^3$ by $\sK$ 
and the set of Seifert surfaces with boundary in $\sK$ by $\sS_{\sK}$. 
Furthermore, we define
$$
	\sS_{\sK}^{(K)} := \{(S_1,\dots,S_K) \in \textstyle{\prod_{k=1}^K \sS_{\sK}} :
	\partial S_i\textrm{ and } S_j \textrm{ are transverse in } \S^3, \, i \neq j\}.
$$

We henceforth use the following notation: to a $K$-link $\gamma = \cup_{k=1}^K\gamma_k$ we assign 
a $K$-tuple of Seifert surfaces 
$$
	\uS := (S_1,\ldots,S_K) \in \sS_{\sK}^{(K)},
$$
together with a $K$-tuple of (renormalized) fluxes 
$$
	\ua := (\alpha_1,\ldots,\alpha_K) \in \Tf^K
$$
describing the individual fluxes on each $\gamma_k$. Furthermore, $\bA$ denotes the singular gauge
\eqref{eq:def_sing_gauge_link}.

\subsubsection{The Dirac operator with a magnetic link}
The domain of the minimal Dirac operator is the closure under the graph norm of 
smooth functions on 
$$
	\Omega_{\uS} := \S^3\setminus \big(\cup_k S_k \big)
$$ 
with support in $\S^3 \setminus \gamma$, 
a well-defined limit on each side of $S_k$ and the correct phase-jump condition across $S_k$ for all $k$.
Again, like in Proposition~\ref{prop:egalite_min_dom}, the domain of the minimal operator can be characterized by
\begin{multline*}
	\dom\big(\cD_{\bA}^{(\min)}\big) =H^{1}_{\bA}(\S^3)^2\\			
	:= \left\{\psi \in H^1(\Omega_{\uS})^2: 
	\left.\psi\right|_{(\wt{S}_k)_+}=e^{-2i\pi\alpha_k}\left.\psi\right|_{(\wt{S}_k)_-}\in H^{1/2}(\wt{S}_k)^2, 1 \leq k \leq K\right\},
\end{multline*}
where 
$$
	\wt{S}_k := S_k \cap \big(\cap_{k \neq k'}\complement_{\S^3}S_{k'}\big).
$$
(See Section~\ref{sec:tech_min_dom} below). The operator acts like the free 
Dirac operator on $\Omega_{\uS}$ and the corresponding elements 
$(\sigma(-i\nabla)\psi)\big|_{\Omega_{\uS}}\in L^2(\Omega_{\uS})^2$ are canonically embedded in $L^2(\S^3)^2$. 
Furthermore, for $\psi\in\dom(\cD_{\bA}^{(\min)})$ we have the energy equality
\[
	\int |\cD_{\bA}^{(\min)}\psi|^2=\int |(\nabla \psi)\big|_{\Omega_{\uS}}|^2+\frac{3}{2}\int |\psi|^2.
\]

Let $\eps>0$ be small enough such that the tubular neighborhoods around each
$\gamma_k$ are mutually disjoint,
$$
	B_{\eps}[\gamma] = \bigcup_{k=1}^K B_{\eps}[\gamma_k],
$$
and that on each $\gamma_k$, the map $\exp$ is a diffeomorphism.
We then define $\delta>0$ as
$$
	0 < \delta < \eps \min\left\{1, \left(\sup_{k,n}\norm{\kappa_k^{(n)}}_{L^{\infty}} 
	+ \sqrt{\eps + \sup_{k,n}\norm{\kappa_k^{(n)}}_{L^{\infty}}}\right)^{-1}\right\},
$$
where $\kappa_k^{(n)}$ denotes:
\[
\kappa_k^{(n)}(s):=\sup_{\theta}|\kappa_{g,k}^{(n)}(s)\cos(\theta)+\kappa_{n,k}^{(n)}(s)\sin(\theta)|.
\]
The operator $\cD_{\bA}^{(\underline{e})}$, $\underline{e} \in \{+,-\}^K$, 
is now defined as in Theorem~\ref{thm:dirac_sa_knot} close to each knot $\gamma_k$.

\begin{rem}
	Note that for each Seifert surface $S_k$, we have a Seifert frame $(\bT_k,\bS_k,\bN_k)$
	and sections $\xi_{k,\pm}$ defined in the vicinity of $\gamma_k = \partial S_k$.
	To simplify notation we drop the subscript $k$ for both the Seifert frame and the sections
	$\xi_{\pm}$.
\end{rem}

\begin{theorem}\label{thm:dirac_sa_link}
	Fix $K \in \N$ and let $\gamma = \bigcup_{k=1}^K \gamma_k \subset \S^3$ be a link.
	Pick $(\uS,\ua) \in \sS_{\sK}^{(K)} \times \Tf^K$ with $\partial S_k = \gamma_k$ and set
	$$
		\bA = \sum_{k=1}^K2\pi\alpha_k[S_k].
	$$
	For $\underline{e} \in \{+,-\}^K$, we define $\cD_{\bA}^{(\underline{e})}$ by
	\begin{equation*}
		\left\{
		\begin{array}{lcl}
		\dom\big(\cD_{\bA}^{(\underline{e})}\big) 
		&:=&\big\{\psi \in \dom\big( (\cD_{\bA}^{(\min)})^{*}\big):\\
		&&\quad \langle \xi_{-e_k},\chi_{\delta,\gamma_k}\psi \rangle \xi_{-e_k} 
		\in \dom\big( \cD_{\bA}^{(\min)}\big), 1 \leq k \leq K\big\},\\
		\cD_{\bA}^{(\underline{e})}\psi &:=& (-i\bsigma(\nabla)\psi)\big|_{\Omega_{\uS}} \in L^2(\Omega_{\uS})^2 \hookrightarrow L^2(\S^3)^2.
		\end{array}
		\right.
	\end{equation*}
	The definition of the operators is independent of the choice of $\chi_{\delta,\gamma}$ and the operators
	are self-adjoint.
\end{theorem}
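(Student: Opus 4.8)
The plan is to reduce to the knot case, Theorem~\ref{thm:dirac_sa_knot}, component by component: since the tubular neighbourhoods $B_{\eps}[\gamma_k]$ are mutually disjoint, the analysis near the link decouples into $K$ independent local problems, one near each $\gamma_k$, and the independence of $\cD_{\bA}^{(\underline{e})}$ on the cutoffs follows, exactly as before, from the fact that the difference of two admissible $\chi_{\delta,\gamma_k}$ vanishes near $\gamma$. The only genuinely new feature is that distinct Seifert surfaces $S_k$ and $S_{k'}$ may intersect, and the bulk of the work will be to check that the transversality hypothesis makes these intersections harmless. First I would establish the link analogue of Lemma~\ref{lem:charac_D_max}: that $\dom\big((\cD_{\bA}^{(\min)})^*\big)$ consists of the $\psi\in L^2(\S^3)^2$ with $-i\bsigma(\nabla)\psi|_{\Omega_{\uS}}\in L^2$ carrying the prescribed phase jump $e^{-2\pi i\alpha_k}$ across each free face $\wt{S}_k$ with traces in $H^{1/2}_{\loc}(\wt{S}_k\setminus\gamma)^2$, and that two such $\psi,\phi$ satisfy a boundary identity of the form \eqref{eq:dmax_stokes} with $\partial B_{\eps}[\gamma]$ replaced by $\bigsqcup_k\partial B_{\eps}[\gamma_k]$. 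This should follow from the generalized Stokes formula (Proposition~\ref{prop:gen_stokes}) applied on $\Omega_{\uS}$ together with the description of the minimal domain from Section~\ref{sec:tech_min_dom}; the one new point I expect to argue is that along an intersection curve $c=S_k\cap S_{k'}$ the four transverse sheets carry mutually consistent jumps, so the phase accumulated on a small loop about $c$ is $e^{-2\pi i(\alpha_k+\alpha_{k'}-\alpha_k-\alpha_{k'})}=1$ and no boundary contribution survives there --- mirroring the fact that $\bA$ creates no flux along $c$.

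Next I would prove symmetry of $\cD_{\bA}^{(\underline{e})}$: in the boundary identity the terms over $\wt{S}_k$ cancel (shared jump, opposite orientations of $(\wt{S}_k)_{\pm}$), leaving only contributions over each $\partial B_{\eps}[\gamma_k]$, handled as for a single knot via the relative-phase identity \eqref{eq:sigma_drho} for $\xi_{k,\pm}$, the hypothesis $\langle\xi_{-e_k},\chi_{\delta,\gamma_k}\psi\rangle\xi_{-e_k}\in\dom(\cD_{\bA}^{(\min)})$ (likewise for $\phi$), and graph-norm continuity of the left-hand side. For self-adjointness I would localize an arbitrary $\psi\in\dom((\cD_{\bA}^{(\min)})^*)$ near $\gamma_k$ with $\chi_{\delta,\gamma_k}$, write $\psi=f_+\xi_++f_-\xi_-$ in the coordinates $(s,\rho,\theta)$, and show $f\in\dom(\cD_{\T_{\ell_k},\alpha_k}^{(\max)})$ and that $\psi\in\dom(\cD_{\bA}^{(\underline{e})})$ iff $f\in\dom(\cD_{\T_{\ell_k},\alpha_k}^{(e_k)})$ for every $k$; the self-adjointness of $\cD_{\T_{\ell_k},\alpha_k}^{(\pm)}$ from Section~\ref{ssec:T_straight_line} then closes the argument. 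The splitting \eqref{eq:D_max_close_knot}--\eqref{eq:D_gamma_in_T} and the square-integrability of $\cE_0 f$ and $(\cE_1 f)|_{\theta\ne0}$ carry over verbatim, being local near $\gamma_k$ and depending only on $S_k$.

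The new ingredient is that $f$ now also jumps across each $S_{k'}\cap B_{\delta}[\gamma_k]$ with $k'\ne k$. Here transversality of $\gamma_k$ and $S_{k'}$ is used: near any point of $\gamma_k\cap S_{k'}$ the surface is a graph $\{s=g_{k'}(u_1,u_2)\}$ over the normal disk, so after a local change of the $s$-coordinate --- equivalently a local gauge transformation that leaves the $S_k$-singularity at $\{\theta=0\}$ untouched --- these become constant-phase jumps across meridian slices $\{s=\mathrm{const}\}$, while any $S_{k'}$ missing $\gamma_k$ contributes nothing to the $\eps\to0$ boundary limit. Thus near $\gamma_k$ the operator is the model operator of Section~\ref{ssec:T_straight_line} twisted only by a flat connection in $s$; such a twist does not affect the $\rho\to0$ deficiency computation of Lemma~\ref{lem:model_c_def_sp} nor the boundary integral \eqref{eq:dtmax_ibp} (the constant $s$-jumps are immaterial there), so the distinguished extensions remain self-adjoint with the aligned/anti-aligned domains, and matching the boundary term over $\partial B_{\eps}[\gamma_k]$ with \eqref{eq:dtmax_ibp}, as in Lemma~\ref{lem:sa_knot_tech}, yields the stated equivalence.

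I expect the main obstacle to be exactly the surface intersections: on one hand checking, in the maximal-domain step, that the phase jumps around each $S_k\cap S_{k'}$ genuinely close up so the generalized Stokes formula produces no spurious boundary term; on the other hand making rigorous the claim that transversality reduces the jumps of $f$ across the auxiliary surfaces near $\gamma_k$ to harmless constant-phase jumps in the $s$-direction, so that the reduction to the self-adjoint model operators $\cD_{\T_{\ell_k},\alpha_k}^{(\pm)}$ survives intact. Everything else is the disjoint-tubular-neighbourhood decoupling together with the knot-case arguments applied near each $\gamma_k$.
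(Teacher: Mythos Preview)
Your plan is essentially the paper's own, including the identification of the surface intersections as the only new difficulty and the reduction of the boundary identity to a sum over $\partial B_{\eps}[\gamma_k]$. The one place where your sketch and the paper's proof diverge is the mechanism for neutralising the foreign phase jumps near $\gamma_k$. You propose a ``local change of the $s$-coordinate --- equivalently a local gauge transformation'' that straightens each cut $S_{k'}\cap B_{\delta}[\gamma_k]$ to a meridian slice $\{s=\mathrm{const}\}$. Taken literally as a coordinate change $s\mapsto s-g_{k'}(u_1,u_2)$ this is awkward: it alters $\partial_s$ by transverse derivatives and thereby perturbs the model operator $\wtd$, so the clean reduction \eqref{eq:D_max_close_knot}--\eqref{eq:D_gamma_in_T} would no longer hold verbatim. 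The paper instead never moves the cuts: in Section~\ref{sec:remov_phase} it builds, for each $k$, an explicit $\S^1$-valued function $E_k$ on $B_{\eps}[\gamma_k]\cap\Omega_{\uS}$, piecewise of the form $\exp(-i\,c\,s_{k,k'}(\bp)/\ell_k)$ times indicator phases on the regions between the actual cuts $C_m$, so that $E_k$ has precisely the required jumps across each $S_{k'}$ while $\overline{E_k}\,X(E_k)=ic_k\,X(s_k)$ extends to a $C^1$ function on all of $B_{\eps}[\gamma_k]$. Proposition~\ref{prop:sa_link} then says that $\psi\mapsto\overline{E_k}\psi$ is an exact bijection from the localised $\dom(\cD_{\bA}^{(\,\cdot\,)})$ onto the localised $\dom(\cD_{\bA_k}^{(\,\cdot\,)})$, conjugating the operators up to the bounded zeroth-order term $\tfrac{c_k}{h_k}\bsigma(\bT_k^{\flat})$; since moreover $\cip{i\bsigma(\d\rho_k)\psi}{\phi}=\cip{i\bsigma(\d\rho_k)\overline{E_k}\psi}{\overline{E_k}\phi}$, the boundary term over $\partial B_{\eps}[\gamma_k]$ is literally that of the single-knot problem, and Theorem~\ref{thm:dirac_sa_knot} finishes the proof in one line.

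So your ``flat connection in $s$'' intuition is exactly right --- the slope $c_k=\sum_{k'\ne k}2\pi\alpha_{k'}\link(\gamma_{k'},\gamma_k)$ is that flat connection --- but the cleanest realisation is a global phase function on the tube rather than a patchwork of local coordinate changes, and it does not require making the cuts meridional. Your extra remark about phase closure around $S_k\cap S_{k'}$ is a sound consistency check, though the paper simply absorbs it into ``one can extend \eqref{eq:general_stokes} to the case of a link'' without further comment.
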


\begin{rem}\label{rem:d_ext_choice}
	As in the case of one knot, the phase jumps for $2\pi\alpha_k[S_k]$ and
	$2\pi(1-\alpha_k)[-S_k]$ agree. Moreover, the complex line bundles $\C\xi_{\pm}$ 
	for $S_k$ coincide with the complex line bundles $\C \xi_{\mp}$ for $-S_k$.
	By adjusting the fluxes $2\pi\alpha_k$ and reorienting the Seifert surfaces $S_k$ accordingly,
	we may without loss of generality assume that $\underline{e} = (-, \ldots, -)$,
	and write
	$$
		\cD_{\bA}^{(-)} = \cD_{\bA}^{(-,\ldots,-)}.
	$$
\end{rem}

\begin{rem}
	Given $\bA = \sum_{k=1}^K 2\pi\alpha_k [S_k]$, the charge-conjugate potential $\bA_{\mathrm{C}}$ is again defined as
	$$
		\bA_{\mathrm{C}} := \sum_{k=1}^K 2\pi (1-\alpha_k)[S_k].
	$$
	The charge conjugation $\mathrm{C}$ maps $\dom(\cD_{\bA}^{(\underline{e})})$ 
	onto $\dom(\cD_{\bA_{\mathrm{C}}}^{(-\underline{e})})$ and we have
	$$
		\mathrm{C}\, \cD_{\bA}^{(\underline{e})} = \cD_{\bA_{\mathrm{C}}}^{(-\underline{e})}\, \mathrm{C}.
	$$
\end{rem}

Again, these self-adjoint extensions are described by the behavior of the spinors close to each knot $\gamma_k$,
yet the boundary distributions on each $\gamma_k$ may have phase-jumps along $\T_{\ell_k}$
due to intersecting Seifert surfaces. Before giving the proof of the above theorem, we discuss this important
aspect.

\subsubsection{Removing the phase jumps}\label{sec:remov_phase}
Assume that the link $\cup_k \gamma_k$ intersects the surfaces $S_{k'}$'s non trivially.

Up to taking $\eps>0$ small enough we can assume that for all $1\le k\neq k'\le K$,
the two subsets $B_{\eps}[\gamma_k]\cap S_{k'}$ and $\gamma_k\cap S_{k'}$ have the same
number of connected components.

Let $1\le k\neq k'\le K$ with $\gamma_k\subset S_{k'}$: we define a  map $E_{k,k'}$ 
describing the phase jumps on $B_{\eps}[\gamma_k]$ due to $S_{k'}$.

\paragraph{Construction of $E_{k,k'}$}

Let $1\le k\neq k'\le K$. If $\gamma_k\cap S_{k'}=\emptyset$, we set $E_{k,k'}\equiv 1$,
else we proceed as follows.

The curve $\gamma_k$ intersects $S_{k'}$ at the points 
$0\le s_1<s_2<\ldots<s_{M}<\ell_{k}$. Call $C_1,\cdots,C_{M}$ the corresponding connected
components of the intersection $B_{\eps}[\gamma_k]\cap S_{k'}$.

For $1\le m\le M$, $S_{k'}$ induces a phase jump $e^{ib_{m}}$ across the cut $C_m$,
where $b_{m}=\pm 2\pi\alpha_{k'}$.

\paragraph{\textit{Case 1: $M=1$}}
		The surface $S_{k'}$ cuts $\gamma_k$ only once and the cut neighborhood
		$B_\eps[\gamma_k]\setminus S_{k'}=:B(\eps,k,k')$ is contractible. 
		We can thus lift the coordinate map $s_k(\cdot)$ on this subset which gives a smooth function
		\[
		    s_{k,k'}:B_\eps[\gamma_k]\setminus S_{k'}\mapsto \mathbb{R},
		\]
		satisfying for all $\bp\in B_\eps[\gamma_k]\setminus S_{k'}$:
		\[
		    \mathrm{exp}\Big(\frac{2i\pi}{\ell_k}s_{k,k'}(\bp)\Big)=\mathrm{exp}\Big(\frac{2i\pi}{\ell_k}s_{k}(\bp)\Big).
		\]
		We then define for all $\bp\in B_\eps[\gamma_k]\setminus S_{k'}\supset B_\eps[\gamma_k]\cap\Omega_{\uS}$
		\begin{equation}\label{Eq:E_k_k'_1_intersection}
			E_{k,k'}(\bp):=\exp\Big(-i b_{1}\frac{s_{k,k'}(\bp)}{\ell_k}\Big).
		\end{equation}
\paragraph{\textit{Case 2: $M\ge 2$}} 
		In this case the cuts $C_m$'s split $B_{\eps}[\gamma_k]$ into $M$ sections:
		\[
			B_{\eps}[\gamma_k] \cap \complement S_{k'}
			=: R_{12} \cup R_{23} \cup \dots \cup R_{M-1)M}\cup R_{M(M+1)},
		\]
		When passing from $R_{(m-1)m}$ to $R_{m(m+1)}$, we are going through $C_{m}$
		which then induces the phase jump $e^{ib_{m}}$ (with $C_{M+1}=C_1$).
		As in the first case, we pick a smooth lift $s_{k,k'}$ of $s_k$
		defined on $B_\eps[\gamma_k]\setminus C_1$.
		
		Writing for $2\le m\le M$
		$$
			R_{m(M+1)}:=R_{m(m+1)} \cup R_{(m+1)(m+2)} \cup \dots \cup R_{M(M+1)},
		$$
		we set for all $\bp\in B_\eps[\gamma_k]\cap\Omega_{\uS}$:
		\begin{equation}\label{Eq:E_k_k'_many_intersections}
			E_{k,k'}(\bp):=\exp\Big(-i\sum_{m=1}^M b_m\frac{s_{k,k'}(\bp)}{\ell_k}
				+i \sum_{m=2}^{M}b_m \mathds{1}_{R_{m(M+1)}}(\bp)\Big).
		\end{equation}
		
	Note that
	    \[
		\sum_{m=1}^{M}b_{m}=-2\pi\alpha_{k'}\link(\gamma_k,\gamma_{k'}),
	     \]
	as the linking number $\link(\gamma_k,\gamma_{k'})$ \cite{Rolfsen}*{Part D, Chapter 5} 
	corresponds to the number of algebraic crossing of $\gamma_{k}$ through the Seifert surface $S_{k'}$ for $\gamma_{k'}$.

Accompanying these decompositions, for $\bp \in B_{\eps}[\gamma_k] \cap \Omega_{\uS}$ we define
the function
\begin{equation}\label{def:curve_phasej_rem}
	E_k(\bp) :=
	\prod_{k'\neq k}E_{k,k'}(\bp),
\end{equation}
and introduce the slope
\begin{equation}\label{eq:pjump_linkn}
	c_k:=\sum_{k'\neq k}2\pi\alpha_{k'}\link(\gamma_{k'},\gamma_{k}).
\end{equation}
Note that the function $E_k$ has a bounded derivative in $B_{\eps}[\gamma_k] \cap \Omega_{\uS}$, and that for any vector field $X$,
$\overline{E_k}X(E_k)=ic_k X(s_k(\cdot))$ can be extended to an element in $C^1(B_{\eps}[\gamma_k])$. Here $s_k(\cdot)$ denotes the coordinate map 
that associates to a point $\bp\in B_\eps[\gamma_k]$ the arclength parameter $s_k(p)\in\R/(\ell_k\Z)$ of its projection onto $\gamma_k$.

The $E_{k,k'}$'s are $\S^1$-valued functions, locally depending only on $s_k$,
with a fixed slope and the correct phase jump across the cuts $C_m$'s: 
they are a unique up to a constant phase.

\begin{proposition}\label{prop:sa_link}
	Let $(\,\cdot\,) = \{(\max),(-),(\min)\}$, then the map $\psi \to \overline{E_k}\psi$ maps
	the set $\{\psi \in \dom(\cD_{\bA}^{(\,\cdot\,)}): \supp\psi \in B_{\eps}[\gamma_k] \cap \Omega_{\uS}\}$
	onto the set $\{\psi \in \dom(\cD_{\bA_k}^{(\,\cdot\,)}): \supp\psi \in B_{\eps}[\gamma_k] \cap \Omega_{\uS}\}$.
	And for $\psi \in \dom(\cD_{\bA}^{(\,\cdot\,)})$ localized around $\gamma$, we have:
	\[
	 \cD_{\bA}^{(\,\cdot\,)}\psi=E_k\cD_{\bA}^{(\,\cdot\,)}(\overline{E}_k\psi)+\frac{c_k}{h_k}\sigma(\bT_k^{\flat})\psi.
	\]
\end{proposition}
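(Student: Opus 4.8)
The plan is to regard $\psi\mapsto\overline{E}_k\psi$ as a singular gauge transformation which, on the tubular neighbourhood $B_\eps[\gamma_k]$, cancels exactly the phase jumps produced by the surfaces $S_{k'}$ with $k'\neq k$, while leaving untouched the jump $e^{-2\pi i\alpha_k}$ across $S_k$. Since $E_k$ is $\S^1$-valued, scalar, and — by the construction recalled in Section~\ref{sec:remov_phase} — smooth on $B_\eps[\gamma_k]\cap\Omega_{\uS}$ with $\overline{E}_k\,X(E_k)=ic_k\,X(s_k)$ extending to a $C^1$ quantity, multiplication by $\overline{E}_k$ (or by $E_k$) preserves all the $L^2$-integrability and the local trace regularity that enter the three domains. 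So the proof splits into a phase-bookkeeping part, giving the domain statements, and a Leibniz computation, giving the operator identity.

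For the $(\min)$ case I would use $\dom(\cD_\bA^{(\min)})=H^1_\bA(\S^3)^2$: for $\psi$ supported in $B_\eps[\gamma_k]\cap\Omega_{\uS}$ only the surfaces $S_{k'}$ actually meeting $B_\eps[\gamma_k]$ are relevant, and across each cut $C_m\subset S_{k'}$ the function $E_{k,k'}$ has, by construction, jump $e^{ib_m}$ with $b_m=\pm2\pi\alpha_{k'}$ (the indicator corrections in \eqref{Eq:E_k_k'_many_intersections} being inserted precisely so that $E_{k,k'}$ is single-valued on the punctured tube when $M\ge2$). Hence $\overline{E}_k=\prod_{k'\neq k}\overline{E}_{k,k'}$ carries across each such cut the jump inverse to that of $\psi$, so $\overline{E}_k\psi$ extends continuously — with $H^{1/2}_{\loc}$ traces, since $\overline{E}_k$ and $d\overline{E}_k$ are bounded on the cut domain — across every $S_{k'}$, $k'\neq k$, inside $B_\eps[\gamma_k]$, and retains the jump $e^{-2\pi i\alpha_k}$ across $\widetilde S_k\cap B_\eps[\gamma_k]$. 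Thus $\overline{E}_k\psi\in\dom(\cD_{\bA_k}^{(\min)})$, and multiplication by $E_k$ is the inverse map, giving the $(\min)$ bijection. For $(\max)$ I would use the link analogue of Lemma~\ref{lem:charac_D_max}, namely $\dom(\cD_\bA^{(\max)})=\{\psi\in\mathcal{A}_{\uS}:\psi|_{(\widetilde S_{k'})_\pm}\in H^{1/2}_{\loc}\text{ with the correct phase jumps}\}$; the phase bookkeeping is identical, and for the $L^2$ part one notes that $-i\bsigma(\nabla)(\overline{E}_k\psi)$ and $\overline{E}_k(-i\bsigma(\nabla)\psi)$ differ by the bounded term $-i\bsigma(d\overline{E}_k)\psi=-c_k\overline{E}_k h_k^{-1}\bsigma(\bT_k^\flat)\psi=-c_k\overline{E}_k\bsigma(ds_k)\psi$, where $c_k h_k^{-1}\bsigma(\bT_k^\flat)$ is bounded on $B_\eps[\gamma_k]$ because $ds_k$ is; hence $\psi\in\dom(\cD_\bA^{(\max)})$ iff $\overline{E}_k\psi\in\dom(\cD_{\bA_k}^{(\max)})$. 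For $(-)$, by Remark~\ref{rem:d_ext_choice} I would take $\underline e=(-,\dots,-)$; since $\supp\psi\subset B_\eps[\gamma_k]$ and the tubular neighbourhoods are disjoint with $\delta<\eps$, only the $k$-th defining condition survives, and $\overline{E}_k$ being scalar gives $\langle\xi_+,\chi_{\delta,\gamma_k}\overline{E}_k\psi\rangle\xi_+=\overline{E}_k\langle\xi_+,\chi_{\delta,\gamma_k}\psi\rangle\xi_+$, so the already-proved $(\min)$ equivalence applied to this compactly supported spinor turns the condition for $\cD_\bA^{(-)}$ into the one for $\cD_{\bA_k}^{(-)}$.

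For the operator identity: on $\Omega_{\uS}$ all three operators act as $-i\bsigma(\nabla)$, so the Leibniz rule gives $-i\bsigma(\nabla)(\overline{E}_k\psi)=\overline{E}_k(-i\bsigma(\nabla)\psi)-i\bsigma(d\overline{E}_k)\psi$; from $\overline{E}_k\,X(E_k)=ic_k\,X(s_k)$ one gets $d\overline{E}_k=-ic_k\overline{E}_k\,ds_k$, and from \eqref{eq:partiald_coord} together with \eqref{def:bg} one reads off $ds_k=h_k^{-1}\bT_k^\flat$, whence $-i\bsigma(d\overline{E}_k)\psi=-\tfrac{c_k}{h_k}\overline{E}_k\bsigma(\bT_k^\flat)\psi$. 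Multiplying by $E_k$ and rearranging yields
\[
	\cD_{\bA}^{(\cdot)}\psi=E_k\,\cD_{\bA_k}^{(\cdot)}(\overline{E}_k\psi)+\tfrac{c_k}{h_k}\bsigma(\bT_k^\flat)\psi,
\]
which is the claimed formula (the two operator symbols agreeing on $\Omega_{\uS}$). The \emph{main obstacle} is the phase bookkeeping in the first step: one must verify that the jump of each $E_{k,k'}$ across $C_m$ is $e^{ib_m}$ with the correct sign of $b_m=\pm2\pi\alpha_{k'}$ — dictated by the local orientation of $S_{k'}$ at the crossing and by which of $R_{(m-1)m}$, $R_{m(m+1)}$ lies above — consistently with $\sum_m b_m=-2\pi\alpha_{k'}\link(\gamma_k,\gamma_{k'})$ and with the slope $c_k$ of \eqref{eq:pjump_linkn}, and that \eqref{Eq:E_k_k'_many_intersections} is genuinely single-valued on $B_\eps[\gamma_k]\cap\Omega_{\uS}$; once this matching is in place, everything else follows routinely from $\overline{E}_k$ being a bounded smooth scalar gauge with bounded derivative on that set.
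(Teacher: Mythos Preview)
Your proposal is correct and is precisely the straightforward verification the paper intends: in the paper the proof of this proposition is literally ``left to the reader'', and the argument you give---phase bookkeeping via the explicit jumps of the $E_{k,k'}$ to match domains, plus the Leibniz rule together with $\overline{E}_k\,X(E_k)=ic_k\,X(s_k)$ and $ds_k=h_k^{-1}\bT_k^\flat$ to obtain the operator identity---is exactly that verification. Your observation that the right-hand side should really read $\cD_{\bA_k}^{(\cdot)}(\overline{E}_k\psi)$, the two operators agreeing as differential expressions on $\Omega_{\uS}$, is also correct and clarifies a slight abuse in the statement.
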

\begin{proof}
	The proof is straight forward and left to the reader.
\end{proof}

\subsubsection{Proof of Theorem~\ref{thm:dirac_sa_link}}
As in the proof of the case of the knot we set
$$
	\cA_{\uS} := \big\{\psi \in [\sD'(\Omega_{\uS})]^2 : -i\bsigma(\nabla)\psi \in L^2(\Omega_{\uS})^2\big\} \cap L^2(\Omega_{\uS})^2,
$$
and one can then extend \eqref{eq:general_stokes} to the case of a link.
After testing against elements from the minimal domain, one
arrives at the following characterization of the maximal domain:
\begin{align*}
	\dom\big(\cD_{\bA}^{(\max)}\big) &:= \dom\big( (\cD_{\bA}^{(\min)})^{*}\big)\\
	&= \{\psi \in \cA_{\uS}: -i\bsigma(\nabla)\psi \in L^2(\Omega_{\uS})^2,\\
	&\qquad \left.\psi\right|_{(\wt{S}_k)_+}=e^{-2i\pi\alpha_k}\left.\psi\right|_{(\wt{S}_k)_-}\in 
	H_{\loc}^{1/2}(\wt{S}_k \setminus \gamma_k)^2, 1 \leq k \leq K\}.
\end{align*}
Thus, for $\psi, \phi \in \dom\big(\cD_{\bA}^{(\max)}\big)$ one has
\begin{equation*}
	\cip{\cD_{\bA}^{(\max)}\psi}{\phi}_{L^2} - \cip{\psi}{\cD_{\bA}^{(\max)}\phi}_{L^2}
	= \lim_{\delta \to 0^{+}}\sum_{k=1}^K\int_{\partial B_{\delta}[\gamma_k]} \iota_{(\d \rho_k)^{\sharp}}
	\left[\cip{i\bsigma(\d \rho_k)\psi}{\phi} \vol_{g_3}\right].
\end{equation*}
From this it immediately follows that the operators $\cD_{\bA}^{(\underline{e})}$ are symmetric.
Furthermore, as
$$
	\cip{i\bsigma(\d \rho_k)\psi}{\phi} = \cip{i\bsigma(\d \rho_k)\overline{E}_k\psi}{\overline{E}_k\phi},
$$
the self-adjointness follows from Proposition~\ref{prop:sa_link} and Theorem~\ref{thm:dirac_sa_knot}.

\subsection{Remarks on the Pauli operator}
From the previous study it is then natural to define the Pauli operator associated 
to the gauge potential $\bA:=\sum_{k=1}^K2\pi\alpha_k[S_k]$ as the square of 
the Dirac operator $\cD_{\bA}^{(-)}$. Equivalently, it is the unbounded operator 
associated to the quadratic form
\[
	q_{\bA}[\psi]:= \int|\cD_{\bA}^{(-)}\psi|^2
\]
with form domain $\dom\big(\cD_{\bA}^{(-)}\big)$.
In a forthcoming paper \cite{dirac_s3_paper3}, we will prove that $\cD_{\bA}^{(-)}$ is the limit 
(in particular in the strong resolvent sense) of Dirac operators with smooth magnetic fields. 
In that sense, the quadratic form $q_{\bA}$ gives rise to the correct Pauli operator.

\section{Convergence \& compactness properties of Dirac operators}
In this section we are interested in describing the change in the spectrum of the
operators $\cD_{\bA}^{(\underline{e})}$ when
varying the fluxes and deforming the link $\gamma$.
We will restrict our analysis to the extension $\cD_{\bA}^{(-)}$, results for
$\cD_{\bA}^{(\underline{e})}$ follow via Remark~\ref{rem:d_ext_choice}.
Furthermore, we denote the graph norm of $\cD_{\bA}^{(-)}$ by
$$
	\norm{\psi}_{\bA}^2 := \cip{\psi}{\psi}_{L^2} + \cip{\cD_{\bA}^{(-)}\psi}{\cD_{\bA}^{(-)}\psi}_{L^2},
	\quad \psi \in \dom\big(\cD_{\bA}^{(-)}\big).
$$

In order to specify what we mean by deforming a link, it is necessary to introduce
a notion to compare sub-manifolds of $\S^3$.

\subsection{Metrics for knots and Seifert surfaces in $\C^2$}
We begin by defining a metric to compare smooth compact oriented sub-manifolds of the same 
dimension in $\C^2$. This metric should in some sense be able to detect convergence of the differential structures of
the sub-manifolds and can therefore be seen as a smooth version of the Hausdorff distance 
for compact sets.

Let $\cM_1$ and $\cM_2$ be two smooth oriented compact sub-manifolds of $\C^2$ of the same real dimension $d$, then
denote for $i=1,2$
\begin{align*}
	N_i: 
	\begin{array}{ccl}
		\cM_i &\longrightarrow &\widetilde{\mathrm{Gr}}_{4-d,4}\\
		\bp &\mapsto &(\rT_p\cM_i)^{\bot},
	\end{array}
\end{align*}
their respective Gauss maps. Observe that the oriented Grassmanian
$\widetilde{\mathrm{Gr}}_{4-d,4}$ can be canonically embedded into 
the Grassmann algebra $\bigwedge^{4-d}\C^2$ (over $\R$).
We extend the differential $\d N_i(\bp)$, initially only defined on $\rT_p\cM_i$,
to all of $\C^2$ by
\begin{align*}
	\d N_i(\bp): 
	\begin{array}{ccl}
		\rT_p\cM_i \oplus (\rT_p\cM_i)^{\bot} &\longrightarrow& \textstyle{\bigwedge^{4-d}\C^2}\\
		\bv + \bv_{\bot} &\mapsto& \d N_i(\bp)\bv.
	\end{array}
\end{align*}
Similarly, we can extend $\d^kN_i(\bp)$ to a map
\begin{align*}
	\d^kN_i: 
	\begin{array}{ccl}
		(\C^2)^k &\longrightarrow& \textstyle{\bigwedge^{4-d}\C^2}\\
		(\bv_1,\dots,\bv_k) &\mapsto& \d^kN_i(\bp)[\bv_1,\dots,\bv_k].
	\end{array}
\end{align*}
Furthermore, we set
$$
	\norm{\d^kN_i(\bp)}^2 := \sup_{(\bv_1,\dots,\bv_k), |\bv_j|^2=1}\left|\d^kN_i(\bp)[\bv_1,\dots,\bv_k]\right|^2,
$$
and define the distance between two points $\bp_1\in \cM_1, \bp_2 \in \cM_2$ as
$$
	\delta_{\cM_1\times\cM_2}(\bp_1,\bp_2)
	:= |\bp_1-\bp_2| + \sum_{k=0}^\infty 2^{-k}\min\left\{\norm{\d^kN_1(\bp_1) - \d^kN_2(\bp_2)},1\right\}.
$$
We then introduce the following distance on oriented $d$-dimensional sub-manifolds:
\begin{align}\label{def:dist_subm}
	&\dist_d(\cM_1,\cM_2) := |\cH_d(\cM_1)-\cH_d(\cM_2)|\\
	&\quad +\max\left(\sup_{\bp_1 \in \cM_1}\inf_{\bp_2 \in \cM_2}\delta_{\cM_1\times\cM_2}(\bp_1,\bp_2),
	\sup_{\bp_2 \in \cM_2}\inf_{\bp_1 \in \cM_1}\delta_{\cM_1\times\cM_2}(\bp_1,\bp_2)\right).\nn
\end{align}
Here, $\cH_d$ is the $d$-dimensional Hausdorff measure.


In our situation we need to compare Seifert surfaces; for two such surfaces $S_1,S_2$, we will use the metric
$$
	\dist_{\sS}(S_1,S_2) := \dist_2(S_1,S_2) + \dist_1(\partial S_1, \partial S_2).
$$
A natural metric on $\sS_{\sK}^{(K)}\times \Tf^K$ is given by
\begin{equation}\label{def:dist_metric}
	\dist((\uS,\ua),(\uS',\ua'))
	:= \max_{1\leq k\leq K}\left[\dist_{\sS}(S_k,S_k') + \dist_{\Tf}(\alpha_k,\alpha_k')\right].
\end{equation}
We recall that $\dist_{\Tf}$ denotes the distance \eqref{def:dist_torus}.

Due to the geometrical nature of the coordinates, the following holds.
\begin{proposition}[Convergence of coordinates]\label{prop:conv_coord}
	Let $S^{(n)}$ be a sequence of Seifert surfaces converging to $S$. Then
	the geodesic coordinates $(s_n,\rho_n,\theta_n)$ for $S^{(n)}$ converge to the geodesic
	coordinates on $S$ in the $C^\infty$-norm, with $\tfrac{\ell}{\ell_n}s_n \to s$  on $\overline{B}_{\eps}[\partial S]$
	and $\rho_n \to \rho$,$\theta_n \to \theta$ on 
	$\overline{B}_{\eps}[\partial S] \setminus B_{\eps'}[\partial S]$ for $0<\eps'<\eps$.
	Furthermore $\rho_n \to \rho$ converges in the $C^0$-norm on $\overline{B}_{\eps}[\partial S]$.
\end{proposition}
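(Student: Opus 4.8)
The plan is to exploit the fact that all the geodesic coordinates $(s_n,\rho_n,\theta_n)$ are built out of two ingredients: (i) the arclength parametrization $\gamma_n:\R/(\ell_n\Z)\to\S^3$ of $\partial S^{(n)}$ together with its Seifert frame $(\bT_n,\bS_n,\bN_n)$, and (ii) the Riemannian exponential map of $\S^3$, which is a fixed smooth map independent of $n$. Convergence of the coordinates will therefore follow from convergence of the curves and their frames, inverted through the exponential map. The key point is that $\dist_{\sS}(S^{(n)},S)\to 0$ controls not just the $C^0$-distance of the surfaces and their boundaries but, through the terms $\norm{\d^k N_i(\bp_i)-\d^k N_j(\bp_j)}$ in $\delta_{\cM_1\times\cM_2}$, the convergence of all derivatives of the Gauss maps; this is exactly what is needed to get $C^\infty$ control.

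\emph{Step 1: Convergence of the boundary curves.} First I would show that, after fixing base points consistently, the unit-speed parametrizations $\gamma_n$ of $\partial S^{(n)}=\gamma_n$ converge to $\gamma$ in $C^\infty(\R/(\ell_n\Z);\S^3)$ after the affine reparametrization $s\mapsto \tfrac{\ell_n}{\ell}s$ (so that all are defined on $\R/(\ell\Z)$), with $\ell_n\to\ell$. Indeed, $\dist_1(\gamma_n,\gamma)\to0$ gives $C^0$-closeness of the curves as subsets and convergence of their lengths; combined with convergence of the Gauss maps $N_n$ and all their derivatives along the curves, one recovers convergence of the unit tangents $\bT_n$ (up to the already-fixed orientation), hence of $\gamma_n'$, and then, differentiating the relation $N_n\circ\gamma_n \perp \gamma_n'$ and iterating, convergence of all higher derivatives of $\gamma_n$. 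From the $C^\infty$ convergence of $\gamma_n$ and of the surface Gauss maps one obtains $C^\infty$ convergence of the surface normals $\bN_n$ along $\gamma_n$, and then of $\bS_n=\bN_n\times\bT_n$; since the Darboux quantities $\kappa_{g,n},\kappa_{n,n},\tau_{r,n}$ are smooth expressions in $(\bT_n,\bS_n,\bN_n)$ and their derivatives, they converge in $C^\infty$ as well, which incidentally gives the convergence of $\norm{\kappa^{(n)}}_{L^\infty}$ needed to choose a uniform $\delta$.

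\emph{Step 2: Inverting the exponential map.} The coordinates are defined by $\bp=\exp_{\gamma_n(s)}\big(\rho[\cos\theta\,\bS_n(s)+\sin\theta\,\bN_n(s)]\big)$, i.e.\ $(s_n,\rho_n,\theta_n)$ is the inverse of the smooth map $F_n:(s,\rho,\theta)\mapsto\cos(\rho)\gamma_n(s)+\sin(\rho)[\cos\theta\,\bS_n(s)+\sin\theta\,\bN_n(s)]$ on $\overline{B}_\eps[\sigma_0;\cN\gamma_n]$. By Step 1 the maps $F_n$ converge in $C^\infty$ to $F$ on the common domain; since $F$ is a diffeomorphism onto $\overline{B}_\eps[\gamma]$ with $\inf|\det DF|>0$ there, the same lower bound holds for $F_n$ for $n$ large, and the inverse function theorem with uniform estimates (or a standard compactness argument applied to $DF_n^{-1}=(DF_n)^{-1}\circ F_n^{-1}$) yields $F_n^{-1}\to F^{-1}$ in $C^\infty$ on compact subsets of the interior, hence $\rho_n\to\rho$, $\theta_n\to\theta$ in $C^\infty_{\loc}$ away from $\gamma$, and $\tfrac{\ell}{\ell_n}s_n\to s$ in $C^\infty$ on all of $\overline{B}_\eps[\gamma]$ (the $s$-coordinate, being the projection onto the curve, extends smoothly across $\gamma$). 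On the annular region $\overline{B}_\eps[\partial S]\setminus B_{\eps'}[\partial S]$ the interior estimate applies uniformly, giving the stated uniform $C^\infty$ convergence of $\rho_n$ and $\theta_n$ there.

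\emph{Step 3: $C^0$ convergence of $\rho_n$ up to the curve.} The function $\rho_n(\bp)=\dist_{g_3}(\bp,\gamma_n)$ is $1$-Lipschitz on $\S^3$, and $|\rho_n(\bp)-\rho(\bp)|\le \dist_{g_3}(\bp,\gamma_n)$-to-$\dist_{g_3}(\bp,\gamma)$ comparison is bounded by the Hausdorff distance between $\gamma_n$ and $\gamma$, which is controlled by $\dist_1(\gamma_n,\gamma)\to0$. Hence $\rho_n\to\rho$ uniformly on $\overline{B}_\eps[\partial S]$, including at points of the curve where $\rho=0$.

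The main obstacle I expect is Step 2: promoting the $C^0$-type control coming from the distance $\dist_{\sS}$ to genuine $C^\infty$ convergence of the \emph{inverse} coordinate maps, uniformly up to the boundary of the tubular neighborhood, and in particular checking that the change of charts (the affine rescaling $\ell_n/\ell$ in the $s$-variable, and the behavior near $\rho=0$ where $\theta$ is singular) does not spoil the smooth convergence. This is largely bookkeeping once Step 1 is in place, but it is where the geometric definition of the metric \eqref{def:dist_subm}—specifically that it dominates all derivatives of the Gauss maps—is essential and must be used carefully.
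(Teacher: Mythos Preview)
Your proposal is correct and follows essentially the same approach as the paper's argument, which is only a brief sketch: the paper notes that $s_n$ is the projection onto $\gamma_n$, that $\rho_n$ is the geodesic distance to $\gamma_n$, and that $\theta_n$ is determined by $s_n,\rho_n$ together with the Seifert frame, all of which converge because the metric $\dist_{\sS}$ controls the derivatives of the Gauss maps. Your Steps 1--3 unpack exactly these points, with your inverse-function-theorem packaging in Step 2 being a more systematic version of the paper's coordinate-by-coordinate reasoning; in particular your Step 3 and the paper's treatment of $\rho_n$ coincide verbatim.
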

\noindent For the above proposition we chose a base point $\bp_0\in \partial S$ 
and picked $\bp_0^{(n)}\in\partial S^{(n)}$ such that
$$
	\delta_{S\times S^{(n)}}(\bp_0,\bp_0^{(n)})=\inf_{\bp\in\partial S^{(n)}}\,\delta_{S\times S^{(n)}}(\bp_0,\bp).
$$
The convergence of $s_n$ holds as it corresponds to the projection onto the curves $\gamma_n$. 
This convergence is strong, as the metric on Seifert surface measures the variations of the differential 
structure of the surfaces.
Similarly, the convergence of the function $\rho_n$ follows from the fact that it corresponds to 
the (geodesic) distance from the curves.
The convergence of the angle $\theta_n$ is a consequence of the convergence of $s_n,\rho_n$
and the convergence of the Seifert frame 
$$
	(\bT(\gamma^{(n)}(s_n)),\bS(\gamma^{(n)}(s_n)),\bN(\gamma^{(n)}(s_n)))
$$
(viewed as an orthonormal family in $\C^2$), where we have identified 
$\gamma^{(n)}:=\partial S^{(n)}$ with its arc length parametrization with base point
$\bp_0^{(n)}$.

\subsection{Convergence of Dirac operators}
We now discuss several results concerning the convergence of Dirac operators.
\begin{theorem}[Compactness]\label{thm:compactness}
	Fix $K \in \N$. Let
	$$
		(\uS^{(n)},\ua^{(n)}),(\uS,\ua) \in \sS_{\sK}^{(K)} \times \Tf^K
	$$
	such that $(\uS^{(n)},\ua^{(n)}) \to (\uS,\ua)$ in the $\dist$-metric \eqref{def:dist_metric}, 
	and set
	$$
		\bA^{(n)} = \sum_{k=1}^K 2\pi \alpha_k^{(n)} [S_k^{(n)}],
		\quad \bA = \sum_{k=1}^K 2\pi \alpha_k [S_k].
	$$
	Furthermore, let $(\psi^{(n)})_{n \in \N}$ be a sequence in $L^2(\S^3)^2$ such that:
	\begin{enumerate}
		\item $\psi^{(n)} \in \dom\big(\cD_{\bA^{(n)}}^{(-)}\big)$, $n \in \N$.
		
		\item $\big(\norm{\psi^{(n)}}_{\bA^{(n)}}\big)_{n \in \N}$ is uniformly bounded.
	\end{enumerate}
	Then, up to extraction of a subsequence, $\psi^{(n)} \rightharpoonup \psi \in \dom\big(\cD_{\bA}^{(-)}\big)$
	and
	$$
		\big(\psi^{(n)}, \cD_{\bA^{(n)}}^{(-)}\psi^{(n)}\big)
		\rightharpoonup \big(\psi,\cD_{\bA}^{(-)} \psi\big)
		\textrm{ in } L^2(\S^3)^2 \times L^2(\S^3)^2
	$$
	with
	$$
		\int \big|\cD_{\bA}^{(-)} \psi \big|^2
		\leq \liminf_{n\to+\infty} \int \big|\cD_{\bA^{(n)}}^{(-)} \psi^{(n)} \big|^2.
	$$
	In fact, one has strong convergence except in the cases when $\alpha_k^{(n)} \to 1^{-}$,
	where the loss of $L^2$-mass of $(\psi^{(n)})_n$ can only occur 
	through concentration onto such a knot
	$\gamma_k = \partial S_k$.
\end{theorem}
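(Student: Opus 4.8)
The plan is to reduce everything to local analysis in the tubular neighborhoods $B_\eps[\gamma_k]$, where after the gauge transformation by $\overline E_k$ of Proposition~\ref{prop:sa_link} and Proposition~\ref{prop:conv_coord} the operators are uniformly comparable to the fixed model operator $\cD_{\T_{\ell_k},\alpha_k}^{(-)}$ on $\T_{\ell_k}\times\R^2$. First I would use the energy identity \eqref{eq:dirac_gradient} (in its link version) to see that $\norm{\psi^{(n)}}_{\bA^{(n)}}$ bounded implies $\int|(\nabla\psi^{(n)})|_{\Omega_{\uS^{(n)}}}|^2$ is bounded, so that $\psi^{(n)}$ is bounded in $H^1$ away from $\gamma$. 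Fix a cutoff: write $\psi^{(n)} = (1-\chi_{\delta,\gamma})\psi^{(n)} + \chi_{\delta,\gamma}\psi^{(n)}$. The first piece lives (for $n$ large, by convergence of the Seifert surfaces, using Proposition~\ref{prop:conv_coord} to make the surfaces $S_k^{(n)}$ eventually contained in an arbitrarily small tube around $S_k$) in a fixed $H^1$ space with the fixed phase-jump condition across a fixed surface; on this piece Rellich gives strong $L^2$-convergence along a subsequence and weak $H^1$-convergence of the gradient, hence the limit $\psi$ satisfies the phase-jump condition across $S_k$ away from $\gamma$ and the corresponding gradient bound.

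For the singular piece near each $\gamma_k$ I would pass to the coordinates $(s_n,\rho_n,\theta_n)$, apply the gauge transformation $\overline E_k^{(n)}$ (Proposition~\ref{prop:sa_link}: this changes the operator only by the bounded multiplication term $\tfrac{c_k}{h_k}\sigma(\bT_k^\flat)$), and use \eqref{eq:D_max_close_knot}--\eqref{eq:D_gamma_in_T} to write the local operator as $\wtd + \cE_1 + \cE_0$ in the model coordinates. The error terms $\cE_0 f^{(n)}, (\cE_1 f^{(n)})|_{\theta\neq0}$ are controlled by Lemma~\ref{lem:reg_rho_dmax} (which shows $\rho\nabla_X\psi^{(n)}\in L^2$ with norm bounded by the graph norm), and by Proposition~\ref{prop:conv_coord} the coefficients of $\cE_0,\cE_1$ and of $\wtd$ converge in $C^0$ to their limits. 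So $f^{(n)}$ is bounded in $\dom(\cD_{\T_{\ell_k},\alpha_k^{(n)}}^{(\max)})$; using the orthogonal splitting \eqref{eq:model_decomp}, $f^{(n)} = f_0^{(n)} + f_{\sing}^{(n)}(\ul^{(n)})$, the bound \eqref{eq:sing_contr} gives $\ul^{(n)}$ bounded in $\ell_2(\T_{\ell_k}^*)$, hence (up to subsequence) $\ul^{(n)}\rightharpoonup\ul$ weakly in $\ell_2$; by \eqref{eq:sing_psi_la} this yields a candidate singular part $f_{\sing}(\ul)$ for the limit. Since the singular component $(f^{(n)}_{\sing})_+\equiv 0$ for the $(-)$-extension, the limit also has $(f_{\sing})_+\equiv 0$, so $\psi\in\dom(\cD_{\bA}^{(-)})$. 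Convergence of $\cD_{\bA^{(n)}}^{(-)}\psi^{(n)}\rightharpoonup\cD_{\bA}^{(-)}\psi$ then follows by testing against smooth compactly supported spinors in $\Omega_{\uS}$ and using that $\cD^{(-)}$ acts as the free Dirac operator there, combined with the weak $H^1_{\loc}$-convergence away from $\gamma$; lower semicontinuity of the $L^2$-norm gives the $\liminf$ inequality.

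The main obstacle is the last assertion: strong $L^2$-convergence, and the identification of the only mechanism of mass loss as concentration onto a $\gamma_k$ with $\alpha_k^{(n)}\to 1^-$. Here I would argue that mass can only escape either to spatial infinity (impossible, $\S^3$ compact) or by concentrating on the measure-zero set $\gamma=\cup_k\gamma_k$. Concentration on $\gamma_k$ is carried entirely by the singular part $f_{\sing}^{(n)}(\ul^{(n)})$, whose $L^2$-mass is $\big(\int_0^\infty|K_{\alpha_k^{(n)}}(r)|^2 r\,dr\big)\sum_j|\lambda_j^{(n)}|^2/(1+j^2)$ by \eqref{eq:sing_psi_la}. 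The weight $\int_0^\infty|K_{\alpha}(r)|^2 r\,dr$ is finite and continuous for $\alpha\in(0,1)$ but \emph{blows up as $\alpha\to 1^-$} (because $K_\alpha(r)\sim \tfrac12\Gamma(\alpha)(r/2)^{-\alpha}$ near $0$, and $r^{-2\alpha}\cdot r$ is integrable at $0$ iff $\alpha<1$); conversely for $\alpha_k^{(n)}$ staying away from $1$ the weight is uniformly bounded and the graph-norm bound \eqref{eq:sing_contr} forces $\ul^{(n)}$ bounded in $\ell_2$ with the $\ell_2$-norm, not merely in $\hat H^{-1}$, so by the Rellich-type embedding $\ell_2\hookrightarrow\hat H^{-1}$ being compact one upgrades weak to strong convergence of $\ul^{(n)}$ and hence of $f_{\sing}^{(n)}$ in $L^2$; combined with the Rellich argument for the regular part this gives strong $L^2$-convergence of $\psi^{(n)}$. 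Only when some $\alpha_k^{(n)}\to 1^-$ does the diverging weight allow a bounded graph norm with $\sum_j|\lambda_j^{(n)}|^2$ unbounded, permitting a nonzero amount of $L^2$-mass to concentrate on $\gamma_k$ in the limit — which is exactly the stated exceptional case.
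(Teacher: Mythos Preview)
Your overall strategy matches the paper's: localize via a partition of unity, use Rellich away from the link, and near each $\gamma_k$ pass to the model operator and split into regular and singular parts. The analysis of the singular part and of the exceptional case $\alpha_k^{(n)}\to 1^-$ is essentially the paper's argument. There are, however, two genuine gaps in your plan.

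First, your use of the energy identity~\eqref{eq:dirac_gradient} is incorrect as stated. That identity holds only on $\dom(\cD_{\bA}^{(\min)}) = H^1_{\bA}(\S^3)^2$, \emph{not} on all of $\dom(\cD_{\bA}^{(-)})$: the singular part $f_{\sing}(\ul)$ behaves like $K_{\alpha}(r)\sim r^{-\alpha}$ near the knot and is not in $H^1(\Omega_{\uS})$. So the intermediate claim that $\int|(\nabla\psi^{(n)})|_{\Omega_{\uS^{(n)}}}|^2$ is uniformly bounded is false. The correct order is: first cut off with $(1-\chi_{\delta,\gamma})$, observe via Lemma~\ref{lem:charac_D_max} that the localized piece lies in the minimal domain, and \emph{then} apply~\eqref{eq:dirac_gradient} to that piece. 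Your final conclusion ``$\psi^{(n)}$ bounded in $H^1$ away from $\gamma$'' is right, but the justification needs this reordering.

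Second, and more seriously, the ``away from the knots'' piece does \emph{not} live in a fixed $H^1$ space with a fixed phase-jump condition. The phase jump of $(1-\chi_{\delta,\gamma})\psi^{(n)}$ is across the \emph{moving} surfaces $S_k^{(n)}$, so for different $n$ these functions lie in different spaces $H^1_{\bA^{(n)}}(\S^3)^2$, and you cannot invoke Rellich directly. The paper fixes this by introducing, for each cell of the partition, an explicit phase-removing function $E_{\delta,\underline{a}}^{(n)}$ (see~\eqref{def:n_phase_remove}) which is locally constant on either side of the relevant $S_k^{(n)}$'s; multiplying by $\overline{E}_{\delta,\underline{a}}^{(n)}$ produces a sequence genuinely bounded in the fixed space $H^1(\S^3)^2$, where Rellich applies. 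This step is also what yields the phase-jump condition for the limit $\psi$: it comes from the a.e.\ convergence $E_{\delta,\underline{a}}^{(n)}\to E_{\delta,\underline{a}}$, not from any ``fixed surface'' argument. Your plan is missing this gauge-removal mechanism entirely; without it the compactness argument away from the link does not go through.
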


\begin{theorem}[Strong resolvent convergence]\label{thm:str_res_cont}
	Fix $K \in \N$. Let
	$$
		(\uS^{(n)},\ua^{(n)}),(\uS,\ua) \in \sS_{\sK}^{(K)} \times \Tf^K
	$$
	such that $(\uS^{(n)},\ua^{(n)}) \to (\uS,\ua)$ in the $\dist$-metric \eqref{def:dist_metric}, 
	and set
	$$
		\bA^{(n)} = \sum_{k=1}^K 2\pi \alpha_k^{(n)} [S_k^{(n)}],
		\quad \bA = \sum_{k=1}^K 2\pi \alpha_k [S_k].
	$$
	Then $\cD_{\bA^{(n)}}^{(-)}$ converges to $\cD_{\bA}^{(-)}$ in the strong resolvent sense.
\end{theorem}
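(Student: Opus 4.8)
The plan is to deduce strong resolvent convergence directly from the compactness result Theorem~\ref{thm:compactness}, using as the only extra ingredient the elementary energy identity for self-adjoint operators. Recall that it suffices to prove, for every $\phi\in L^2(\S^3)^2$, that $(\cD_{\bA^{(n)}}^{(-)}\mp i)^{-1}\phi\to(\cD_{\bA}^{(-)}\mp i)^{-1}\phi$ in $L^2(\S^3)^2$ — one of the standard characterizations of strong resolvent convergence for self-adjoint operators. The cases $z=i$ and $z=-i$ are symmetric, so I will only describe $z=i$.

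First I would fix $\phi$, set $\psi^{(n)}:=(\cD_{\bA^{(n)}}^{(-)}-i)^{-1}\phi\in\dom(\cD_{\bA^{(n)}}^{(-)})$, and record that, since $\cD_{\bA^{(n)}}^{(-)}$ is self-adjoint, $\cip{\cD_{\bA^{(n)}}^{(-)}\psi^{(n)}}{\psi^{(n)}}_{L^2}\in\R$, whence
\[
	\norm{\phi}_{L^2}^2=\norm{(\cD_{\bA^{(n)}}^{(-)}-i)\psi^{(n)}}_{L^2}^2
	=\norm{\psi^{(n)}}_{L^2}^2+\norm{\cD_{\bA^{(n)}}^{(-)}\psi^{(n)}}_{L^2}^2=\norm{\psi^{(n)}}_{\bA^{(n)}}^2 .
\]
Thus $(\norm{\psi^{(n)}}_{\bA^{(n)}})_n$ is uniformly bounded (in fact constant), so hypotheses (1)--(2) of Theorem~\ref{thm:compactness} hold. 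Applying that theorem, along a subsequence $\psi^{(n)}\rightharpoonup\psi\in\dom(\cD_{\bA}^{(-)})$ with $(\psi^{(n)},\cD_{\bA^{(n)}}^{(-)}\psi^{(n)})\rightharpoonup(\psi,\cD_{\bA}^{(-)}\psi)$ weakly in $L^2(\S^3)^2\times L^2(\S^3)^2$. Passing to the weak limit in $(\cD_{\bA^{(n)}}^{(-)}-i)\psi^{(n)}=\phi$ gives $(\cD_{\bA}^{(-)}-i)\psi=\phi$, i.e.\ $\psi=(\cD_{\bA}^{(-)}-i)^{-1}\phi$; note this pins down the weak limit uniquely, independently of the extracted subsequence.

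Next I would upgrade weak convergence to strong convergence. The limit operator satisfies the same energy identity, $\norm{\psi}_{L^2}^2+\norm{\cD_{\bA}^{(-)}\psi}_{L^2}^2=\norm{\phi}_{L^2}^2$. Along any sub-subsequence on which $\norm{\psi^{(n)}}_{L^2}^2\to c_1$ and $\norm{\cD_{\bA^{(n)}}^{(-)}\psi^{(n)}}_{L^2}^2\to c_2$ the first step gives $c_1+c_2=\norm{\phi}_{L^2}^2=\norm{\psi}_{L^2}^2+\norm{\cD_{\bA}^{(-)}\psi}_{L^2}^2$, while weak lower semicontinuity of the norm forces $c_1\ge\norm{\psi}_{L^2}^2$ and $c_2\ge\norm{\cD_{\bA}^{(-)}\psi}_{L^2}^2$; hence $c_1=\norm{\psi}_{L^2}^2$. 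So $\norm{\psi^{(n)}}_{L^2}\to\norm{\psi}_{L^2}$, and weak convergence together with norm convergence yields $\psi^{(n)}\to\psi$ strongly in $L^2(\S^3)^2$ (and similarly $\cD_{\bA^{(n)}}^{(-)}\psi^{(n)}\to\cD_{\bA}^{(-)}\psi$). Since the limit $(\cD_{\bA}^{(-)}-i)^{-1}\phi$ is the same for every subsequence, the standard ``every subsequence has a further subsequence converging to the common limit'' argument gives convergence of the full sequence. Running the same argument at $z=-i$ finishes the proof.

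\textbf{Main obstacle.} The genuinely delicate point is the one isolated in Theorem~\ref{thm:compactness}: in the degenerate regime $\alpha_k^{(n)}\to1^-$ a general bounded sequence $(\psi^{(n)})_n$ may lose $L^2$-mass by concentrating on the knot $\gamma_k$, so there Theorem~\ref{thm:compactness} supplies only \emph{weak} $L^2$-convergence. What rescues the argument is that the resolvent sequence $\psi^{(n)}=(\cD_{\bA^{(n)}}^{(-)}-i)^{-1}\phi$ satisfies the \emph{exact} identity $\norm{\psi^{(n)}}_{\bA^{(n)}}^2=\norm{\phi}_{L^2}^2$ of the first step: this rigidity forces the limit to retain all of the graph-norm mass, so no concentration can take place, and no further regularization of the Seifert surfaces or fluxes is required beyond Theorem~\ref{thm:compactness}. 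Everything else — passing to weak limits, the diagonal subsequence argument, and the reduction to a single resolvent point $z=\pm i$ — is routine.
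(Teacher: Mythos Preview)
Your argument is correct and takes a genuinely different route from the paper. The paper proves Theorem~\ref{thm:str_res_cont} via the graph characterization in Lemma~\ref{lem:char_sres_conv}: for each $\psi\in\dom(\cD_{\bA}^{(-)})$ it \emph{constructs} an explicit approximating sequence $\psi^{(n)}\in\dom(\cD_{\bA^{(n)}}^{(-)})$ by localizing with the partition of unity \eqref{def:part_unity}, removing phase jumps by hand (the functions $E_{\delta,\underline{a}}^{(n)}$ and $E_k^{(n)}$), and near each knot rebuilding the regular and singular parts in the model coordinates. You instead deduce everything from Theorem~\ref{thm:compactness} together with the exact identity $\norm{\psi^{(n)}}_{\bA^{(n)}}^2=\norm{\phi}_{L^2}^2$ for the resolvent sequence; the weak-lower-semicontinuity squeeze then rules out any loss of mass, including in the delicate regime $\alpha_k^{(n)}\to 1^-$ where Theorem~\ref{thm:compactness} alone only guarantees weak convergence. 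Your approach is shorter and more conceptual, and it shows that strong resolvent convergence is a formal consequence of the compactness result plus self-adjointness. The paper's constructive proof, on the other hand, produces explicit approximants for arbitrary elements of the limiting domain, which is independently useful (for instance in the follow-up papers where one perturbs or regularizes the operators).
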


An important consequence of Theorem~\ref{thm:compactness} is the following result.

\begin{theorem}\label{thm:discrete_spectrum}
	For $(\uS,\ua) \in \sS_{\sK}^{(K)} \times \Tf^K$, the spectrum of $\cD_{\bA}^{(-)}$ is discrete and 
	the operator is Fredholm.
\end{theorem}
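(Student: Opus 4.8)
The plan is to deduce discreteness of the spectrum from the compactness statement in Theorem~\ref{thm:compactness} via the standard characterization of operators with compact resolvent. Concretely, I would show that the embedding of $\dom\big(\cD_{\bA}^{(-)}\big)$ equipped with the graph norm $\norm{\,\cdot\,}_{\bA}$ into $L^2(\S^3)^2$ is compact. To see this, take a sequence $(\psi^{(n)})_n$ in $\dom\big(\cD_{\bA}^{(-)}\big)$ with $\norm{\psi^{(n)}}_{\bA}$ uniformly bounded. Applying Theorem~\ref{thm:compactness} with the \emph{constant} sequence $(\uS^{(n)},\ua^{(n)}) = (\uS,\ua)$ — which trivially converges to $(\uS,\ua)$ in the $\dist$-metric — yields, after passing to a subsequence, $\psi^{(n)} \rightharpoonup \psi \in \dom\big(\cD_{\bA}^{(-)}\big)$ weakly in $L^2$, together with $\cD_{\bA}^{(-)}\psi^{(n)} \rightharpoonup \cD_{\bA}^{(-)}\psi$ weakly in $L^2$.

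The main obstacle is upgrading this weak convergence to \emph{strong} $L^2$-convergence of $(\psi^{(n)})_n$. The last sentence of Theorem~\ref{thm:compactness} is exactly what resolves this: strong convergence holds except when some flux sequence $\alpha_k^{(n)} \to 1^-$. But for the constant sequence all fluxes are fixed, $\alpha_k^{(n)} = \alpha_k \in \Tf$, so in particular no subsequence of $\alpha_k^{(n)}$ approaches $1$ from below in a nontrivial way — the only way the excluded behavior could occur is if $\alpha_k = 1 \sim 0$, in which case the $k$-th knot carries trivial flux and there is no singular behavior there at all (the operator is just the free Dirac operator near $\gamma_k$, cf.\ Proposition~\ref{prop:dmin_alpha_0}). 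Hence the exceptional case does not arise and $\psi^{(n)} \to \psi$ strongly in $L^2(\S^3)^2$. This shows that bounded sets in the graph norm are precompact in $L^2$, i.e. $\cD_{\bA}^{(-)}$ has compact resolvent.

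From compactness of the resolvent the conclusion is immediate: a self-adjoint operator (self-adjointness being Theorem~\ref{thm:dirac_sa_link}) with compact resolvent has purely discrete spectrum consisting of eigenvalues of finite multiplicity accumulating only at $\pm\infty$. For the Fredholm assertion, one observes that $\cD_{\bA}^{(-)}$ is self-adjoint with discrete spectrum, so $0$ is either in the resolvent set — in which case $\cD_{\bA}^{(-)}$ is boundedly invertible, hence Fredholm of index $0$ — or $0$ is an isolated eigenvalue of finite multiplicity; in the latter case $\ran\big(\cD_{\bA}^{(-)}\big)$ is closed (being the orthogonal complement of the finite-dimensional kernel, by self-adjointness) and both kernel and cokernel are finite-dimensional, so $\cD_{\bA}^{(-)}$ is Fredholm (again of index $0$, by self-adjointness). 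Thus $\cD_{\bA}^{(-)}$ is Fredholm in all cases. Finally, the corresponding statement for a general extension $\cD_{\bA}^{(\underline{e})}$ follows from Remark~\ref{rem:d_ext_choice}, which identifies it (after reorienting Seifert surfaces and adjusting fluxes) with an operator of the form $\cD_{\bA}^{(-)}$.
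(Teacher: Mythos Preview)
Your proof is correct and follows essentially the same approach as the paper: both arguments apply Theorem~\ref{thm:compactness} with the constant data $(\uS^{(n)},\ua^{(n)})=(\uS,\ua)$ to conclude that graph-norm bounded sequences are $L^2$-precompact. The paper phrases this via Weyl's criterion for the essential spectrum, whereas you frame it as compactness of the resolvent, and the Fredholm arguments differ only cosmetically (the paper invokes the Banach closed range theorem together with $\coker\cong\ker$ from self-adjointness), but these are equivalent formulations of the same underlying reasoning.
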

\begin{proof}
	To prove the first statement, we use Weyl's criterion \cite[Theorem~VII.12]{ReedSimon1}
	for the characterization of the essential spectrum:
	$\lambda \in \spec_{\ess}(\cD_{\bA}^{(-)})$ if and only if there exists
	a $L^2$-normalized sequence $(\psi^{(n)})_n$ such that
	$$
		\cip{\psi^{(n)}}{\psi^{(m)}}_{L^2} = \delta_{n,m},
	$$
	where $\delta_{n,m}$ is the Kronecker symbol, with
	$$
		\lim_{n\to\infty}\norm{(\cD_{\bA}^{(n)}-\lambda)\psi^{(n)}}_{L^2} = 0.
	$$
	For the operator $\cD_{\bA}^{(-)}$, this is impossible by Theorem~\ref{thm:compactness}, 
	since any sequence $(\psi^{(n)})_{n}$ in $\dom\big(\cD_{\bA}^{(-)}\big)$ 
	which is $\norm{\,\cdot\,}_{\bA}$-bounded is $L^2(\S^3)^2$-convergent up to extraction.
	
	By the same argument the kernel of $\cD_{\bA}^{(-)}$ is finite dimensional. Furthermore,
	from the Banach closed range theorem and the self-adjointness it follows that
	$$
		\coker\big(\cD_{\bA}^{(-)}\big) \cong \ker\big(D_{\bA}^{(-)}\big),
	$$
	and the operator is thus Fredholm.
\end{proof}


\begin{corollary}[Convergence of eigenfunctions]\label{cor:corres_spectrum}
	Fix $K \in \N$, and let $\uS^{(n)} \to \uS$ in $\sS_{\sK}^{(K)}$,
	$\ua^{(n)} \to \ua$ in $[0,1)^K$. 
	If $(\psi^{(n)})_{n \in \N}$ is a sequence of $L^2$-normalized functions 
	with 
	$$
		\psi^{(n)} \in \ker\big(\cD_{\bA^{(n)}}^{(-)} -\lambda^{(n)}\big)
	$$ 
	and $\lim_{n\to+\infty}\lambda^{(n)}=\lambda$,
	then up to extraction of a subsequence, $(\psi^{(n)})_{n \in \N}$ converges to a function
	$$
		\psi \in \ker\big(\cD_{\bA}^{(-)} - \lambda\big)
	$$ 
	in $L^2(\S^3)^2$.
\end{corollary}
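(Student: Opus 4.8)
The plan is to deduce everything from the compactness statement, Theorem~\ref{thm:compactness}. First I would observe that the eigenvalue equation $\cD_{\bA^{(n)}}^{(-)}\psi^{(n)} = \lambda^{(n)}\psi^{(n)}$ together with $\norm{\psi^{(n)}}_{L^2}=1$ gives
\[
	\norm{\psi^{(n)}}_{\bA^{(n)}}^2 = \norm{\psi^{(n)}}_{L^2}^2 + \norm{\cD_{\bA^{(n)}}^{(-)}\psi^{(n)}}_{L^2}^2 = 1 + |\lambda^{(n)}|^2,
\]
which is uniformly bounded because $\lambda^{(n)} \to \lambda$. Since moreover $\uS^{(n)} \to \uS$ in $\sS_{\sK}^{(K)}$ and $\ua^{(n)} \to \ua$ in $[0,1)^K$ implies $(\uS^{(n)},\ua^{(n)}) \to (\uS,\ua)$ in the $\dist$-metric \eqref{def:dist_metric}, the hypotheses of Theorem~\ref{thm:compactness} are met. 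Applying it, after passing to a subsequence we obtain $\psi \in \dom\big(\cD_{\bA}^{(-)}\big)$ with
\[
	\psi^{(n)} \rightharpoonup \psi, \qquad \cD_{\bA^{(n)}}^{(-)}\psi^{(n)} \rightharpoonup \cD_{\bA}^{(-)}\psi
\]
weakly in $L^2(\S^3)^2$.

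Next I would upgrade the weak convergence of $\psi^{(n)}$ to strong convergence. By the last statement of Theorem~\ref{thm:compactness}, the only obstruction to strong $L^2$-convergence is a loss of mass through concentration onto a knot $\gamma_k$ along a sequence with $\alpha_k^{(n)} \to 1^{-}$. But here $\alpha_k^{(n)} \to \alpha_k \in [0,1)$ in the Euclidean distance, so the $\alpha_k^{(n)}$ are eventually bounded away from $1$ and this exceptional case cannot arise. Hence $\psi^{(n)} \to \psi$ strongly in $L^2(\S^3)^2$; in particular $\norm{\psi}_{L^2} = \lim_n \norm{\psi^{(n)}}_{L^2} = 1$, so $\psi \neq 0$.

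Finally I would identify the limiting eigenvalue equation. On the one hand $\cD_{\bA^{(n)}}^{(-)}\psi^{(n)} = \lambda^{(n)}\psi^{(n)} \to \lambda\psi$ strongly in $L^2$, since $\lambda^{(n)}\to\lambda$ and $\psi^{(n)}\to\psi$ strongly; on the other hand $\cD_{\bA^{(n)}}^{(-)}\psi^{(n)} \rightharpoonup \cD_{\bA}^{(-)}\psi$ weakly. Uniqueness of weak limits gives $\cD_{\bA}^{(-)}\psi = \lambda\psi$, i.e. $\psi \in \ker\big(\cD_{\bA}^{(-)} - \lambda\big)$, which is the claim.

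The one point that needs care — and the only genuine obstacle — is ruling out the loss of $L^2$-mass, so that the limit $\psi$ is a bona fide (non-zero) eigenfunction and not simply $0$. This is exactly why the corollary is stated with $\ua^{(n)} \to \ua$ in $[0,1)^K$ rather than merely in $\Tf^K$: it excludes the degenerate regime $\alpha_k^{(n)}\to 1^{-}$ in which Theorem~\ref{thm:compactness} only provides weak convergence. Everything else is a routine combination of the eigenvalue equation with the already-established compactness result.
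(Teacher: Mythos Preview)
Your proof is correct and follows exactly the approach implied by the paper, which states the corollary without explicit proof as a direct consequence of Theorem~\ref{thm:compactness}. You have correctly identified the crucial role of the hypothesis $\ua^{(n)}\to\ua$ in $[0,1)^K$ (rather than in $\Tf^K$) in ruling out the mass-loss scenario.
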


Armed with the above, it is now simple to prove a non-existence for zero modes for small values of $\ua$.
When $\ua = 0$, we have $\cD_{\bA}^{(-)} = -i\bsigma(\nabla)$ by Proposition~\ref{prop:dmin_alpha_0},
and from \eqref{eq:dirac_gradient} we know that the kernel of $-i\bsigma(\nabla)$ is trivial.
Combining this with a contradiction argument involving Corollary~\ref{cor:corres_spectrum} gives the following result.
\begin{corollary}[Non-existence of zero modes for small fluxes]\label{coro:no_zero_modes_for_small_fluxes}
	Fix $K\in\N$, let $\gamma \subset \S^3$ be any $K$-link with 
	Seifert surface $\uS\in\sS_{\sK}^{(K)}$, and set 
	$\bA=\sum_{k=1}^{K}2\pi\alpha_k[S_k]$. 
	Then there exists $0<\alpha_0(\gamma)<1$ such that for any $\ua\in[0, \alpha_0(\gamma)]^K$, 
	the kernel $\ker\cD_{\bA}^{(-)}$ is trivial.	
\end{corollary}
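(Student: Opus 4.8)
The plan is to argue by contradiction, combining the convergence-of-eigenfunctions result (Corollary~\ref{cor:corres_spectrum}) with the fact that the free Dirac operator on $\S^3$ has trivial kernel, which is exactly what Proposition~\ref{prop:egalite_min_dom} and Proposition~\ref{prop:dmin_alpha_0} deliver.

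Suppose the claim fails. Then for every integer $m \geq 2$, applying the negation of the statement with $\alpha_0 = 1/m$, there is a flux vector $\ua^{(m)} \in [0,1/m]^K$ such that, writing $\bA^{(m)} := \sum_{k=1}^K 2\pi \alpha_k^{(m)}[S_k]$ with the fixed Seifert surfaces $\uS$ of $\gamma$, the kernel $\ker\big(\cD_{\bA^{(m)}}^{(-)}\big)$ is non-trivial. Choose $L^2$-normalized $\psi^{(m)} \in \ker\big(\cD_{\bA^{(m)}}^{(-)}\big)$, so that $\cD_{\bA^{(m)}}^{(-)}\psi^{(m)} = 0$.

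Now take the constant sequence $\uS^{(m)} = \uS$ and note that $\ua^{(m)} \to 0$ in $[0,1)^K$, hence $(\uS^{(m)},\ua^{(m)}) \to (\uS,0)$ in the $\dist$-metric \eqref{def:dist_metric}. Set $\lambda^{(m)} := 0 \to 0 =: \lambda$. By Corollary~\ref{cor:corres_spectrum}, up to extraction of a subsequence $\psi^{(m)} \to \psi$ in $L^2(\S^3)^2$ with $\psi$ in the kernel of the operator associated to the zero potential $\bA = 0$. By Proposition~\ref{prop:dmin_alpha_0} this operator is the free Dirac operator $-i\bsigma(\nabla)$ with domain $H^1(\S^3)^2$. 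Since the convergence is in $L^2(\S^3)^2$, we have $\norm{\psi}_{L^2} = \lim_{m} \norm{\psi^{(m)}}_{L^2} = 1$, so $\psi$ is a non-trivial zero mode. But applying the energy identity \eqref{eq:dirac_gradient} to $\psi \in H^1(\S^3)^2$ gives
\[
	0 = \int |{-i}\bsigma(\nabla)\psi|^2 = \int |\nabla \psi|^2 + \frac{3}{2}\int |\psi|^2 \geq \frac{3}{2},
\]
a contradiction. This proves the existence of $\alpha_0(\gamma) \in (0,1)$ as claimed.

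The only point requiring care is that no $L^2$-mass is lost in the limit, since otherwise $\psi$ could vanish and the argument would break down. This is where one uses the refined statement of Theorem~\ref{thm:compactness}: mass can escape only by concentration onto a knot $\gamma_k$ when $\alpha_k^{(m)} \to 1^{-}$, whereas here $\alpha_k^{(m)} \to 0$ for every $k$; hence the convergence is genuinely strong and $\psi \neq 0$. Thus the ``hard part'' is entirely absorbed into the compactness machinery established earlier, and the argument itself is a short contradiction. The threshold $\alpha_0(\gamma)$ obtained this way is not explicit; a quantitative version could in principle be extracted by tracking the constants in the compactness estimates, but this is not needed here.
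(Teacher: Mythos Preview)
Your argument is correct and is exactly the approach the paper sketches: a contradiction argument combining Corollary~\ref{cor:corres_spectrum} with Proposition~\ref{prop:dmin_alpha_0} and the energy identity \eqref{eq:dirac_gradient}. Your extra paragraph explaining why no $L^2$-mass is lost is a useful elaboration, though note that Corollary~\ref{cor:corres_spectrum} already restricts to $\ua^{(n)}\to\ua$ in $[0,1)^K$ and asserts strong $L^2$-convergence, so this point is already built into its hypotheses.
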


\subsection{Proofs of Theorems~\ref{thm:compactness}~\&~\ref{thm:str_res_cont}}
We now make some preparations before giving the proofs of the main theorems of this section.
\subsubsection{Localization}\label{sec:loc}
Let $\eps>0$ such that $B_{\eps}[\gamma_k] \cap B_{\eps}[\gamma_{k'}] = \emptyset$ for $k \neq k'$
and the coordinates $(s_k,\rho_k,\theta_k)$ are well-defined on $B_{\eps}[\gamma_k]$.
Note that by the continuity of the coordinates (see Proposition~\ref{prop:conv_coord}), 
the same will be true for for the links $\gamma^{(n)}$ for $n$ large enough.

For $1 \leq k \leq K$, recall the function $\chi_{\delta,\gamma_k^{(n)}}$ 
as in \eqref{eq:chi_loc_curve}, with 
$$
	0<\delta< \eps \min\left\{1, \left(\sup_{k,n}\norm{\kappa_k^{(n)}}_{L^{\infty}} 
	+ \sqrt{\eps + \sup_{k,n}\norm{\kappa_k^{(n)}}_{L^{\infty}}}\right)^{-1}\right\}.
$$
Furthermore, set
$$
	\chi_{\delta,S_k^{(n)}}(\bp) 
	:= \chi\left(4\tfrac{\dist_{g_3}(\bp,S_k^{(n)})}{\delta}\right)\left(1-\chi_{\delta,\gamma_{k}^{(n)}}(\bp)\right),
$$
which has support close to the Seifert surface $S_k^{(n)}$, but away from the knot $\gamma_k^{(n)}$.
The remainder is then defined as
$$
	\chi_{\delta,R_k^{(n)}}(\bp) := 1-\chi_{\delta,S_k^{(n)}}(\bp) - \chi_{\delta,\gamma_k^{(n)}}(\bp),
$$
which constitutes a partition of unity subordinate to $S_k^{(n)}$.
The partition of unity for the entire link is then given by
\begin{align}\label{def:part_unity}
	1 &= \prod_{k=1}^K\left(\chi_{\delta,\gamma_k^{(n)}}(\bp) + \chi_{\delta,S_k^{(n)}}(\bp)
	+\chi_{\delta,R_k^{(n)}}(\bp) \right)\nn\\
	&= \sum_{\underline{a} \in \{1,2,3\}^K}\chi_{\delta,\underline{a}}^{(n)}(\bp)
	=\sum_{k=1}^K\chi_{\delta,\gamma_k^{(n)}}(\bp)+\sum_{\underline{a} \in \{2,3\}^K}\chi_{\delta,\underline{a}}^{(n)}(\bp),
\end{align}
where $\chi_{\delta,\underline{a}}^{(n)}$ is the product $\prod_{k=1}^K\chi_{\delta,X_k^{(n)}}$, with 
$$
	X_k^{(n)} = \left\{\begin{array}{lr}
	\gamma_k^{(n)}, &a_k = 1,\\
	S_{k}^{(n)}, &a_k = 2,\\
	R_k^{(n)}, &a_k = 3.
	\end{array}\right.
$$

\subsubsection{Proof of Theorem~\ref{thm:compactness}}\label{sec:proof_compact}
	To keep notation simple, we will drop the $(-)$ superscript for $\cD_{\bA}^{(-)}$.

	By the Banach-Alaoglu theorem the sequences converge weakly,
	$$
		\psi^{(n)} \rightharpoonup \psi, \quad \cD_{\bA^{(n)}}\psi^{(n)} \rightharpoonup w \textrm{ in } L^2(\S^3)^2.
	$$
	We will first show that $(\psi^{(n)})_n$ converges strongly to a $\psi$ in $L^2(\S^3)^2$ and 
	$\psi$ is non-zero (except when $\alpha_k^{(n)} \to 1^{-}$); the closedness is shown
	at the very end.
	To do so, we decompose $\psi^{(n)}$ with respect to the partition of unity \eqref{def:part_unity} and
	analyze each group separately.
	
	\medskip
	\noindent\textit{Away from the knots:}
	This corresponds to the functions coming from the last term in \eqref{def:part_unity}.
	Fix $\underline{a} \in \{2,3\}^K$, and write $\{1,\dots,K\} = J_2 \cup J_3$ where $a_k=2, k \in J_2$ and 
	$a_k=3, k \in J_3$.
	If $J_2$ is empty, then the sequence $(\chi_{\delta,\underline{a}}^{(n)}\psi^{(n)})_n$ has no phase jump 
	and is $H^1(\S^3)^2$-bounded. It thus converges strongly in $L^2(\S^3)^2$ up to extraction.
	
	Assume now that $J_2$ contains at least one element. 
	For each $1 \leq k \leq K$, the Seifert surface $S_k^{(n)}$ splits $\supp \chi_{\delta,S_{k}^{(n)}}$ into two 
	disconnected regions $O_{k,\pm}^{(n)}$, which correspond to the regions above and below the Seifert surface.
	We remove the phase-jump with the help of the function
	\begin{equation}\label{def:n_phase_remove}
		E_{\delta,\underline{a}}^{(n)}:=\prod_{k \in J_2}\exp\left[2\pi i \alpha_k^{(n)}\mathds{1}_{O_{k,-}^{(n)}}\right].
	\end{equation}
	The function $\overline{E}_{\delta,\underline{a}}^{(n)}\chi_{\delta,\underline{a}}^{(n)}\psi_n$ then contains no phase jump and
	is again $H^1(\S^3)^2$-bounded, hence converges up to extraction of a subsequence.

	\subsubsection*{Close to the knots:}
	We now turn to the sequence $(\chi_{\delta,\gamma_k^{(n)}}\psi^{(n)})_n$.
	We need the following lemma, whose proof is relegated to the Appendix.
	\begin{lemma}\label{lem:control_Ngr}
		Let $\gamma$ be a knot of length $\ell$ with Seifert surface $S$,
		$0<\alpha<1$ and $\delta>0$ small enough. Given any 
		$\psi\in \dom\big(\mathcal{D}_{\bA}^{(-)}\big)$,
		we write
		$$
			\chi_{\delta,\gamma}\psi=f \cdot \xi = f_+\xi_++ f_{-}\xi_{-}.
		$$
		Then there exists a constant $C=C(\delta,\alpha,S)$ such that
		$$
			\norm{f}_{\T_{\ell}} 
			\leq C\norm{\chi_{\delta,\gamma}\psi}_{\bA}
			\leq C(1+\delta^{-1})\norm{\psi}_{\bA},
		$$
		where $\norm{\,\cdot\,}_{\T_{\ell}}, \norm{\,\cdot\,}_{\bA}$ denote the graph 
		norms of the operators $\cD_{\T_{\ell},\alpha}^{(-)}, \cD_{\bA}^{(-)}$, respectively.
		Moreover, the bound $C(\delta,\alpha,S)$ depends continuously on $S\in\sS_{\sK}$.
	\end{lemma}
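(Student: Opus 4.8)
The plan is to re-run the argument from the proof of Theorem~\ref{thm:dirac_sa_knot}, this time tracking constants. Set $\wt{\psi}:=\chi_{\delta,\gamma}\psi$; it is supported in $\{\rho\le\delta\}$ and again lies in $\dom\big(\cD_\bA^{(-)}\big)$, the defining domain condition being stable under multiplication by the scalar cutoff $\chi_{\delta,\gamma}$. The second inequality follows at once from the Leibniz rule $\cD_\bA^{(-)}(\chi\,\wt{\psi})=\chi\,\cD_\bA^{(-)}\wt{\psi}-i\bsigma(\d\chi)\wt{\psi}$ for a scalar cutoff: since $\bsigma(\d\chi_{\delta,\gamma})$ is fibrewise $|\d\chi_{\delta,\gamma}|$ times an involution and $|\d\chi_{\delta,\gamma}|=|\chi'(\rho/\delta)|\,\delta^{-1}\le\norm{\chi'}_{L^\infty}\delta^{-1}$ (recall $|\d\rho|=1$), one gets $\norm{\wt{\psi}}_\bA\le C(1+\delta^{-1})\norm{\psi}_\bA$ with an absolute constant $C$. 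It therefore suffices to prove $\norm{f}_{\T_\ell}\le C(\delta,\alpha,S)\norm{\wt{\psi}}_\bA$. Exactly as in the proof of Theorem~\ref{thm:dirac_sa_knot} (after a local gauge transformation making $S\cap B_\delta[\gamma]=\{\theta=0\}$), $f$ extends by zero to an element of $\dom\big(\cD_{\T_\ell,\alpha}^{(-)}\big)$ with $\cD_{\T_\ell,\alpha}^{(-)}f=\wtd f$ almost everywhere.

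First I would compare the relevant volume forms. By \eqref{eq:pullb_volform}, $\exp^{*}(\vol_{g_3})=h\sin\rho\,\d s\wedge\d\rho\wedge\d\theta$, while the flat volume form on $\T_\ell\times\R^2$ in polar coordinates is $\rho\,\d s\wedge\d\rho\wedge\d\theta$. On $\{\rho\le\delta\}$ the choice of $\delta$ yields $h\ge 1-\eps>0$ (the footnote after the definition of $\delta$), and $\sin\rho\ge\tfrac{2}{\pi}\rho$ on $[0,\pi/2]$, so $\rho\le\tfrac{\pi}{2(1-\eps)}\,h\sin\rho$ there. Because $\xi_+,\xi_-$ are pointwise orthonormal, $|\wt{\psi}|^2=|f_+|^2+|f_-|^2$, and this comparison, applied to $|\wt{\psi}|^2$ and (below) to $|\wtd f|^2$, gives $\norm{f}_{L^2(\T_\ell\times\R^2)}\le C(\eps)\norm{\wt{\psi}}_{L^2(\S^3)}$ and reduces the estimate of $\norm{\cD_{\T_\ell,\alpha}^{(-)}f}_{L^2(\T_\ell\times\R^2)}=\norm{\wtd f}_{L^2(\T_\ell\times\R^2)}$ to that of the $\vol_{g_3}$-$L^2$ norm of $\wtd f$ over $B_\delta[\gamma]$.

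For the latter I would use \eqref{eq:D_max_close_knot} and \eqref{eq:D_gamma_in_T}, which give $\wtd f=\big(\cip{\xi_+}{\cD_\bA^{(-)}\wt{\psi}},\,\cip{\xi_-}{\cD_\bA^{(-)}\wt{\psi}}\big)^{\rT}-\cE_1 f-\cE_0 f$ a.e. on $B_\delta[\gamma]$. The first term has pointwise length $|\cD_\bA^{(-)}\wt{\psi}|$; $\cE_0$ has bounded coefficients (the connection forms $M_\xi$ are continuous, hence bounded on $\overline{B_\delta[\gamma]}$), so its $\vol_{g_3}$-$L^2$ norm is $\le C(\delta,S)\norm{\wt{\psi}}_{L^2}$. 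The delicate term is $\cE_1 f$, since the model graph norm does not control $\partial_\theta f$ on its own; here I would invoke \eqref{eq:E1_comp_1} and \eqref{eq:E1_comp_2}, which rewrite $(\cE_1 f)|_{\theta\neq0}$, modulo bounded zeroth-order terms, as $\cip{\xi_\pm}{-i\bsigma(\bT^{\flat})[(1-h)\nabla_\bT-\tau_r\sin\rho\,\nabla_\bG]\wt{\psi}|_{\Oms}}$ and $\cip{\xi_\pm}{-i\bsigma(\bG^{\flat})\tfrac{\rho-\sin\rho}{\rho}\nabla_\bG\wt{\psi}|_{\Oms}}$; since $1-h=\mathcal{O}(\rho)$, $\tau_r\sin\rho=\mathcal{O}(\rho)$ and $\tfrac{\rho-\sin\rho}{\rho}=\mathcal{O}(\rho^2)$ on $B_\delta[\gamma]$ (constants controlled by $\norm{\kappa}_{L^\infty}$ and $\norm{\tau_r}_{L^\infty}$), this is dominated pointwise on $\Oms$ by $C\rho\,\big(|\nabla_\bT\wt{\psi}|+|\nabla_\bG\wt{\psi}|\big)+C|f|$. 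Lemma~\ref{lem:reg_rho_dmax} applied to $\wt{\psi}$, together with \eqref{eq:dirac_gradient} and once more the Leibniz rule $\cD_\bA^{(-)}(\rho\wt{\psi})=\rho\,\cD_\bA^{(-)}\wt{\psi}-i\bsigma(\d\rho)\wt{\psi}$, then gives $\norm{\rho\,\nabla_X\wt{\psi}}_{L^2(\Oms)}\le\delta\,\norm{\cD_\bA^{(-)}\wt{\psi}}_{L^2}+C\norm{\wt{\psi}}_{L^2}$ for every unit vector field $X$, so the $\vol_{g_3}$-$L^2$ norm of $\cE_1 f$ is $\le C(\delta,S)\norm{\wt{\psi}}_\bA$. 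Collecting the three contributions and undoing the volume-form comparison yields $\norm{f}_{\T_\ell}\le C(\delta,\alpha,S)\norm{\wt{\psi}}_\bA=C\norm{\chi_{\delta,\gamma}\psi}_\bA$.

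Finally, for the continuous dependence on $S$: every constant above is assembled from the tubular-neighborhood parameter $\eps$ (the radius of injectivity of the normal exponential map of $\gamma$) and from the $C^0$-norms on $\overline{B_\delta[\gamma]}$ of $h$, $1-h$, $\tau_r$, $\kappa_g$, $\kappa_n$ and of the connection forms $M_\xi$. By Proposition~\ref{prop:conv_coord} the coordinates $(s,\rho,\theta)$, the Seifert frame $(\bT,\bS,\bN)$ and the sections $\xi_\pm$ --- hence all of these quantities --- vary continuously (and locally uniformly) with $S\in\sS_{\sK}$, so $C(\delta,\alpha,S)$ can be chosen continuous in $S$. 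I expect the handling of $\cE_1 f$ to be the main obstacle: it cannot be estimated directly, and one must route it through the structural identities \eqref{eq:E1_comp_1} and \eqref{eq:E1_comp_2} to trade each $\theta$-derivative for a power of $\rho$ and then apply Lemma~\ref{lem:reg_rho_dmax}, all the while keeping every constant geometric so that the final bound varies continuously with the surface.
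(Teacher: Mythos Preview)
Your proposal is correct and follows essentially the same route as the paper: both use the decomposition \eqref{eq:D_max_close_knot} to write $\wtd f$ as the difference of $\cD_\bA^{(-)}(\chi_\delta\psi)$ (in the $\xi_\pm$-frame) and the error terms $\cE_0 f,\cE_1 f$, compare the two volume forms, bound $\cE_0$ trivially, and route $\cE_1$ through \eqref{eq:E1_comp_1}--\eqref{eq:E1_comp_2} to reduce it to $\rho\nabla_X(\chi_\delta\psi)$, which is then controlled via $\rho\chi_\delta\psi\in\dom(\cD_\bA^{(\min)})$ and \eqref{eq:dirac_gradient}. Your treatment is in fact slightly more thorough than the paper's, since you explicitly address the continuous dependence on $S$ (the paper merely asserts it).
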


	From Proposition~\ref{prop:sa_link} we know that 
	$$
		\phi^{(n)}:=\overline{E}_{k}^{(n)}\chi_{\delta,\gamma_k^{(n)}}\psi^{(n)}
		\in \dom(\cD_{\bA_k^{(n)}}).
	$$
	Furthermore, notice that the derivative of $E_{k}^{(n)}$ is uniformly bounded on its domain, which implies that
	the sequence $(\norm{\phi^{(n)}}_{\bA_k^{(n)}})_{n\in\N}$ is also bounded.
	It therefore suffices to prove the convergence of the sequence $(\phi^{(n)})_n$,
	which will imply the convergence of $\chi_{\delta,\gamma_k^{(n)}}\psi^{(n)}$.
		
	Recall that up to gauge transformation we can assume that the jump phase 
	occurs across $B_\delta[\gamma_k^{(n)}]\cap\{ \theta_k^{(n)}=0\}$.
	For simplicity we will drop the dependence in $k$, and will denote
	$$
		\ell_n := \ell_{k}^{(n)}, \quad \T_n := \R/(\ell_{k}^{(n)}\Z).
	$$
	We write 
	\begin{align}\label{eq:ccurve_decomp_1}
		\phi^{(n)}=f^{(n)} \cdot \xi^{(n)} = f_{+}^{(n)}\xi_+^{(n)} + f_{-}^{(n)}\xi_-^{(n)}
	\end{align}
	and now use Lemma~\ref{lem:control_Ngr} to see that
	the sequence $(\norm{f^{(n)}}_{\T_n})_{n\in\N}$ 
	is also bounded. 
	In $L^2(\T_n\times\R^2;\rho \d s\d \rho\d \theta)^2$, we further decompose
	$f^{(n)}$ into its regular and singular part according to \eqref{eq:model_decomp}, 
	\begin{align}\label{eq:ccurve_decomp_2}
		f^{(n)}= f_0^{(n)} + f_{\sing}^{(n)}	.
	\end{align}
	Observe that if $\alpha_k^{(n)} = 0 \mod 1$, then $f_{\sing}^{(n)} \equiv 0$.
	
	To show that the regular part $f_0^{(n)}$ converges, we use the map
	\begin{equation*}
		\sim\,: 
		\begin{array}{ccl}
			L^2(\T_n \times \R^2)^2 &\longrightarrow& L^2(\T \times \R^2)^2\\
			f(s_n,\rho,\theta) &\mapsto& \wt{f}(s,\rho,\theta) := f(s \ell_n/\ell,\rho,\theta).
		\end{array}
	\end{equation*}
	Recall also that
	$$
		\norm{f_0^{(n)}}_{\T_n}^2 = \int |f_0^{(n)}|^2 + \int_{\T_n \times (\R^2 \setminus \{\theta = 0\})}|\nabla f_0^{(n)}|^2,
	$$
	so the sequence $(\wt{f}_0^{(n)})_n$ is $H^1(\T_n \times (\R^2 \setminus \overline{\{\theta = 0\}}))^2$-bounded,
	and thus converges up to extraction to $\wt{f}_0$.
	
	Recall that $f^{(n)}_{\sing} = f^{(n)}_{\sing}(\ul^{(n)})$, so that with
	\eqref{eq:sing_contr}--\eqref{eq:sing_psi_la} we have
	\begin{align*}
		\norm{f_{\sing}^{(n)}}_{L^2}^2 =C_{\alpha_k^{(n)}}\sum_{j \in \T_{n}^*}\frac{|\lambda_j^{(n)}|^2}{\langle j\rangle^2},
		\quad \norm{f_{\sing}^{(n)}}_{\T_n}^2 =(C_{\alpha_k^{(n)}}+C_{1-\alpha_k^{(n)}})\sum_{j \in \T_{n}^*}|\lambda_j^{(n)}|^2,
	\end{align*}
	with 
	$$
		C_{\alpha} := \int_0^{\infty}|K_{\alpha}(r)|^2\,rdr.
	$$ 
	We now distinguish between two different cases.
	
	\smallskip
	\noindent\textit{Case $\lim_{n \to \infty} \alpha_k^{(n)} <1$:}
	We will show that $(\wt{f}_{\sing}^{(n)})_n$ converges in $L^2(\T\times\R^2)^2$. 
	Observe that $\wt{f}_{\sing}^{(n)}$ is again in the singular sector of
	$\dom\big(\cD_{\T,\alpha_k^{(n)}}^{(-)}\big)$ by the very definition
	of the singular subspace.
	Furthermore, the sequence $(\ul^{(n)})_n$ is uniformly bounded in $\ell_2(\T^*)$ by \eqref{eq:sing_contr},
	hence by the Banach-Alaoglu Theorem we can extract a subsequence (also denoted by $(\ul^{(n)})_n$), converging
	in $\hat{H}^{-1}(\T_{\ell})$, whose limit is $0$ when $\alpha_k^{(n)} \to 0$, 
	because $\lim_{x \to 1^{-}}(1-x)C_x > 0$. This implies that $(\wt{f}_{\sing}^{(n)})_n$ converges
	to 
	$$
		\wt{f}_{\sing}(s,\rho,\theta) = \frac{1}{\sqrt{2\pi \ell}}\sum_{j \in \T^*}\lambda_j e^{ijs}\begin{pmatrix} 0 \\ 
		K_{\alpha}(\rho \langle j\rangle)e^{i\alpha \theta}\end{pmatrix}.
	$$
	
	Now set
	\begin{equation*}
		\begin{array}{ll}
			\phi_{0}^{(n)} = \chi_{2\delta,\gamma_k^{(n)}}(f_{0}^{(n)}\cdot \xi^{(n)}),
			&\phi_{\sing}^{(n)} = \chi_{2\delta,\gamma_k^{(n)}}(f_{\sing}^{(n)}\cdot \xi^{(n)}),\\
			\phi_{0} =\chi_{2\delta,\gamma_k}(\wt{f}_{0}\cdot \xi),
			&\phi_{\sing} = \chi_{2\delta,\gamma_k}(\wt{f}_{\sing}\cdot \xi).
		\end{array}
	\end{equation*}
	It remains to show that $\phi_n = \phi_0^{(n)} + \phi_{\sing}^{(n)}$ converges to
	$\phi = \phi_0 + \phi_{\sing}$. 
	We only show how to do it for the singular part, the calculation for the regular part is the same.
	We begin with
	$$
		\chi_{2\delta,\gamma_k^{(n)}}(f_{\sing}^{(n)} \cdot \xi^{(n)})
		= \chi_{2\delta,\gamma_k}(f_{\sing}^{(n)} \cdot \xi^{(n)}) 
		+ (\chi_{2\delta,\gamma_k^{(n)}} -  \chi_{2\delta,\gamma_k})(f_{\sing}^{(n)} \cdot \xi^{(n)}).
	$$
	The last term converges to $0$ in $L^2(\S^3)^2$ since 
	$$
		\limsup_{n \to \infty}\norm{\chi_{2\delta,\gamma_k^{(n)}} -  \chi_{2\delta,\gamma_k}}_{L^{\infty}} = 0.
	$$
	For the first term we now write
	\begin{align*}
		\chi_{2\delta,\gamma_k}(f_{\sing}^{(n)} \cdot \xi^{(n)}) 
		&= \chi_{2\delta,\gamma_k}(\wt{f}_{\sing} \cdot \xi)
		+ \chi_{2\delta,\gamma_k}(f_{\sing}^{(n)} - \wt{f}_{\sing})\cdot \xi^{(n)}\\
		&\quad +\chi_{2\delta,\gamma_k}f_{\sing}\cdot (\xi^{(n)} - \xi).
	\end{align*}
	Again, the last term converges to $0$ since
	$$
		\limsup_{n \to \infty} \norm{\chi_{2\delta,\gamma_k}(\xi^{(n)} - \xi)}_{L^{\infty}} = 0.
	$$
	
	It remains to show that $\chi_{2\delta,\gamma_k}(f_{\sing}^{(n)} - \wt{f}_{\sing})\cdot \xi^{(n)} \to 0$ 
	in $L^2(\S^3)^2$, and this is done in coordinates.
	For $\bp \in \S^3$,
	\begin{align*}
		(f_{\sing}^{(n)} - \wt{f}_{\sing})(\bp) 
		&= f_{\sing}^{(n)}\big((s_n,\rho_n,\theta_n)(\bp)\big) - \wt{f}_{\sing}\big((s,\rho,\theta)(\bp)\big)\\
		&= \left[f_{\sing}^{(n)}\big((s_n,\rho_n,\theta_n)(\bp)\big) - \wt{f}_{\sing}\big((\tfrac{\ell}{\ell_n}s_n,\rho_n,\theta_n)(\bp)\big)\right]\\
		&\quad+ \left[\wt{f}_{\sing}\big((\tfrac{\ell}{\ell_n}s_n,\rho_n,\theta_n)(\bp)\big) - \wt{f}_{\sing}\big((s,\rho,\theta)(\bp)\big)\right]\\
		&=: \delta_1^{(n)}f_{\sing}(\bp) + \delta_2^{(n)}f_{\sing}(\bp).
	\end{align*}
	For $n$ large enough,
	$$
		\norm{\chi_{\delta,\gamma_k}\delta_1^{(n)}f_{\sing}}_{L^2(\S^3)^2}^2
		\leq C(\delta) \tfrac{\ell_n}{\ell}\norm{\wt{f}_{\sing}^{(n)}-\wt{f}_{\sing}}_{L^2(\T \times \R^2)^2}^2,
	$$
	which tends to zero when $n \to \infty$.
	It remains to treat $\delta_2^{(n)}f_{\sing}(\bp)$. If $\wt{f}_{\sing}$ were smooth, the convergence would follow
	immediately from the convergence of the coordinates. In the general case we can however use a standard mollifying 
	argument (with mollifiers supported in $[-\ell/4,\ell/4]\times B_1(0) \subset \T \times \R^2$)
	to return to the smooth case.
	
	\smallskip
	\noindent\textit{Case $\lim_{n \to \infty} \alpha_k^{(n)} = 1^{-}$:}
	We once again decompose 
	$$
		f^{(n)}= f_0^{(n)} + f_{\sing}^{(n)}(\ul^{(n)}).
	$$
	Just as in the previous case, one can show that the regular part $f_0^{(n)}$ converges (up to extraction),
	whereas for the singular part we only know that
	$$
		\limsup_{n \to \infty} (1-\alpha_k^{(n)})\norm{\ul^{(n)}}_{\ell_2(\T^*)} < \infty.
	$$
	Furthermore, by testing $\wt{f}_{\sing}^{(n)}(\ul^{(n)})$ against a Hilbert basis of
	$$
		L^2(\T;ds)\otimes L^2(\R^2;\rho\d\rho\d\theta)^2=L^2(\T\times\R^2)^2,
	$$
	it is clear that this function tends weakly to $0$ in $L^2(\T\times\R^2)^2$ and concentrates around $\T\times\{0\}$.
	On $L^2(\S^3)^2$, this implies that $\phi_{\sing}^{(n)} \rightharpoonup 0$ 
	and concentrates around $\gamma_k^{(n)}$, which could lead to a loss of $L^2$-mass.
	
	\medskip
	\noindent\textit{Closedness:}
	It remains to show that $w = \cD_{\bA}\psi$.
	Firstly, by testing against any $\phi \in C_0^{\infty}(\Omega_{\uS})^2$ we get that
	$$
		\cip{\cD_{\bA^{(n)}}\psi^{(n)}}{\phi}_{L^2}
		= \cip{\psi^{(n)}}{-i\bsigma(\nabla)\phi}_{L^2},
	$$
	therefore $\psi \in \mathcal{A}_{\uS}$ with $w = -i\bsigma(\nabla)\psi$. It remains to show that $\psi$ has the 
	correct phase-jump in order for it to be in $\dom(\cD_{\bA})$.
	Away from the link this follows from the a.e.-convergence of $E_{\delta,\underline{a}}^{(n)}\chi_{\delta,\underline{a}}^{(n)}$
	to $E_{\delta,\underline{a}}\chi_{\delta,\underline{a}}$. By a similar argument the same holds for the regular part close
	to the link; it converges to an element in the minimal domain of $\cD_{\bA}$.
	As for the singular part, it is obvious since we know the explicit form of the limiting function.
	These two last facts imply that $\psi\in \dom(\cD_{\bA})$.
	
\subsubsection{Proof of Theorem~\ref{thm:str_res_cont}}
	For this proof, the second characterization of strong resolvent continuity in the following Lemma
	(whose proof can be found in the Appendix) will be the most convenient.
	\begin{lemma}\label{lem:char_sres_conv}
		Let $(\cD_n)_n$ be a sequence of (unbounded) self-adjoint operators on a separable 
		Hilbert space $\cH$. 
		Then the following statements are equivalent.
		\begin{enumerate}
    			\item $\cD_n$ converges to $\cD$ in the strong resolvent sense.
    	
			\item For any $(f,\cD f)\in \cG_{\cD}$,
  			there exists a sequence $(f_n,\cD_n f_n)\in \cG_{\cD_n}$ converging to 
			$(f,\cD f)$ in $\cH \times \cH$.
    	
			\item The orthogonal projection $P_n$ onto $\cG_{\cD_n}$ 
			converges in the strong operator topology to $P$, the orthogonal projector onto $\cG_{\cD}$.
		\end{enumerate}
	\end{lemma}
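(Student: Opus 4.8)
The plan is to establish the cycle $(1)\Rightarrow(2)$, $(2)\Rightarrow(1)$, $(3)\Rightarrow(2)$, $(2)\Rightarrow(3)$, which yields all three equivalences. Two standard facts about a self-adjoint operator $\cA$ on $\cH$ will be used repeatedly: that $\|(\cA-i)^{-1}\|_{\cH\to\cH}\le 1$, and that its graph is a closed subspace of $\cH\times\cH$ with orthogonal complement
\[
	\cG_{\cA}^{\perp}=\{(-\cA h,h):h\in\dom(\cA)\},
\]
the latter identity being exactly where self-adjointness enters (it is $\cG_{\cA}^{\perp}=V\cG_{\cA^{*}}$ for $V(a,b)=(-b,a)$). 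We equip $\cH\times\cH$ with the norm $\|(a,b)\|^2=\|a\|^2+\|b\|^2$.

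For $(1)\Rightarrow(2)$: given $(f,\cD f)\in\cG_{\cD}$, set $g:=(\cD-i)f$ and $f_n:=(\cD_n-i)^{-1}g\in\dom(\cD_n)$; then $(f_n,\cD_n f_n)\in\cG_{\cD_n}$, and $f_n\to(\cD-i)^{-1}g=f$ by (1), while $\cD_n f_n=g+if_n\to g+if=\cD f$. For $(2)\Rightarrow(1)$ it suffices to check convergence of the resolvents at the single point $z=i$, the general non-real $z$ following by a standard argument (see e.g. \cite{ReedSimon1}): given $g\in\cH$, put $f:=(\cD-i)^{-1}g$ and pick $(f_n,\cD_n f_n)\to(f,\cD f)$ as in (2); then $(\cD_n-i)f_n\to(\cD-i)f=g$, and the contraction bound gives
\[
	\big\|(\cD_n-i)^{-1}g-f_n\big\|=\big\|(\cD_n-i)^{-1}\big(g-(\cD_n-i)f_n\big)\big\|\le\big\|g-(\cD_n-i)f_n\big\|\longrightarrow 0,
\]
so $(\cD_n-i)^{-1}g\to f=(\cD-i)^{-1}g$. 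The implication $(3)\Rightarrow(2)$ is immediate: if $(f,\cD f)\in\cG_{\cD}$ then $P_n(f,\cD f)\in\cG_{\cD_n}$ and $P_n(f,\cD f)\to P(f,\cD f)=(f,\cD f)$.

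The remaining implication $(2)\Rightarrow(3)$ is the one needing a little care. Using the orthogonal splitting $\cH\times\cH=\cG_{\cD}\oplus\cG_{\cD}^{\perp}$ together with the description of $\cG_{\cD}^{\perp}$ above, every $(a,b)\in\cH\times\cH$ can be written uniquely as $(a,b)=(f-\cD h,\cD f+h)$ with $f,h\in\dom(\cD)$, and then $P(a,b)=(f,\cD f)$. By (2), choose $(f_n,\cD_n f_n)\to(f,\cD f)$ and $(h_n,\cD_n h_n)\to(h,\cD h)$ with $f_n,h_n\in\dom(\cD_n)$, and set $(a_n,b_n):=(f_n-\cD_n h_n,\cD_n f_n+h_n)$, so that $(a_n,b_n)\to(a,b)$ while $P_n(a_n,b_n)=(f_n,\cD_n f_n)$ because $(f_n,\cD_n f_n)\in\cG_{\cD_n}$ and $(-\cD_n h_n,h_n)\in\cG_{\cD_n}^{\perp}$. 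Since $\|P_n\|\le 1$,
\[
	\|P_n(a,b)-P(a,b)\|\le\|(a,b)-(a_n,b_n)\|+\|(f_n,\cD_n f_n)-(f,\cD f)\|\longrightarrow 0.
\]
I expect the only (minor) obstacle to be setting up the orthogonal decomposition of $\cH\times\cH$ and the formula for $P$ correctly; once that is in hand, $(2)\Rightarrow(3)$ is just a two-term approximation, and everything else is bookkeeping with the uniform bound $\|(\cD_n-i)^{-1}\|\le 1$.
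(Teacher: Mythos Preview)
Your proof is correct and uses the same two ingredients as the paper's argument---the description of $\cG_{\cA}^{\perp}$ for a self-adjoint $\cA$ and the uniform bound $\|(\cD_n-i)^{-1}\|\le 1$. The organization differs slightly: the paper runs the clean cycle $(1)\Rightarrow(2)\Rightarrow(3)\Rightarrow(1)$, proving $(2)\Rightarrow(3)$ by showing $P_n\to P$ on $\cG_{\cD}$ and $(1-P_n)\to(1-P)$ on $\cG_{\cD}^{\perp}$ separately (via the distance characterization of the projection) and then combining with the identity $P_n=P_nP-(1-P_n)(1-P)+(1-P)$, whereas you decompose a generic vector $(a,b)$ into its two graph components and approximate both at once; your direct $(2)\Rightarrow(1)$ replaces the paper's $(3)\Rightarrow(1)$ step. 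Both routes are equally short and rely on the same facts.
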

	It thus suffices to prove the following.
	For any $\psi \in \dom(\cD_{\bA})$ and any sequence $(\uS^{(n)},\ua^{(n)})_n$ converging 
	to $(\uS,\ua)$, there exists a sequence $(\psi^{(n)})_n$ such that
	\[
		\big(\psi^{(n)},\cD_{\bA^{(n)}}\psi^{(n)}\big)
		\to \big(\psi, \cD_{\bA} \psi\big) \textrm{ in } L^2(\S^3)^2 \times L^2(\S^3)^2.
	\]
	We reuse the partition of unity from \eqref{def:part_unity} and localize $\psi$ accordingly.

	\medskip
	\noindent\textit{Away from the knots:}
	This corresponds to the functions coming from the last term in \eqref{def:part_unity}.
	Recall the function $E_{\delta, \underline{a}}^{(n)}$ from \eqref{def:n_phase_remove}; we then
	define the function
	$$
		\psi_{\delta,\underline{a}}^{(n)} 
		:= E_{\delta,\underline{a}}^{(n)}\overline{E}_{\delta,\underline{a}} \psi \in \dom(\cD_{\bA^{(n)}}).
	$$
	Now observe that for a.e. $\bp \in \S^3$ we have
	$$
		\cD_{\bA^{(n)}} \psi_{\delta,\underline{a}}^{(n)}(\bp)
		= E_{\delta,\underline{a}}^{(n)}(\bp)\overline{E}_{\delta,\underline{a}}(\bp)
		\cD_{\bA}(\chi_{\delta,\underline{a}}\psi)(\bp)
	$$
	hence
	$$
		|\psi_{\delta,\underline{a}}^{(n)}(\bp)| = |\chi_{\delta,\underline{a}}\psi(\bp)|,
		\quad |\cD_{\bA^{(n)}}\psi_{\delta,\underline{a}}^{(n)}(\bp)|
		= |\cD_{\bA}(\chi_{\delta,\underline{a}}\psi)(\bp)|,
	$$
	and dominated convergence yields
	$$
		\big(\psi_{\delta,\underline{a}}^{(n)},\cD_{\bA^{(n)}}\psi_{\delta,\underline{a}}^{(n)}\big)
		\to \big(\chi_{\delta,\underline{a}}\psi,\cD_{\bA}(\chi_{\delta,\underline{a}}\psi)\big)
		\textrm{ in } L^2(\S^3)^2 \times L^2(\S^3)^2.
	$$
	
	\medskip
	\noindent\textit{Close to the knots:}  
	We recall $E_k$ for $\gamma_k$ from \eqref{def:curve_phasej_rem};
	then, 
	$$
		\phi_{\delta,k} := \overline{E}_k\chi_{\delta,\gamma_k}\psi \in \dom(\cD_{\bA_k}).
	$$
	Again, for a.e. $\bp \in \S^3$ we have
	\begin{align*}
		\cD_{\bA_k}\phi_{\delta,k}(\bp)
		= \overline{E}_k(\bp)\big(\cD_{\bA}+i\bsigma(\d E_k(\bp))\big)(\chi_{\delta,\gamma_k}\psi)(\bp).
	\end{align*}
	Using \eqref{eq:partiald_coord} and \eqref{eq:pjump_linkn}, for $\bp\in\supp(\chi_{\delta,\gamma_k})\cap\Omega_{\uS}$ we have:
	\begin{multline}\label{eq:dirac_on_E_k}
		i\overline{E}_k\bsigma(\d E_k)(\bp)\\
		=\frac{2\pi}{\ell_k}\sum_{k'\neq k}\alpha_{k'}\link(\gamma_{k'},\gamma_k)
		\bsigma(h\bT^{\flat}+\sin(\rho)\tau_r(s)\bG^{\flat})(\bp),
	\end{multline}
	and a similar formula holds for $E_k^{(n)}$ ($\bG$ is defined in \eqref{def:bg}). By geometric convergence of $\uS^{(n)}$, 
	the sequence $(i\chi_{\delta,\gamma_k^{(n)}}\overline{E}_k^{(n)}\bsigma(\d E_k^{(n)})(\bp))_n$ 
	converges to $i\chi_{\delta,\gamma_k}\overline{E}_k\bsigma(\d E_k)(\bp)$ for a.e. $\bp\in\S^3$.
	It then suffices to approximate $\phi_{\delta,k}$ by elements $\phi_{\delta,k}^{(n)} \in \dom(\cD_{\bA_k^{(n)}})$;
	we then obtain
	$$
		\big(E_k^{(n)}\phi_{\delta,k}^{(n)},\,\cD_{\bA^{(n)}}(E_k^{(n)}\phi_{\delta,k}^{(n)}))
		\to \big(\chi_{\delta,\underline{a}}\psi,\,\cD_{\bA}(\chi_{\delta,\underline{a}}\psi))
		\textrm{ in } L^2(\S^3)^2 \times L^2(\S^3)^2
	$$
	by dominated convergence.
	
	Up to a gauge transformation $e^{iF_k(\bp)}$ we may assume that the phase jump of 
	$\chi_{\delta,\gamma_k}\psi$
	occurs across $\{\theta(\bp)=0\}$. We then decompose $\chi_{\delta,\gamma_k}\psi=f_{\delta,k}\cdot\xi$ as in 
	\eqref{eq:ccurve_decomp_1}. The function $f_{\delta,k}$ is in $\dom(\cD_{\T_{\ell_k},\alpha_k}^{(-)})$ 
	as a function of $(s,\rho,\theta)$ and we split it into its regular and singular part 
	$$
		f_{\delta,k} = f_{\delta,k,0} + f_{\delta,k,\sing}(\ul)
	$$
	as in \eqref{eq:ccurve_decomp_1}.
	For the sake of simplicity we drop the dependence in $\delta$ and $k$ in the functions $f_{\delta,k,(\cdot)}$
	and write $\T_n:=\T_{\ell_k^{(n)}}$.
	
	We form the functions $\wt{f}_{0}^{(n)}$ 
	defined on $L^2(\T_n \times\R^2)^2$ by the formula:
	$$
		\wt{f}_{0}^{(n)}(s,\rho,\theta):=e^{i(\alpha_{k}^{(n)}-\alpha_k)\theta}
			f_{0}\big(\tfrac{\ell_k}{\ell_k^{(n)}}s,\rho,\theta \big)\in\dom(\cD_{\T_n,\alpha_k^{(n)}}^{(\min)})
	$$
	Similarly, using the explicit expression of the singular part, we define the function
	$\wt{f}_{\sing}^{(n)}\in\dom(\cD_{\T_n,\alpha_k^{(n)}}^{(-)})$ by the formula:
	$$
		\wt{f}_{\sing}^{(n)}(s,\rho,\theta):=\frac{1}{\sqrt{2\pi\ell_k^{(n)}}}
			\sum_{j\in\T_n^*}e^{ijs}\lambda_{j(\ell_k^{(n)}/\ell_k)}
			\begin{pmatrix}0 \\ K_{\alpha_k^{(n)}}(\rho\langle j\rangle)e^{i\alpha_k^{(n)}\theta}\end{pmatrix}.
	$$
	We finally set 
	$$
		\wt{f}^{(n)}_{\delta,k}(s,\rho,\theta):=\chi((2\delta)^{-1}\rho)(\wt{f}_{0}^{(n)}+\wt{f}_{\sing}^{(n)})(s,\rho,\theta),
	$$
	where $\chi$ is the function in the definition of $\chi_{\delta,\gamma_k}$.
	
	Coming back to $L^2(\S^3)^2$, the function 
	$$
		\wt{\phi}_{\delta,k}^{(n)}(\bp):=\wt{f}^{(n)}_{\delta,k}(s_n,\rho_n,\theta_n)\cdot\xi^{(n)}(\bp)
	$$ 
	is almost the candidate we are looking for; 
	through a gauge transformation $e^{-i\zeta_k^{(n)}(\bp)}$ we shift the phase jump of $\wt{\phi}_{\delta,k}^{(n)}$
	from $\{ \theta_n(\bp)=0\}$ to $S_k^{(n)}$ and we obtain the function 
	$$
		\phi_{\delta,k}^{(n)}:=e^{-i\zeta_k^{(n)}}\wt{\phi}_{\delta,k}^{(n)} \in \dom\big(\cD_{\bA_k^{(n)}}\big).
	$$
	Following the proof of Theorem~\ref{thm:compactness}, we obtain that
	$\phi_{\delta,k}^{(n)}$ converges to $\chi_{\delta,\gamma_k}\psi$ in $L^2(\S^3)^2$ and that 
	$$
		\cD_{\bA_k^{(n)}}\phi_{\delta,k}^{(n)} 
		\rightharpoonup \cD_{\bA_k}(\chi_{\delta,\gamma_k}\psi) \textrm{ in } L^2(\S^3)^2.
	$$
	With the convergence of the coordinates
	$(s_n,\rho_n,\theta_n)$ and the expression of $-i\bsigma(\nabla)$ in terms of these 
	coordinates \eqref{eq:D_gamma_in_T},
	the same mollifying argument used in the proof of Theorem~\ref{thm:compactness} gives that
	$$
		\cD_{\bA_k^{(n)}}\phi_{\delta,k}^{(n)} \to \cD_{\bA_k}(\chi_{\delta,\gamma_k}\psi)
		\textrm{ in } L^2(\S^3)^2.
	$$

\section{Hopf links}\label{sec:Hopf_links}
\subsection{Non-existence of zero modes for circles}\label{subs:circle}
Here, we will deal with the magnetic field supported on any circle $\cC$ in $\S^3$,
meaning $\cC$ is the intersection of $\S^3$ with any complex line, not including the case
when this intersection is just a point. We then orient this circle and study it with the help of a gauge potential
$\bA = 2\pi \alpha [S]$, $\partial S = \cC$, $\alpha \in \Tf$.
\begin{theorem}\label{thm:no_zero_modes_for_circles}
	Let $\cC$ be an oriented circle in $\S^3$ with Seifert surface $S$, and set $\bA = 2\pi\alpha[S]$.
	For any flux $\alpha \in \Tf$ we have
	$$
		\ker\cD_{\bA}^{(-)}=\{0\}.
	$$
\end{theorem}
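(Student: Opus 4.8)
The plan is to exploit the enormous symmetry of a circle $\cC$ in $\S^3$ — the stabilizer of a complex line inside $\SU(2)$ (or rather, the subgroup of isometries of $\S^3$ fixing $\cC$ setwise and preserving its orientation) is a two-torus $\T^2$ acting on $\S^3$. Because the magnetic field $\bB = 2\pi\alpha[\cC]$ is invariant under this torus action, and because the free Dirac operator $-i\bsigma(\nabla)$ commutes with the lift of isometries to the $Spin^c$ bundle, the operator $\cD_{\bA}^{(-)}$ commutes with a projective unitary representation of $\T^2$ on $L^2(\S^3)^2$. Consequently $\ker\cD_{\bA}^{(-)}$ is invariant and decomposes into (finitely many, by Theorem~\ref{thm:discrete_spectrum}) joint eigenspaces of the torus generators. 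On each such sector the eigenvalue problem reduces to a one-dimensional ODE in the single coordinate transverse to the torus orbits (e.g.\ the latitude angle interpolating between $\cC$ and its dual circle $\cC'$), with the flux $2\pi\alpha$ entering only through the boundary/matching conditions at $\cC$ dictated by the choice of the $(-)$ extension and across the Seifert surface $S$.

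First I would set up explicit coordinates adapted to $\cC$: write $\S^3 = \{(z_1,z_2)\in\C^2 : |z_1|^2+|z_2|^2=1\}$ with $\cC = \{z_2=0\}$, use Hopf-type coordinates $z_1 = \cos(\eta) e^{i\phi_1}$, $z_2 = \sin(\eta) e^{i\phi_2}$ with $\eta\in[0,\pi/2]$, $\phi_1,\phi_2\in\R/2\pi\Z$, so that $\cC = \{\eta = 0\}$ and the Seifert surface can be taken as $S = \{\phi_2 = 0\}$ (a disk). In these coordinates $\rho_\gamma \sim \eta$ near $\cC$, the torus acts by translations in $(\phi_1,\phi_2)$, and one writes down $-i\bsigma(\nabla)$ explicitly using the formulas from Section~\ref{charge_conj} / \eqref{eq:spin_basis} for the canonical $Spin^c$ connection. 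Then I would Fourier-decompose a putative zero mode $\psi$ in $(\phi_1,\phi_2)$ — the phase jump $e^{-2\pi i\alpha}$ across $S$ forces the $\phi_2$-frequency to sit in $\alpha + \Z$ — landing on a system $\cD_{m,n}\,u = 0$ for a $\C^2$-valued function $u(\eta)$ on $(0,\pi/2)$, for $(m,n)\in\Z\times(\alpha+\Z)$.

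Next I would solve this ODE system. The point is that the equation $\cD_{m,n}u=0$ should be solvable in closed form (hypergeometric, or after a substitution reducible to elementary/Bessel-type behavior near the endpoints), and the two endpoints impose constraints: at $\eta = \pi/2$ (the dual circle $\cC'$, a genuine smooth point of the geometry where no field sits) $L^2$-integrability plus smoothness of the spinor bundle selects one solution; at $\eta\to 0$ the defining condition of the $(-)$ extension — namely that the spin-``up'' component (the one aligned with $\bsigma(\bT^\flat)$, in the notation $\xi_+$) has the \emph{less singular} $K_\alpha$-type behavior while $\langle\xi_-,\chi_{\delta,\gamma}\psi\rangle\xi_-$ lies in the minimal domain, cf.\ the model case \eqref{eq:dtmax_dom}–\eqref{eq:model_decomp} — selects the other. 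I expect these two conditions to be simultaneously satisfiable only for the zero solution, for every $(m,n)$ and every $\alpha\in\Tf$; checking this incompatibility is the crux. A cleaner packaging: use the energy identity. For $\psi\in\dom(\cD_{\bA}^{(-)})$ one has, away from $S\cup\cC$, an identity of the Lichnerowicz type $\int|\cD_{\bA}^{(-)}\psi|^2 = \int|(\nabla\psi)|_{\Omega_S}|^2 + \tfrac32\int|\psi|^2 + (\text{boundary term on }\partial B_\eps[\cC])$, and the $(-)$-extension is precisely the one for which that boundary term is nonnegative (it vanishes or has a good sign because the singular part lives only in the $\xi_-$-channel). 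Then $\cD_{\bA}^{(-)}\psi=0$ forces $\int|\psi|^2 = 0$. This is in fact the slick route and I would try to run it first; the torus reduction is the fallback if controlling the boundary term globally proves awkward.

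The main obstacle is the boundary term at $\cC$: one must show that for the specific $(-)$ extension the contribution $\lim_{\eps\to0}\int_{\partial B_\eps[\cC]}\iota_{(\d\rho)^\sharp}[\langle i\bsigma(\d\rho)\psi,\psi\rangle\vol_{g_3}]$ appearing in the integration-by-parts \eqref{eq:dmax_stokes} has the right sign (or vanishes) — equivalently, that the extra self-adjointness condition doesn't allow a ``resonance'' solution that concentrates enough mass near $\cC$ to spoil positivity. Using \eqref{eq:sigma_drho} this boundary term unwinds, in the $(s,\rho,\theta)$ model coordinates, into exactly the cross terms $e^{i\theta}\overline{f_-}f_+ + e^{-i\theta}\overline{f_+}f_-$ that vanish on $\dom(\cD_{\T_\ell,\alpha}^{(-)})$ because $(f_{\sing})_+\equiv 0$ there; the genuinely new input for the circle (versus an arbitrary knot) is that one can push the surface all the way out to the antipodal circle and run a \emph{global} energy argument rather than a local one, so that no undetermined interior contribution remains. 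Once the boundary term is handled, $\cD_{\bA}^{(-)}\psi=0 \Rightarrow \nabla\psi|_{\Omega_S}=0$ and $\psi=0$, possibly after first reducing to a single Fourier sector as above to make the $\rho\to0$ analysis rigorous (invoking Lemma~\ref{lem:reg_rho_dmax} to justify the integration by parts for elements of the maximal domain).
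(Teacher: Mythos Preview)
Your ``slick route'' via a Lichnerowicz-type identity has a genuine gap. The equality \eqref{eq:dirac_gradient} is established only on $\dom(\cD_{\bA}^{(\min)})=H^1_{\bA}(\S^3)^2$, and it \emph{cannot} extend to $\dom(\cD_{\bA}^{(-)})$ with any finite boundary correction: the singular part of an element of the $(-)$-domain behaves like $K_\alpha(\rho\langle j\rangle)\sim\rho^{-\alpha}$ near $\gamma$, so $|\nabla\psi|^2\sim\rho^{-2-2\alpha}$ and $\int|(\nabla\psi)|_{\Omega_S}|^2=+\infty$ whenever that singular part is nonzero. The boundary expression you quote, $e^{i\theta}\overline{f_-}f_++e^{-i\theta}\overline{f_+}f_-$, comes from \eqref{eq:dmax_stokes} and measures the defect of \emph{symmetry}, not the defect of Lichnerowicz; its vanishing on $\dom(\cD_{\bA}^{(-)})$ only reconfirms self-adjointness. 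Note also that your argument, as written, uses nothing about $\cC$ being a circle, so if it worked it would kill zero modes for every magnetic link --- contradicting Corollary~\ref{cor:dim_ker_hopf_link}, which exhibits links (two circles with fluxes summing to $m+\tfrac12$) whose $(-)$-Dirac operator has an $m$-dimensional kernel.

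The paper's proof uses the circle hypothesis in an essential but different way: by conformal invariance of $\ker\cD$ it stereographically projects to $\R^3$, conformally maps $\cC$ to the unit circle $\S^1$, and then by an explicit conformal involution $\iota$ straightens $\S^1$ to the vertical axis $\mathbb{I}_3\subset\R^3$. On flat $\R^3$ with a \emph{straight} field line one has the decoupling
\[
\int|\cD_{\mathbb{I}_3}\psi|^2=\int|(\partial_{x_3}\psi)|_{\Omega}|^2+\int|(\sigma_\perp\!\cdot\!\nabla_\perp^{\R^3}\psi)|_{\Omega}|^2,
\]
which is precisely \eqref{eq:dpm_decouple} and \emph{is} valid on the full $(-)$-domain (no curvature term, and crucially no $|\nabla\psi|^2$). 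This forces $\psi_0:=\Omega_3\psi$ to be $x_3$-independent and a two-dimensional Aharonov--Bohm zero mode, leaving the single candidate $\psi_0=(0,(x_1-ix_2)^{-\alpha})^{\rT}$, which satisfies the weighted integrability only for $\tfrac12<\alpha<1$. The last step --- absent from your sketch --- is to rule this candidate out: pulled back to the $\S^1$-chart it carries a singularity at the image of infinity, and an explicit Stokes computation against a test spinor in $\dom(\cD_{\S^1}^{\Omega_3})$ gives a nonvanishing boundary term there, so it is not in the domain. Your torus/ODE fallback is closer in spirit to Theorem~\ref{thm:erdos-solovej} and could in principle be completed for a great circle, but it would require explicit control of $\spec_+\cD_{\S^2,0}$ for a singular field on $\S^2$; the paper avoids this entirely.
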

\noindent The proof of this theorem is given at the end of this section.

As for Corollary~\ref{coro:no_zero_modes_for_small_fluxes}, a simple contradiction argument 
involving Theorem~\ref{thm:no_zero_modes_for_circles} and Corollary~\ref{cor:corres_spectrum} gives
\begin{corollary}
	Let $\cC \subset \S^3$ be a circle with Seifert surface $S$
	and $\alpha \in \Tf$. Then there exists $\eps=\eps(\cC,\alpha)>0$, such that for all
	$(S',\alpha')$ in the ball $B_{\eps}[(S,\alpha)] \subset \sS_{\sK} \times [0,1)$
	with $\bA' = 2\pi \alpha'[S']$ we have
	$$
		\ker\cD_{\bA'}^{(-)}=\{0\}.
	$$
\end{corollary}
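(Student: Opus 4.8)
The plan is to repeat, essentially verbatim, the compactness-plus-contradiction argument used for Corollary~\ref{coro:no_zero_modes_for_small_fluxes}, with Theorem~\ref{thm:no_zero_modes_for_circles} now playing the role that the triviality of $\ker(-i\bsigma(\nabla))$ played there. First I would argue by contradiction: if the claim failed for some circle $\cC=\partial S$ and some $\alpha\in\Tf$, then for each $n\in\N$ there would exist a pair $(S^{(n)},\alpha^{(n)})\in\sS_{\sK}\times[0,1)$ with
$$
	\dist_{\sS}(S^{(n)},S)+\dist_{\Tf}(\alpha^{(n)},\alpha)<\tfrac{1}{n}
$$
and with $\ker\cD_{\bA^{(n)}}^{(-)}\neq\{0\}$, where $\bA^{(n)}:=2\pi\alpha^{(n)}[S^{(n)}]$. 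Since $\alpha\in[0,1)$, after shrinking $\varepsilon$ (i.e.\ passing to the tail of the sequence) we may assume the $\alpha^{(n)}$ stay in a fixed compact subinterval of $[0,1)$, so that $S^{(n)}\to S$ in $\sS_{\sK}$ and $\ua^{(n)}:=\alpha^{(n)}\to\ua:=\alpha$ in $[0,1)$.

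Next I would extract a limiting zero mode. For each $n$ pick an $L^2$-normalized $\psi^{(n)}\in\ker\cD_{\bA^{(n)}}^{(-)}$; then $\cD_{\bA^{(n)}}^{(-)}\psi^{(n)}=\lambda^{(n)}\psi^{(n)}$ with $\lambda^{(n)}\equiv 0$, hence $\lambda^{(n)}\to\lambda:=0$. Applying Corollary~\ref{cor:corres_spectrum} with $K=1$, after passing to a subsequence we obtain $\psi^{(n)}\to\psi$ in $L^2(\S^3)^2$ with $\psi\in\ker\cD_{\bA}^{(-)}$, where $\bA=2\pi\alpha[S]$. Because this convergence is \emph{strong} in $L^2$ and $\norm{\psi^{(n)}}_{L^2}=1$ for all $n$, we conclude $\norm{\psi}_{L^2}=1$, so $\psi\neq 0$.

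Finally, since $\cC=\partial S$ is a circle, Theorem~\ref{thm:no_zero_modes_for_circles} gives $\ker\cD_{\bA}^{(-)}=\{0\}$, which contradicts $0\neq\psi\in\ker\cD_{\bA}^{(-)}$; hence an $\varepsilon=\varepsilon(\cC,\alpha)>0$ as in the statement must exist. I do not expect a genuine obstacle here: all the analytic content sits inside Theorem~\ref{thm:no_zero_modes_for_circles} (triviality of the kernel at the limit) and Corollary~\ref{cor:corres_spectrum} (absence of loss of $L^2$-mass, hence strong convergence of the normalized zero modes). The only point requiring a little care is keeping the fluxes within $[0,1)$, bounded away from the critical value $1$, so that Corollary~\ref{cor:corres_spectrum} applies and no concentration onto $\partial S$ can spoil the normalization of the limit; this is automatic for $\varepsilon$ small, and for the degenerate case $\alpha=0$ one may simply note that $\cD_{\bA}^{(-)}=-i\bsigma(\nabla)$ has trivial kernel by \eqref{eq:dirac_gradient}.
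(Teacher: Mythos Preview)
Your proposal is correct and follows exactly the approach the paper indicates: the paper simply says that, as for Corollary~\ref{coro:no_zero_modes_for_small_fluxes}, a contradiction argument combining Theorem~\ref{thm:no_zero_modes_for_circles} with Corollary~\ref{cor:corres_spectrum} yields the result. You have spelled this out in the intended way.
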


\subsection{The Erd\H{o}s-Solovej construction}
In this section we will derive certain results on zero modes with the help of the construction in \cite{MR1860416}, 
where the authors used the conformal invariance of the kernel of the Dirac operator combined with the
Hopf map to construct a large class of zero modes in $\S^3$.

We will see $\S^2 \subset \R^3$ with the metric $\tfrac{1}{4}g_2$, 
where $g_{2}$ is the induced metric from its ambient space.
We define the Hopf map as
\begin{equation*}
	\Hopf:
	\begin{array}{ccl}
		(\S^3,g_3) &\longrightarrow& (\S^2,\tfrac{1}{4}g_2)\\
		(z_0,z_1) &\mapsto& (|z_0|^2-|z_1|^2,\Re(2z_0\overline{z}_1),\Im(2z_0\overline{z}_1)).
	\end{array}
\end{equation*}
From the geometry of the Hopf map it is clear that the preimage
of $K$ points $v_1, \dots, v_K \in \mathbb{S}^2$ is given by $K$ interlinking 
circles on $\mathbb{S}^3$, $\gamma^K = \Hopf^{-1}(\{v_i\}_{1\le i\le k})$,
with $\link(\gamma_i,\gamma_j) = 1$ for any $i \neq j$.
We then orient $\gamma^K$ along the vector field $u_3 = (iz_0,iz_1)$
and define the \textit{magnetic Hopf link} $\bB^K$ by
\begin{equation}\label{def:magn_hopf_link}
	\bB^K := \sum_{k=1}^K 2\pi \alpha_k \big[\Hopf^{-1}(\{v_k\})\big],
\end{equation}
with (renormalized) flux $\alpha_k \in \Tf$ on each component $\gamma_k$.

The following theorem is essentially \cite[Theorem~8.1]{MR1860416} applied to the case when 
the magnetic field on $\S^2$ is a collection of Dirac points. It is a priori not obvious that one can apply the 
theorem in this more ``singular" context, 
yet a step by step verification of the original proof shows that this is possible.
The intuition behind this fact is that for our self-adjoint extensions, the singular part of the spinor is always pointing in the 
direction of the magnetic field, and is in some sense respects the geometry of the construction, in particular the Hopf
map.

\begin{theorem}\label{thm:erdos-solovej}
	Let $\bB^K$ be a magnetic Hopf link as in \eqref{def:magn_hopf_link} with corresponding singular gauge
	$\bA$ and assume $\alpha_k \in (0,1)$ for all $k$. 
	Let $\cD_{\bA}^{(-)}$ be the Dirac operator on $(\S^3,g_3)$ 
	and define $c \in (-1/2,1/2]$, $m \in \Z$ such that
	$$
		\sum_{k=1}^K \alpha_k =: c + m.
	$$
	Furthermore, set 
	$$
		\beta_{\S^2,k} := \left(\,\sum_{k=1}^K 8\pi\alpha_k \delta_{v_k} - 2(c+k)\right) \frac{\vol_{g_2}}{4}.
	$$
	As $(2\pi)^{-1}\int_{\S^2} \beta_{\S^2,k} = m-k$, on the spinor bundle $\Psi_{m-k}$ (with Chern number $m-k$) there exists a
	two-dimensional Dirac operator $\cD_{\S^2,k}$ with magnetic two-form $\beta_{\S^2,k}$. 
	Then:
	\begin{enumerate}
		\item The spectrum of $\cD_{\bA}^{(-)}$ is given by
		$$
			\spec \cD_{\bA}^{(-)} = \bigcup_{k \in \Z} \left( \mathcal{Z}_k  
			\cup \left\{\pm \sqrt{\lambda^2 + (k+c)^2} - \frac{1}{2} : \lambda \in \spec_+\cD_{\S^2,k}\right\}\right),
		$$
		where
		\begin{align*}
			\mathcal{Z}_k = \left\{\begin{array}{ll}
 				\{k+c-1/2\}, 
				&\quad m>k,\\
				\emptyset,
				&\quad m=k,\\
				\{-k-c-1/2\},
				&\quad m<k.
				\end{array} \right.
		\end{align*}
		
		\item The multiplicity of an eigenvalue equals the number of ways it can be written as 
		$\sqrt{\lambda^2 + (k+c)^2} - \frac{1}{2}$, $k \in \Z$ and $\lambda \in \spec_+\cD_{\S^2,k}$ counted with
		multiplicity, or as an element in $\mathcal{Z}_k$ counted with multiplicity $|m-k|$.
		
		\item The eigenspace of $\cD_{\bA}^{(-)}$ with eigenvalue in $\mathcal{Z}_k$ contains spinors with definite
		spin value $\mathrm{sgn}(m-k)$.

	\end{enumerate}
\end{theorem}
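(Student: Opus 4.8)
The strategy is to follow \cite[Section~8]{MR1860416} line by line, replacing the smooth magnetic field on $\S^2$ by the collection of point measures $\beta_{\S^2}:=\big(\sum_k 8\pi\alpha_k\delta_{v_k}\big)\tfrac{\vol_{g_2}}{4}$, and checking at each step that the singular behaviour of the spinors near $\gamma^K=\Hopf^{-1}(\{v_k\})$ is compatible with the Hopf fibration. First I would make precise how $\bB^K$ corresponds, via $\Hopf$, to $\beta_{\S^2}$: since each fibre $\Hopf^{-1}(v_k)$ is a great circle of length $2\pi$, this pins down the normalisation $8\pi\cdot\tfrac14$, and the $k$-th Fourier mode of the fibre decomposition sees the two-dimensional field $\beta_{\S^2,k}$, whose uniform part $-2(c+k)\tfrac{\vol_{g_2}}{4}$ comes from the geometry of the Hopf fibration on the weight-$k$ mode together with the fractional flux shift $c$, and for which $\int_{\S^2}\beta_{\S^2,k}=2\pi(m-k)$, i.e.\ Chern number $m-k$. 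In the associated singular gauge the potential on $\S^3$ is invariant under the Hopf $U(1)$-action generated by $u_3=(iz_0,iz_1)$, up to the phase induced by $c$.

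Next I would decompose $L^2(\S^3)^2=\bigoplus_{k\in\Z}\cH_k$ into weight-$(k+c)$ isotypic components of the fibre rotation, identify $\cH_k$ unitarily with the $L^2$-sections of $\Psi_{m-k}$ over $(\S^2,\tfrac14 g_2)$, and verify, as in \cite{MR1860416}, that under this identification $\cD_{\bA}^{(-)}+\tfrac12$ acts on $\cH_k$ as a block operator of the form $\begin{pmatrix} k+c & A_k^* \\ A_k & -(k+c)\end{pmatrix}$, where $\cD_{\S^2,k}=\begin{pmatrix} 0 & A_k^* \\ A_k & 0\end{pmatrix}$ is the two-dimensional Dirac operator with magnetic two-form $\beta_{\S^2,k}$. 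Squaring gives $(\cD_{\bA}^{(-)}+\tfrac12)^2=(k+c)^2+\cD_{\S^2,k}^2$ on $\cH_k$, so the eigenvalues of $\cD_{\bA}^{(-)}$ there are $\pm\sqrt{\lambda^2+(k+c)^2}-\tfrac12$ for $\lambda\in\spec_+\cD_{\S^2,k}$, while on $\ker\cD_{\S^2,k}$ the block is diagonal, equal to $\mathrm{diag}(k+c,-(k+c))$.

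It then remains to describe $\ker\cD_{\S^2,k}$ by the Aharonov--Casher theorem on $\S^2$, in the version admitting Aharonov--Bohm point fluxes (cf.\ \cite{MR1860416}): since $\int_{\S^2}\beta_{\S^2,k}=2\pi(m-k)$, this kernel is $\{0\}$ when $m=k$, and for $m\neq k$ it is $|m-k|$-dimensional and consists entirely of spinors of definite spin $\mathrm{sgn}(m-k)$, i.e.\ lying in $\ker(\sigma_3-\mathrm{sgn}(m-k))$. On such a spinor $A_k$ (if $m>k$), resp.\ $A_k^*$ (if $m<k$), vanishes, so $\cD_{\bA}^{(-)}$ acts on it as $(k+c)-\tfrac12$ when $m>k$ and as $-(k+c)-\tfrac12$ when $m<k$; this produces $\mathcal{Z}_k$ with multiplicity $|m-k|$, which is statement~(3). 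Statements~(1) and~(2) follow because the $\cH_k$ are mutually orthogonal and preserved by $\cD_{\bA}^{(-)}$, so the spectrum is the union over $k$ and multiplicities add; here one also uses that $\cD_{\S^2,k}$ has discrete spectrum symmetric about $0$ (from $\sigma_3\cD_{\S^2,k}=-\cD_{\S^2,k}\sigma_3$), so that $\spec_+\cD_{\S^2,k}$ and $\ker\cD_{\S^2,k}$ together determine $\spec\cD_{\S^2,k}$.

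The main obstacle is justifying the fibre decomposition and the block form of the second paragraph in the singular setting: one must show that the self-adjoint extension $\cD_{\bA}^{(-)}$, characterised near each $\gamma_k$ by the spinor aligning against the field (Theorem~\ref{thm:dirac_sa_link}), is carried mode-by-mode onto exactly the self-adjoint extension of $\cD_{\S^2,k}$ near each $v_k$ for which Aharonov--Casher computes the kernel. This is not automatic; it requires redoing the local analysis near $\gamma^K$ with the model operator of Section~\ref{ssec:T_straight_line}, using that the $(-)$-extension singles out the $\xi_-$ (resp.\ $\xi_+$) sector near $\gamma_k$, which matches the Hopf pull-back of the chiral component selected on $\S^2$. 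Once this compatibility is in place, the remainder is bookkeeping along the lines of \cite{MR1860416}.
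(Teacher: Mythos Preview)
Your proposal is correct and follows exactly the approach the paper takes: the paper does not give a standalone proof but simply states that the result is \cite[Theorem~8.1]{MR1860416} applied to the case of Dirac-point magnetic fields on $\S^2$, with the remark that a step-by-step verification of the original argument goes through and that the key point is the compatibility of the $(-)$-extension with the Hopf geometry. You have in fact spelled out those steps (the fibre decomposition, the block form, the Aharonov--Casher count) and correctly isolated the one genuinely new issue---that the self-adjoint extension chosen on $\S^3$ must match, mode by mode, the distinguished extension of $\cD_{\S^2,k}$ at the points $v_k$---which is precisely what the paper's intuition sentence about ``the singular part of the spinor always pointing in the direction of the magnetic field'' is alluding to.
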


In the special case when $c=1/2$, we have precise information about $\ker \cD_{\bA}^{(-)}$.

\begin{corollary}\label{cor:dim_ker_hopf_link}
	If the magnetic Hopf link $\bB^K$ has fluxes $\ua \in (0,1)^K$ with $\sum_{k=1}^K\alpha_k = m+1/2$,
	then
	$$
		\dim \ker \cD_{\bA}^{(-)} = m.
	$$ 
\end{corollary}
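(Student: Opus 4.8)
The plan is to obtain the corollary as a direct specialization of Theorem~\ref{thm:erdos-solovej}, reading off the multiplicity of the eigenvalue $0$. First I would fix the parameters: since each $\alpha_k\in(0,1)$ and $\sum_{k=1}^K\alpha_k=m+\tfrac12$, the sum lies in $(0,K)$, so $m$ is a non-negative integer with $0\le m\le K-1$; writing $\sum_k\alpha_k=c+m'$ with $c\in(-\tfrac12,\tfrac12]$, $m'\in\Z$ as in the theorem forces $c=\tfrac12$ and $m'=m$. Since $\cD_{\bA}^{(-)}$ is self-adjoint with discrete spectrum (Theorems~\ref{thm:dirac_sa_link} and \ref{thm:discrete_spectrum}), $\dim\ker\cD_{\bA}^{(-)}$ is exactly the (finite) multiplicity of $0$ in $\spec\cD_{\bA}^{(-)}$, which by parts (1)--(2) of Theorem~\ref{thm:erdos-solovej} equals the number of representations $0=\pm\sqrt{\lambda^2+(k+c)^2}-\tfrac12$ with $k\in\Z$, $\lambda\in\spec_+\cD_{\S^2,k}$ (counted with multiplicity), plus $\sum_{k:\,0\in\mathcal{Z}_k}|m-k|$.

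Next I would dispose of the first kind of contribution: an equality $\pm\sqrt{\lambda^2+(k+\tfrac12)^2}-\tfrac12=0$ would require $\lambda^2+(k+\tfrac12)^2=\tfrac14$, which is impossible because $(k+\tfrac12)^2\ge\tfrac14$ for every $k\in\Z$ while $\lambda$ runs over the strictly positive spectrum of $\cD_{\S^2,k}$. Then I would compute $\mathcal{Z}_k$ with $c=\tfrac12$: one gets $\mathcal{Z}_k=\{k\}$ for $k<m$, $\mathcal{Z}_k=\emptyset$ for $k=m$, and $\mathcal{Z}_k=\{-k-1\}$ for $k>m$. Hence $0\in\mathcal{Z}_k$ only for $k=0$ together with $m>0$ (the alternative $k=-1$ with $-1>m$ cannot occur since $m\ge0$), and the attached weight is $|m-0|=m$. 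Summing up gives $\dim\ker\cD_{\bA}^{(-)}=m$ when $m\ge1$ and $=0=m$ when $m=0$, which is the claim.

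The only genuinely delicate point is the bookkeeping of conventions inherited from Theorem~\ref{thm:erdos-solovej}: one must use that $\spec_+\cD_{\S^2,k}$ is the \emph{strictly} positive part of the spectrum, so that the two-dimensional zero modes enter only through the $\mathcal{Z}_k$-terms and are not double-counted, and that the signs in the definition of $\mathcal{Z}_k$ are as stated — in particular that $k=-1$ contributes the eigenvalue $-1$ and not $0$. Both facts are already built into the statement of the theorem, so once this check is in place the remaining argument is the short case analysis above.
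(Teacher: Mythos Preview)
Your argument is correct and is exactly the intended specialization of Theorem~\ref{thm:erdos-solovej}: the paper states the corollary without an explicit proof, precisely because with $c=\tfrac12$ the continuous-branch contributions $\pm\sqrt{\lambda^2+(k+\tfrac12)^2}-\tfrac12$ cannot vanish for $\lambda\in\spec_+\cD_{\S^2,k}$, and the only $\mathcal{Z}_k$ hitting $0$ is $\mathcal{Z}_0=\{0\}$ (case $m>0$) with weight $|m-0|=m$. Your observation that $m\ge 0$ rules out the $k=-1$ branch, together with your remark that $\spec_+$ denotes the strictly positive spectrum so no double counting occurs, are the right sanity checks.
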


\subsection{Proof of Theorem~\ref{thm:no_zero_modes_for_circles}}
	We will use the invariance of the kernel of the Dirac operator under conformal maps to study the
	problem in $\R^3$ with the help of the stereographic projection.
	Denote by $\cD_{\cC}:= \cD_{\bA}^{(-)}$.
	
	Here, $g_{\R^3}$ denotes the canonical metric of $\R^3$. Furthermore, we
	set
	$$
		\Omega_3(\bx) := \frac{2}{1+|\bx|^2},
	$$
	the conformal factor for the stereographic projection
	\begin{equation*}
		\st:	
		\begin{array}{ccl}
			\mathbb{S}^3 &\longrightarrow& \R^3 \cup \{\infty\}\\
			(z_0,z_1) &\mapsto& \bx = \left(\frac{\Re(z_0)}{1-\Im(z_1)},\frac{\Im(z_0)}{1-\Im(z_1)},\frac{\Re(z_1)}{1-\Im(z_1)} \right)
		\end{array}
	\end{equation*}
	Here we assumed that the point $(0,i) \in \S^3$ is equidistant to any point on $\cC$, and the stereographic
	projection thus maps $\cC$ to a circle $\wt{\cC}$ (with center $0$) in $\R^3$.
	We may furthermore assume that we are working in the gauge $\bA = 2\pi \alpha [S]$, such 
	that $S$ is mapped to the interior $\D_{\wt{\cC}}$ of the circe $\wt{\cC}$ in its ambient plane.

	Note that on the manifold $(\R^3,\Omega_3^2 g_{\R^3})$, the Dirac operator $\cD_{\wt{\cC}}^{\Omega_3}$ 
	is defined in the same way as in Section~\ref{sec:sa_magn_knot};
	away from the surface $S$, we know that the Dirac operator acts as the free Dirac,
	and by \cite[Theorem~4.3]{MR1860416} we know it has the form
	\begin{equation}\label{eq:free_dir_st}
		\cD^{\Omega_3} = -i\Omega_3^{-2}\sigma\cdot \nabla^{\R^3} \Omega_3.
	\end{equation}
	The minimal domain contains smooth functions away from the surface
	with the correct phase jump, and for the extension the singular part of the spinor is 
	(close to the curve) carried by the component $\wt{\xi}_{-}$ aligned against the magnetic field.
	In particular,
	$$
		\dom\big(\cD_{\wt{\cC}}^{\Omega_3}\big) \subset L^2(\R^3; \Omega_3^3 \d \bx)^2.
	$$
	
	By a conformal transformation we can now map an arbitrary circle $\wt{\cC}$ of radius
	$r$ in $\R^3$ onto the unit circle $\S^1 = \S^1 \oplus \{0\} \subset \R^3$, which gives rise to an isomorphism between
	the two kernels:
	\begin{equation*}
		\begin{array}{ccl}
			\ker \cD_{\wt{\cC}}^{\Omega_3} &\overset{\simeq}{\longrightarrow}& \ker \cD_{\S^1}^{\Omega_3}\\
			\psi &\mapsto& \Omega_3^{-1}(\bx)\Omega_3(\tfrac{R^{-1}\bx}{r})U_R \psi(\tfrac{R^{-1}\bx}{r}),
		\end{array}
	\end{equation*}
	where $R \in \SO(3)$ rotates the (oriented) ambient plane of $\wt{\cC}$ onto $\R^2 \oplus \{0\}$, and $U_R \in \SU(2)$
	is given by 
	$$
		U_R \sigma\, \cdot \nu\, U_R^* = \sigma \cdot (R\nu), \quad \nu \in \R^3.
	$$
	We then conformally map the unit circle $\S^1$ onto the 
	straight line $\mathbb{I}_3 = 0 \oplus \R \subset \R^3$
	through
	\begin{align*}
		\iota:(x_1,x_2,x_3)
		&\overset{\st^{-1}}{\mapsto}
		\left(z_0=\frac{2(x_1+ix_2)}{1+|\bx|^2},z_1=\frac{2x_3+i(|\bx|^2-1)}{1+|\bx|^2} \right)
		\overset{\mathrm{sw}}{\mapsto} (z_1,z_0)\nonumber\\
		&\overset{\st}{\mapsto}
		\left(\frac{2x_3}{|\be_2-\bx|^2},\frac{|\bx|^2-1}{|\be_2-\bx|^2},\frac{2x_1}{|\be_2-\bx|^2} \right),
	\end{align*}
	with $\be_2 := (0,1,0)$, which corresponds to a rotation in $\S^3$.
	Observe that the map $\iota$ is a conformal involution, mapping the Riemannian manifold 
	$\R^3 \setminus \S^1$ onto $\R^3 \setminus (\mathbb{I}_3\cup\{\be_2\})$.
	We therefore obtain an operator $\cD_{\mathbb{I}_3}^{\Omega_3}$, whose gauge
	is given by the surface $\Sigma_{-\pi/2} := \{\bx \in \R^3: x_1 = 0, x_2 < 0\}$, and away from
	$\Sigma_{-\pi/2}$ it acts as \eqref{eq:free_dir_st}.
	
	From this operator we can then obtain $\cD_{\mathbb{I}_3}$, the corresponding operator
	on the manifold $(\R^3,g_{\R^3})$.
	The operator $\cD_{\mathbb{I}_3}$ acts like the free Dirac operator $-i\sigma \cdot \nabla^{\R^3}$ away
	from the surface $\Sigma_{-\pi/2}$, with $\dom\big(\cD_{\mathbb{I}_3}\big) \subset L^2(\R^3)^2$.
	To do the spectral analysis of the operator $\cD_{\mathbb{I}_3}$ one can proceed exactly as in the model case
	in Section~\ref{ssec:T_straight_line}, and many of the results remain true 
	\footnote{Note however that the Pontryagin duality is then $\R^* = \R$ instead of $\T_{\ell}^* = \tfrac{2\pi}{\ell}\Z$.}.
	In particular, for any $\psi \in \dom\big(\cD_{\mathbb{I}_3}\big)$,
	\begin{align}\label{eq:dirac_split_line}
		\int |\cD_{\mathbb{I}_3}\psi|^2
		= \int |(\partial_{3}\psi)\big|_{\R^3 \setminus \Sigma_{-\pi/2}}|^2 
		+ \int |(\sigma_{\bot}\cdot \nabla_{\bot}^{\R^3}\psi)\big|_{\R^3 \setminus \Sigma_{-\pi/2}}|^2,
	\end{align}
	with $x_{\bot} = (x_1,x_2)$.

	As $2\Omega_3^{-1}(\bx)=1+|\bx|^2$, any $\psi \in \dom\big(\cD_{\mathbb{I}_3}^{\Omega_3}\big)$ satisfies
	\begin{equation}\label{eq:estim_Dpsi_0}
		\int_{\R^3}\Omega_3^{-1}(\bx)\big|\mathcal{D}_{\mathbb{I}_3}\big(\Omega_3(\bx)\psi(\bx) \big)\big|^2\d\bx
		\ge \frac{1}{2}\int_{\R^3}|\mathcal{D}_{\mathbb{I}_3}(\Omega_3 \psi)|^2\d\bx.
	\end{equation}
	For $\psi \in \ker\big(\cD_{\mathbb{I}_3}^{\Omega_3}\big)$ we define $\psi_0:=\Omega_3 \psi$, so that $\psi_0$ 
	satisfies $\textstyle{\int_{\R^3}\Omega_3|\psi_0|^2<+\infty}$ and
	$$
		-i\sigma \cdot \nabla^{\R^3}\psi_0=0 \quad \textrm{ in } \mathscr{D}'(\R^3 \setminus \Sigma_{-\pi/2})^2.
	$$
	Next, we consider $(\chi_R)_{R>0}$ where $\chi_R(\bx):=\chi_1(\tfrac{\bx}{R})$ 
	for some $ \chi_1\in C_0^{\infty}(\R^3)$. We can assume for instance that $\chi_1$ is a function of 
	$x_3$ and $|x_{\perp}|$ only and that
	$$
			\mathds{1}\big(x_3^2+|x_{\perp}|^2\le 4^{-1}\big)\le \chi_1(\bx)\le \mathds{1}\big(x_3^2+|x_{\perp}|^2\le 1\big).
	$$
	The functions $\chi_R\psi_0$ are clearly in $\dom(\mathcal{D}_{\mathbb{I}_3})$, 
	and by \eqref{eq:dirac_split_line},
	\begin{multline}\label{eq:dirac_split_line_2}
		\int|\cD_{\mathbb{I}_3}(\chi_R\psi_0)|^2
		=\int|(\partial_{x_3}(\chi_R\psi_0))\big|_{\R^3 \setminus \Sigma_{-\pi/2}}|^2\\
		+\int |(\sigma_{\perp}\cdot \nabla_{\perp}^{\R^3}(\chi_R\psi_0))\big|_{\R^3 \setminus \Sigma_{-\pi/2}}|^2.
	\end{multline}
	
	We now need a lemma, whose proof is given in the Appendix.
	\begin{lemma}\label{lem:decomp_lim}
		Taking the limit $R\to+\infty$ in \eqref{eq:dirac_split_line_2} gives
		\begin{multline}\label{eq:decomp_lim}
			\int|(\sigma\cdot\nabla^{\R^3}\psi_0)\big|_{\R^3 \setminus \Sigma_{-\pi/2}}|^2
			=\int|(\partial_{x_3}\psi_0)\big|_{\R^3 \setminus \Sigma_{-\pi/2}}|^2\\
			+\int|(\sigma_{\perp}\cdot \nabla_{\perp}^{\R^3}\psi_0)\big|_{\R^3 \setminus \Sigma_{-\pi/2}}|^2.
		\end{multline}
	\end{lemma}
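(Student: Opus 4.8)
The plan is to pass to the limit $R\to+\infty$ in \eqref{eq:dirac_split_line_2} by bounding its left-hand side with the weighted $L^2$-integrability of $\psi_0$, and by using the nonnegativity of the two integrands on its right-hand side to avoid having to control cross terms.

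First I would rewrite the left-hand side of \eqref{eq:dirac_split_line_2}. Since $\psi_0$ solves $\sigma\cdot\nabla^{\R^3}\psi_0=0$ on $\R^3\setminus\Sigma_{-\pi/2}$ and there $\cD_{\mathbb{I}_3}$ acts as the free Dirac operator $-i\sigma\cdot\nabla^{\R^3}$, the Leibniz rule gives
\[
	\cD_{\mathbb{I}_3}(\chi_R\psi_0)=-i(\sigma\cdot\nabla^{\R^3}\chi_R)\psi_0\qquad\text{on }\R^3\setminus\Sigma_{-\pi/2}.
\]
As $\chi_R(\bx)=\chi_1(\bx/R)$ equals $1$ on $\{|\bx|\le R/2\}$ and $\nabla\chi_R$ is supported in $\{R/2\le|\bx|\le R\}$ with $\|\nabla\chi_R\|_{L^\infty}\le C R^{-1}$, and since $\|\sigma\cdot v\|=|v|$ for $v\in\R^3$, we get the pointwise bound $|\cD_{\mathbb{I}_3}(\chi_R\psi_0)|\le CR^{-1}|\psi_0|\,\mathds{1}_{\{R/2\le|\bx|\le R\}}$. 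Using $\Omega_3(\bx)=\tfrac{2}{1+|\bx|^2}\ge\tfrac{2}{1+R^2}$ on $\{|\bx|\le R\}$, this yields
\[
	\int_{\R^3}|\cD_{\mathbb{I}_3}(\chi_R\psi_0)|^2
	\le C^2R^{-2}\!\!\int_{R/2\le|\bx|\le R}\!\!|\psi_0|^2
	\le C^2\,\frac{1+R^2}{2R^2}\int_{|\bx|\ge R/2}\Omega_3(\bx)|\psi_0(\bx)|^2\,\d\bx,
\]
which tends to $0$ as $R\to+\infty$, because $\int_{\R^3}\Omega_3|\psi_0|^2<+\infty$ and $\tfrac{1+R^2}{2R^2}$ stays bounded.

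Next I would use the right-hand side of \eqref{eq:dirac_split_line_2}. Both integrands are nonnegative, so restricting integration to $\{|\bx|\le R/2\}$, where $\chi_R\equiv1$ and hence $\chi_R\psi_0$ coincides with $\psi_0$ together with all first derivatives, gives
\[
	\int_{\{|\bx|\le R/2\}\setminus\Sigma_{-\pi/2}}\!\!|\partial_{x_3}\psi_0|^2
	+\int_{\{|\bx|\le R/2\}\setminus\Sigma_{-\pi/2}}\!\!|\sigma_{\perp}\cdot\nabla_{\perp}^{\R^3}\psi_0|^2
	\le\int_{\R^3}|\cD_{\mathbb{I}_3}(\chi_R\psi_0)|^2.
\]
Letting $R\to+\infty$, the right-hand side vanishes by the previous step, while the left-hand side, by monotone convergence, increases to $\int_{\R^3\setminus\Sigma_{-\pi/2}}|\partial_{x_3}\psi_0|^2+\int_{\R^3\setminus\Sigma_{-\pi/2}}|\sigma_{\perp}\cdot\nabla_{\perp}^{\R^3}\psi_0|^2$. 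This sum is therefore finite and equal to zero. Since $\sigma\cdot\nabla^{\R^3}\psi_0=0$ on $\R^3\setminus\Sigma_{-\pi/2}$, the left-hand side of \eqref{eq:decomp_lim} is likewise zero, and \eqref{eq:decomp_lim} follows, both sides being zero.

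I do not expect a genuine obstacle here; the only point requiring a word is that $\chi_R\psi_0\in\dom(\cD_{\mathbb{I}_3})$, so that \eqref{eq:dirac_split_line} applies and produces \eqref{eq:dirac_split_line_2}. This is already granted in the text preceding the lemma, the relevant features being that $\chi_R$ is smooth, compactly supported, and identically $1$ on a neighborhood of the part of $\mathbb{I}_3$ meeting the support, while $\psi_0=\Omega_3\psi$ inherits from $\psi\in\dom(\cD_{\mathbb{I}_3}^{\Omega_3})$ both the phase-jump condition across $\Sigma_{-\pi/2}$ and the admissible singular behaviour near $\mathbb{I}_3$ (multiplication by the smooth positive factor $\Omega_3$ changes neither). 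The whole argument is thus the interplay of the weighted bound $\int\Omega_3|\psi_0|^2<\infty$ with the dyadic cutoff, together with the use of nonnegativity to sidestep the a priori uncontrolled cross terms $\int\chi_R\langle\partial\psi_0,(\partial\chi_R)\psi_0\rangle$.
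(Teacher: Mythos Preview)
Your proof is correct. The approach differs slightly from the paper's: whereas you exploit the zero-mode property to rewrite the left-hand side of \eqref{eq:dirac_split_line_2} as $\int|\nabla\chi_R|^2|\psi_0|^2$ and bound it directly via the weighted integrability $\int\Omega_3|\psi_0|^2<\infty$, the paper instead expands the right-hand side of \eqref{eq:dirac_split_line_2} by the Leibniz rule, collects the ``pure'' piece
\[
	Q_R:=\int|\chi_R|^2\big(|\partial_{x_3}\psi_0|^2+|\sigma_\perp\cdot\nabla_\perp\psi_0|^2\big),
\]
estimates the cross terms by Cauchy--Schwarz as $o(\sqrt{Q_R})$, and then uses that the left-hand side has a finite limit (by \eqref{eq:estim_Dpsi_0}) to deduce that $Q_R$ converges. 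Your route is more elementary: by restricting the nonnegative right-hand side to $\{|\bx|\le R/2\}$ you avoid the cross terms entirely and get the finiteness (in fact, vanishing) of the limiting integrals by monotone convergence. The paper's argument, on the other hand, would survive if one only knew the left-hand side to be bounded rather than tending to zero. Both arrive at the same conclusion, and in the present context---where the text immediately after the lemma uses only that the right-hand side of \eqref{eq:decomp_lim} vanishes---your direct argument is entirely adequate.
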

	
	If $\psi$ is a zero mode for $\cD_{\mathbb{I}_3}^{\Omega_3}$, then the above quantity is zero. 
	This however implies that $\psi_0$ is function of $(x_1,x_2)$ only, and $\psi_0(x_1,x_2)$ is
	a formal zero mode for $-i\sigma_\perp \cdot\nabla_\perp^{\R^3}$, so we have
	$$
		\psi_0(x_1,x_2)=C\begin{pmatrix}0\\ p(x_1-ix_2)(x_1-ix_2)^{-\alpha}\end{pmatrix},
	$$
	where $p$ is a polynomial. Here, the branch cut of the complex logarithm is along $\Sigma_{-\pi/2}$.
	
	The integrability condition $\int |\psi_0(\bx)|^2\frac{\d\bx}{1+|\bx|^2}<+\infty$ is equivalent to
	$$
		\int_{\R^2}|\psi_0(x_1,x_2)|^2\frac{\d x_1\d x_2}{\sqrt{1+|\bx_{\perp}|^2}} <+\infty,
	$$
	and so we can only have $p(x_1-ix_2) = \textrm{const}$ and $2^{-1} < \alpha < 1$.
	This proves that for $\alpha\le 2^{-1}$ there is no zero mode for $\cD_{\mathbb{I}_3}^{\Omega_3}$ 
	and thus $\cD_{\bA}$ also has no zero modes.
	
	There remains to check whether this candidate is a real zero mode for $2^{-1}<\alpha<1$. 
	We will now prove that it is not; although the spinor
	$$
		\psi(\bx) = \Omega_3(\bx)^{-1}\begin{pmatrix}0\\(x_1-ix_2)^{-\alpha}\end{pmatrix}
	$$
	only has a spin down component along the straight line
	it carries in some sense a Dirac point in its spin up component at infinity.
	
	Both $\mathcal{M}_1:=(\R^3\setminus (\mathbb{I}_3 \cup \{\be_2\}),\Omega_3^2g_{\R^3})$ and 
	$\mathcal{M}_2:=(\R^3\setminus\S^1,\Omega_3^2g_{\R^3})$ 
	are two (isometric) charts of $\S^3 \setminus (\st^{-1}(\S^1 \cup \{\infty\})$,
	and we will continue working in the latter.
	The following lemma is proved in the Appendix.
	\begin{lemma}\label{lem:zero_m_s1}
		The corresponding candidate $\wt{\psi}$ in the chart $\R^3 \setminus \S^1$ is given by
		$$
			\wt{\psi}(\bx) = \frac{\Omega_3(\iota(\bx))^{-1} |\bx-\be_2|^{2\alpha}}{(2x_3-i(|\bx|^2-1))^{\alpha}}
			U_{\bx}i\sigma_3 U_{-\iota(\bx)}\begin{pmatrix}0\\ 1 \end{pmatrix}
			\in L^2(\R^3)^2,
		$$
		where
		$$
			U_{\bx}:=\frac{1+i\bsigma\cdot \bx}{\sqrt{1+|\bx|^2}}.
		$$
		The wave function $\wt{\psi}$ satisfies
		$$
			\cD^{\Omega_3}\wt{\psi}=0 \quad \textrm{ in } \mathscr{D}'(\R^3 \setminus \D)^2.
		$$
	\end{lemma}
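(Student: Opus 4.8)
The plan is to read the spinor $\psi$ in a \emph{second} stereographic chart of $\S^3$ and deduce the formula from the conformal covariance of the kernel of the Dirac operator (used throughout this section, cf.\ \cite{MR1860416}) together with the explicit stereographic trivialization of the $\mathrm{Spin}^c$ bundle. Set $V_\iota(\bx):=U_\bx\,(i\sigma_3)\,U_{-\iota(\bx)}$, where $U_\bx=(1+i\bsigma\cdot\bx)(1+|\bx|^2)^{-1/2}$ is unitary with $U_\bx^{-1}=U_{-\bx}$ (using $(\bsigma\cdot\bx)^2=|\bx|^2$).

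\emph{Deriving the formula.} Under the stereographic projection $\st\colon\S^3\to\R^3\cup\{\infty\}$ of \eqref{eq:free_dir_st}, a section of the trivial bundle $\S^3\times\C^2$ is expressed in the chart $(\R^3,\Omega_3^2 g_{\R^3})$ by left multiplication with $U_\bx$, i.e.\ $\psi(\bx)=U_\bx\,\Psi(\st^{-1}(\bx))$, and under this identification the free Dirac operator on $\S^3$ becomes $\cD^{\Omega_3}$; in particular $\st$-related spinors have equal pointwise norm. The involution $\iota=\st\circ\mathrm{sw}\circ\st^{-1}$ is, on $\S^3$, the isometry $\mathrm{sw}\colon(z_0,z_1)\mapsto(z_1,z_0)$, which acts on the global orthonormal frame of Section~\ref{sec:b_field_link} as a fixed rotation by $\pi$ (about $\be_3$ in those conventions, the direction $(iz_0,iz_1)$ being $\mathrm{sw}$-invariant), whose lift to $\S^3\times\C^2$ is the constant $\exp(\tfrac{\pi}{2}i\sigma_3)=i\sigma_3$ (the overall sign being the usual $\mathrm{Spin}$ ambiguity). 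Composing the three maps, the representative in $\mathcal{M}_2=(\R^3\setminus\S^1,\Omega_3^2 g_{\R^3})$ of the section represented by $\psi$ on $\mathcal{M}_1$ is $\wt\psi(\bx)=V_\iota(\bx)\,\psi(\iota(\bx))$, and this carries distributional solutions of $\cD^{\Omega_3}\psi=0$ off the Seifert surface on $\mathcal{M}_1$ to such solutions on $\mathcal{M}_2$. Plugging in $\psi(\by)=\Omega_3(\by)^{-1}\big(0,(y_1-iy_2)^{-\alpha}\big)^{\rT}$ and the formula for $\iota$ from the excerpt, one computes $\iota(\bx)_1-i\,\iota(\bx)_2=(2x_3-i(|\bx|^2-1))\,|\be_2-\bx|^{-2}$, hence
\[
  \psi(\iota(\bx))=\Omega_3(\iota(\bx))^{-1}\,\frac{|\bx-\be_2|^{2\alpha}}{(2x_3-i(|\bx|^2-1))^{\alpha}}\begin{pmatrix}0\\1\end{pmatrix},
\]
so that $\wt\psi(\bx)=V_\iota(\bx)\,\psi(\iota(\bx))$ is exactly the claimed expression.

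\emph{The two analytic assertions.} Since $V_\iota$ is unitary, $|\wt\psi(\bx)|$ equals the modulus of $\psi$ at the $\iota$-image point, so $\wt\psi$ lies in the $L^2$-space attached to $\cD^{\Omega_3}$ on $\mathcal{M}_2$: near the circle $\S^1$ its singularity is $O(\dist^{-\alpha})$, square-integrable in codimension two because $\alpha<1$, while near $\be_2$—where $\Omega_3\circ\iota$ stays bounded and $|\bx-\be_2|^{2\alpha}|2x_3-i(|\bx|^2-1)|^{-\alpha}$ is controlled—the remaining singularity is square-integrable precisely because $\alpha>1/2$ (the case $\alpha\le 1/2$ having already been disposed of, so this is the regime of interest). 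Finally, away from its Seifert surface $\psi$ solves $\cD^{\Omega_3}\psi=0$ by construction (as shown earlier in the proof, $\Omega_3\psi$ is a formal zero mode of $-i\bsigma\cdot\nabla^{\R^3}$ off $\Sigma_{-\pi/2}$); transporting this through the conformal isometry $\iota$, under which the magnetic great circle and its Seifert surface become $\S^1$ and the flat disc $\D$, gives $\cD^{\Omega_3}\wt\psi=0$ in $\mathscr{D}'(\R^3\setminus\D)^2$.

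\emph{Main obstacle.} The only delicate points are bookkeeping: getting the correct placement and inversions of the factors in $V_\iota$ (so that $U_{-\iota(\bx)}$ and $i\sigma_3$, rather than $U_{\iota(\bx)}$ or $-i\sigma_3$, appear), and verifying that $\wt\psi$ is genuinely square-integrable near $\be_2$—which is what forces $\alpha>1/2$. Everything else is direct computation with the explicit conformal maps and elementary codimension counting.
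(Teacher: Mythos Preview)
Your approach is essentially the same as the paper's: you identify the $\SU(2)$-valued gauge $V_\iota(\bx)=U_\bx\,i\sigma_3\,U_{-\iota(\bx)}$ by composing the stereographic trivialization $U_\bx$, the constant lift $i\sigma_3$ of the swap isometry $\mathrm{sw}$ (which fixes $u_3$ and negates $u_1,u_2$), and the inverse stereographic trivialization at $\iota(\bx)$, then set $\wt\psi=V_\iota\cdot(\psi\circ\iota)$ and compute. The paper does the same, phrasing it as finding $V_\iota$ so that $V_\iota\cD_\iota^{\Omega_3}V_\iota^*=\cD^{\Omega_3}$, and verifying that the associated $\SO(3)$-rotation is $|d\iota|^{-1}d\iota$; it leaves the $L^2$ and distributional-zero-mode claims as implicit consequences, whereas you address them explicitly (and correctly, given $\alpha>1/2$ in the regime of interest).
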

	
	Let us make the expression of $\wt{\psi}$ slightly more precise:
	for simplicity we write $2x_3+i(|\bx|^2-1)$ in polar coordinates:
	$$
		\wt{\rho}(\bx)e^{i\wt{\theta}(\bx)}:=2x_3+i(|\bx|^2-1).
	$$
	Moreover, a computation shows that $\tfrac{1+|\iota(\bx)|^2}{1+|\bx|^2}=2|\bx-\mathbf{e}_2|^{-2}$, 
	so we obtain the more explicit expression
	\begin{equation}
		\wt{\psi}(x)=\frac{(1+i\bx\cdot\sigma)e^{i\alpha\wt{\theta}(\bx)}}{\sqrt{2}\wt{\rho}(\bx)^\alpha|\bx-\be_2|^{2(1-\alpha)}}
		\begin{pmatrix}\frac{\wt{\rho}(\bx)}{|\bx-\be_2|}e^{-i\wt{\theta}(\bx)}\\ \frac{2x_1}{|\bx-\be_2|}-i|\bx-\be_2| \end{pmatrix}.
	\end{equation}
	To see that $\wt{\psi}$ is not in $\dom(\mathcal{D}_{\S^1}^{\Omega_3})$, 
	it suffices to prove that there exists a $\phi_0\in \dom(\mathcal{D}_{\S^1}^{\Omega_3})$ such that:
	\begin{equation*}
		\langle \cD^{\Omega_3}\phi_0,\,\wt{\psi} \rangle_{L^2}
		-\langle \phi_0,\, \cD^{\Omega_3}\wt{\psi} \rangle_{L^2}
		=\langle \cD^{\Omega_3}\phi_0;\,\wt{\psi} \rangle_{L^2}\neq 0,
	\end{equation*}
	where the $L^2$ in the above scalar product stands for $L^2(\R^3\setminus\D,\Omega_3^3\d\bx)^2$.
	We choose a spinor $\phi_0$ which close to the curve behaves like 
	$$
		u(\bx)U_{\bx}\begin{pmatrix}0 \\ (2x_3-i(|\bx|^2-1))^{-\alpha} \end{pmatrix}, \quad u\in C^\infty_0(\R^3).
	$$
	We now apply Stokes' formula on the complement $\Omega_\eps$
	of the $g_{\R^3}$-tubular neighborhood $B_\eps[\S^1]$;
	taking the limit $\eps\to 0$ shows that
	\[
		i\int_{\partial\Omega_\eps}\langle \sigma\cdot\bn_{\partial\Omega_\eps}\Omega_3\phi_0,\,\Omega_3\wt{\psi} \rangle
		\underset{\eps\to 0}{\longrightarrow}
		-\pi u(\be_2)\int_{-\infty}^{+\infty}\frac{2^{2(1-\alpha)} \d s}{(1+s^2)^{1+2(1-\alpha)}}\not\equiv 0.
	\]
	The proof of the theorem is now complete.

\appendix
\section{}

\subsection{Proof of Proposition~\ref{prop:egalite_min_dom}}\label{sec:tech_min_dom}
Consider the set
\begin{multline*}
	\mathrm{Ans}_{\bA}^{(\min)}:=\big\{\psi\in C^{\infty}(\Omega_{\uS})^2: 
	\supp\psi\subset\S^3\setminus\gamma,\\
	\left.\psi\right|_{(S_k)_\pm} \textrm{ exist \& are in }C^0(S_k)^2, 
	\left.\psi\right|_{(S_k)_+}=e^{-2i\pi\alpha_k}\left.\psi\right|_{(S_k)_-}\big\}.
\end{multline*}
The domain of the minimal operator is the closure in the graph norm of the set $\mathrm{Ans}_{\bA}^{(\min)}$.
We prove here that this domain coincides with the set $H_{\bA}^{1}(\S^3)^2$ 
of elements in $H^1(\Omega_{\uS})^2$ with the phase jumps across the $S_k$'s,
and that Lichnerowicz formula gives rise to the usual energy equality.

We first prove the energy equality. By density we can assume that the 
traces of $\psi\in \mathrm{Ans}_{\bA}^{(\min)}$ on both sides of $S_k$ are 
$C^1(S_k)$ for all $1\le k\le K$. Let $(\be_{1,k},\be_{2,k},\bN_{S_k})$ 
be an orthonormal basis of $\rT\,\S^3$ around $S_k$. 
By using Stokes' formula (as in Proposition~\ref{prop:gen_stokes}) we obtain the following equality:
\begin{multline*}
	\int |\cD_{\bA}^{(\min)}\psi|^2-\int |(\nabla\psi)\big|_{\Omega_{\uS}}|^2-\frac{3}{2}\int |\psi|^2\\
	=\sum_k\int_{S_k}i\vol_{S_k}\Big(\cip{\psi_{|_{(S_k)_+}}}{\sigma(\be_{1,k}^{\flat})\left.(\nabla_{\be_{2,k}}\psi)\right|_{(S_k)_+}
	-\sigma(\be_{2,k}^{\flat})\left.(\nabla_{\be_{1,k}}\psi)\right|_{(S_k)_+}}\\
	-\cip{\psi_{|_{(S_k)_-}}}{\sigma(\be_{1,k}^{\flat})\left.(\nabla_{\be_{2,k}}\psi)\right|_{(S_k)_-}
	-\sigma(\be_{2,k}^{\flat})\left.(\nabla_{\be_{1,k}}\psi)\right|_{(S_k)_-}}\Big),
\end{multline*}
where $\vol_{S_k}=\be_{1,k}^{\flat}\wedge\be_{2,k}^{\flat}$ is the volume form on $S_k$. 
The derivatives 
$$
	\left.(\nabla_{\be_{*,k}}\psi)\right|_{(S_k)_\pm}
$$
in the boundary terms are tangential to the Seifert surfaces $S_k$, 
hence they satisfy the same phase jump as the traces $\left.\psi\right|_{(S_k)_\pm}$
and the boundary terms cancel. 

We now pick an element $\psi\in H^{1}_{\bA}(\S^3)^2$ and show that
it lies in the minimal domain $\dom(\cD_{\bA}^{(\min)})$. 
After localizing with \eqref{def:part_unity}, we only have to prove that 
$\psi_k:=\chi_{\delta,\gamma_k}\psi$ is in the minimal domain
for any $1\le k\le K$. Up to multiplying by the phase jump functions 
$\overline{E}_{k}$ (see \eqref{def:curve_phasej_rem}), we can assume
that there is only one knot.
Furthermore, by a (local) gauge transformation $e^{i\phi_k}$ we can shift the phase jump
to $\{\theta_k=0\}$, where $\theta_k$ is the angle around the 
knot $\gamma_k$ (see~\eqref{sec:coord}).
All in all, we are left to study the function 
$$
	\wt{\psi}_k:=e^{i\phi_k}\overline{E}_{k}\psi_k,
$$
which, after a decomposition into its two parts $\cip{\xi_{\pm}}{\wt{\psi}_k}\xi_{\pm}$, can be assumed to be a scalar.

The function $e^{-i\alpha_k \theta_k}\wt{\psi}_k$ has no phase jump and thus
\[
	e^{-i\alpha_k \theta_k}\nabla \wt{\psi}_k
	=(\nabla+i\alpha_k\nabla \theta_k)\big( e^{-i\alpha_k \theta_k}\wt{\psi}_k\big).
\]
A direct computation shows that
$$
	\nabla\theta_k
	=-\frac{\tau_{S_k}}{h_k}\bT_k+\frac{1}{\sin(\rho_k)}\big(-\sin(\theta_k)\bS_k+\cos(\theta_k)\bN_k\big),
$$
(and $\nabla s_k=h_k^{-1}\bT_k$). Here $\tau_{S_k}$ denotes the relative torsion of $\gamma_k$ wrt $S_k$.
As we are only concerned with the behavior close to the knot $\gamma_k$,
it suffices to study the element as a function of $(s_k,\rho_k,\theta_k)$ and use the metric in $\T_{\ell_k}\times\R^2$
(see formulas~\eqref{eq:partiald_coord}).

We decompose the function $\cip{\xi_{\pm}}{e^{-i\alpha_k \theta_k}\wt{\psi}_k}\xi_{\pm}$ 
(as a function of $(s_k,\rho_k,\theta_k)$) relative to 
\[
	L^2(\T_{\ell_k}\times\R^2)=L^2(\T_{\ell_k})\otimes L^2(\R_+,\rho\d\rho)\otimes L^2(\R/(2\pi\Z),\d\theta).
\]
Thus, as $0<\alpha_k<1$, the condition 
$\norm{\nabla e^{-i\alpha_k \theta_k}\wt{\psi}_k}_{L^2(\S^3)^2}<+\infty$ implies
\[
	\int \frac{|\wt{\psi}_k|^2}{\rho_k^2}<+\infty.
\]
This implies that  $(\chi_{2^{-n}\delta,\gamma_k}\psi_k)_{n\in\N}$ converges to $\psi_k$ in the graph norm.
This sequence is a priori not in $\mathrm{Ans}_{\bA}^{(\min)}$, because the traces are only in $H^{1/2}(S_k)$.
However, by density we can approximate the $\chi_{2^{-n}\delta,\gamma_k}\psi_k$'s by elements in 
$\mathrm{Ans}_{\bA}^{(\min)}$ (in the graph norm, or equivalently, in the norm of $H^1(\Omega_{\uS})^2$) .

\subsection{Technical Lemmas}
\subsubsection*{Proof of Lemma~\ref{lem:model_c_def_sp}}
	We need to determine the deficiency spaces
	$$
		\Sigma_{\pm i} := \ker\big(\cD_{\T_{\ell},\alpha}^{(\max)} \mp i\big).
	$$

	We begin by determining $\Sigma_{+i}$, and thus want to find all solutions to
	$$
		\big(\cD_{\T_{\ell},\alpha}^{(\max)} - i\big)\psi=0
	$$
	in $\sD'(\T_{\ell}\times\R^2 \setminus \{u=0\})^2$.
	The translational symmetry of the problem allows us to Fourier transform in the $s$-direction, so that the operator
	takes the form
	$$
		\mathcal{J}_{\alpha} := \sigma_3 j+ \wt{\bsigma}(-i\wt{\nabla}_{u} + \bA_{u}),
	$$
	acting on wave functions $\phi \in \ell_2(\T_{\ell}^*) \otimes L^2(\R^2)^2$ with
	$$
		\phi(j,u)=(\sF_s\psi)(j,u) = \begin{pmatrix}\phi_+(j,u)\\ \phi_-(j,u) \end{pmatrix}.
	$$
	Acting with $(\mathcal{J}_{\alpha} + i)$ again leaves us with
	\begin{align}\label{eq:sa_line_1}
		(j^2 + (-i\wt{\nabla}_{u} + \bA_{u})^2 + 1)\phi_{\pm} = 0
	\end{align}
	for any $j \in \T_{\ell}^*$ fixed.
	After switching to polar coordinates in the $u$-plane and decomposing 
	$\phi_{\pm}(j,r,\theta) = \sum_{n \in \Z}\phi_{\pm,n}(j,r)e^{in\theta}$, \eqref{eq:sa_line_1} 
	gives us the set of equations (for fixed $j \in \T_{\ell}^*$, $n \in \Z$)
	$$
		\left(j^2 - \partial_{r}^2 - \frac{1}{r}\partial_r + \frac{(n+\alpha)^2}{r^2} +1\right)\phi_{\pm,n}(j,r) = 0.
	$$
	We rearrange the above to
	\begin{align*}
		(r^2\partial_r^2 + r\partial_r - (n+\alpha)^2 - (1+j^2)r^2)\phi_{\pm,n} = 0
	\end{align*}
	so we obtain the (scaled) modified Bessel equation. Any solution to the above equation can be written as
	\begin{align*}
		\phi_{\pm,n}(j,r) = c_{n}(j)K_{n+\alpha}(r\langle j\rangle) 
		+ d_{n}(j)I_{|n+\alpha|}(r\langle j\rangle)
	\end{align*}
	with $c_n(j),d_n(j) \in \C^2$. 
	Recall that if $0<\nu \notin \mathbb{N}$, then $K_{\nu}(s), I_{\nu}(s)$ have the 
	expansions (see \cite[Sections~9.6~\&~9.7]{AbrStegun})
	\begin{align*}
		K_{\nu}(s) &= \frac{\Gamma(\nu)}{2}\left(\frac{s}{2}\right)^{-\nu}\left(1 + \underset{s \to 0}{\mathcal{O}}(s^2)\right) 
		+ \frac{\Gamma(1-\nu)}{2\nu}\left(\frac{s}{2}\right)^{\nu}\left(1 + \underset{s \to 0}{\mathcal{O}}(s^2)\right),\\
		I_{\nu}(s) &= \frac{1}{\Gamma(\nu+1)}\left(\frac{s}{2}\right)^{\nu} + \underset{s \to 0}{\mathcal{O}}(s^{\nu+2}),
	\end{align*}
	and
	\begin{align*}
		K_{\nu}(s) = \sqrt{\frac{\pi}{2s}}e^{-s}\left(1+\underset{s \to \infty}{\mathcal{O}}(s^{-1})\right), \quad
		I_{\nu}(s) = \frac{e^{s}}{\sqrt{2\pi s}}\left(1+\underset{s \to \infty}{\mathcal{O}}(s^{-1})\right).
	\end{align*}
	One readily sees that if we require $\phi_{\pm,n} \in L^2(\R_+,rdr)$, then $d_n(j)=0$ $\forall n \in \Z$, 
	whereas $c_n(j)=0$ for $n\neq 0,-1$. 
	Writing 
	$$
		c_n(j)=(c_{n;+}(j),c_{n;-}(j))^{\text{T}}\in\mathbb{C}^2,
	$$
	we see that
	\begin{align*}
		\wt{\bsigma}(-i\wt{\nabla}+\bA_{\alpha})\psi \in L^2(\T_{\ell}\times \R^2)^2 
		\Rightarrow 
		\left\{ 
			\begin{array}{l}
			c_{0;+}(j)=c_{-1;-}(j)=0,\\
			c_{0;-},c_{-1;+}\in \ell_2(\T_{\ell}^*;\langle j\rangle^{-2}d\mu_{\T_{\ell}^*}(j)).
			\end{array}
		\right.
	\end{align*}
	Finally, from the condition $\cD_{\T_{\ell},\alpha}^{(\max)}\psi=i\psi$ one can deduce that
	$$
		c_{0;-}(j)=(1-ij)\langle j\rangle^{-1}c_{-1;+}(j).
	$$
	The above arguments can then be repeated for $(\mathcal{J}_{\alpha} + i)\phi=0$ to yield the condition
	$$
		c_{0;-}(j)=(-1-ij)\langle j\rangle^{-1}c_{-1;+}(j),
	$$
	which concludes the proof.
\qed

\subsubsection*{Proof of Lemma~\ref{lem:sa_knot_tech}}
We begin by writing
\begin{align*}
	h(s,\delta,\theta)\sin(\delta)
	&= \delta\left(1+(\cos(\delta)\tfrac{\sin(\delta)}{\delta}-1) 
	- \tfrac{\sin^2(\delta)}{\delta}(\cos(\theta)\kappa_g(s) + \sin(\theta)\kappa_n(s))\right)\\
	&= \delta(1+\delta w),
\end{align*}
where $w \in C^1(B_{\eps}[\gamma])$. Just as in the proof of Lemma~\ref{lem:reg_rho_dmax} one can
show that
$$
	f \in \dom\big(\cD_{\T_{\ell},\alpha}^{(\max)}\big), \supp f \subset \{\rho<const\}
	\Rightarrow \rho f \in \dom\big(\cD_{\T_{\ell},\alpha}^{(\min)}\big).
$$
Now set
$$
	f_{w} := \rho w f \chi_1 \in \dom\big(\cD_{\T_{\ell},\alpha}^{(\min)}\big),
$$
where $\chi_1$ is a cutoff function in $\rho$, so that
\begin{multline*}
	-i\lim_{\delta \to 0} \int_{0}^{\ell} \int_{0}^{2\pi} \delta
	\left[e^{i\theta}\overline{f_{w,-}}g_{+} + e^{-i\theta}\overline{f_{w,+}}g_{-}\right]\, \d\theta\d s\\
	= \cip{\cD_{\T_{\ell},\alpha}^{(\max)}f_w}{g}_{L^2} 
	- \cip{f_w}{\cD_{\T_{\ell},\alpha}^{(\max)}g}_{L^2} = 0.
\end{multline*}
\qed

\subsubsection*{Proof of Lemma~\ref{lem:control_Ngr}}
First observe that
$$
	\cD_{\T_{\ell},\alpha}^{(-)}f = (\wtd+\cE_1+\cE_0)f\big|_{\theta\neq 0} - (\cE_1+\cE_0)f\big|_{\theta\neq 0}.
$$
By \eqref{eq:D_max_close_knot} we immediately get that
$$
	\norm{(\wtd+\cE_1+\cE_0)f_{|_{\theta\neq 0}}}_{L^2(\T_{\ell}\times\R^2)^2}
	\leq C \norm{\cD_{\bA}^{(-)}(\chi_{\delta}\psi)}_{L^2(\S^3)^2},
$$
where the constant $C = \norm{\rho(h\sin(\rho))^{-1}}_{L^{\infty}(B_{\delta}[\gamma])}$ appears due to the discrepancy
of the volume forms.
Next, it is obvious that
$$
	\norm{\cE_0f}_{L^2(\T_{\ell}\times\R^2)^2} \leq 2C' \norm{\chi_{\delta}\psi}_{L^2(\S^3)^2},
$$
since for $\rho<\delta$ the matrix $\cE_0$ has entries bounded by $C'(\delta,\gamma)$.
For the term $(\cE_1f)_{|_{\theta\neq 0}}$ we use equations \eqref{eq:E1_comp_1} and \eqref{eq:E1_comp_2}
to obtain
$$
	\norm{(\cE_1f)_{|_{\theta\neq 0}}}_{L^2(\T_{\ell}\times\R^2)^2} 
	\leq C(\delta,\gamma) \left(\norm{\rho (\nabla_{\bT}(\chi_{\delta}\psi))_{|_{\theta\neq 0}}}_{L^2(\S^3)^2} + 
	\norm{\rho (\nabla_{\bG}(\chi_{\delta}\psi))_{|_{\theta\neq 0}}}_{L^2(\S^3)^2}\right).
$$
We then estimate
\begin{align*}
	\norm{\rho (\nabla_{\bT}(\chi_{\delta}\psi))_{|_{\theta\neq 0}}}_{L^2(\S^3)^2}
	&\leq \norm{(\nabla_{\bT}(\rho\chi_{\delta}\psi))_{|_{\theta\neq 0}}}_{L^2(\S^3)^2}\\
	&\leq \norm{\cD_{\bA}^{(\min)}(\rho\chi_{\delta}\psi)}_{L^2(\S^3)^2}\\
	&\leq \norm{\chi_{\delta}\psi}_{L^2(\S^3)^2} + \norm{\rho\cD_{\bA}^{(-)}(\chi_{\delta}\psi)}_{L^2(\S^3)^2}\\
	&\leq (1+\delta)\norm{\chi_{\delta}\psi}_{\bA}.
\end{align*}
A similar computation for $\norm{\rho (\nabla_{\bG}(\chi_{\delta}\psi))_{|_{\theta\neq 0}}}_{L^2(\S^3)^2}$ completes the proof.
\qed

\subsubsection*{Proof of Lemma~\ref{lem:char_sres_conv}}
	Assume (1). For $f \in \dom(\cD)$, we define $g \in \cH$ and $f_n\in\dom(\cD_n)$ by the relations: $(\cD-i)f=g=(\cD_n-i)f_n$.
	By strong resolvent continuity we have $\lim_{n\to+\infty}\norm{f-f_n}_{\cH}=0$, thus 
	$$
		\norm{\cD f-\cD_n f_n}_{\cH}=\norm{f-f_n}_{\cH}\underset{n\to+\infty}{\longrightarrow}0.
	$$

	Assume (2). Recall that $\ker(P)=\{(-\cD f,f): f \in \dom(\cD)\}$. For $f\in\dom(\cD)$, let $f_n\in\dom(\cD_n)$ such that
	$\lim_{n\to+\infty}(f_n,\cD_n f_n)=(f,\cD f)$. We write $(\wt{f}_n,\cD_n \wt{f}_n)=P_n(f,\cD f)$ 
	and $(-\cD_n f_n',f_n')=(1-P_n)(-\cD f,f)$. 
	We have
	\begin{align*}
		\norm{(f-\wt{f}_n,\cD f-\cD_n \wt{f}_n)}_{\cH^2}
		&=\norm{(P-P_n)(f,\cD x)}_{\cH^2}=\dist_{\cH^2}\big((f,\cD f),\cG_{\cD_n}\big)\\
		&\le \norm{(f-f_n,\cD f-\cD_n f_n)}_{\cH^2}\underset{n\to+\infty}{\longrightarrow}0.
	\end{align*}
	Similarly we get
	\begin{align*}
		\norm{\big[(1-P)-(1-P_n)\big](-\cD f,f)}_{\cH^2}
		\le \norm{(\cD_n f_n-\cD f,f-f_n)}_{\cH^2}\underset{n\to+\infty}{\longrightarrow}0.
	\end{align*}
	These two results imply (3), as we have:
	\begin{align*}
	P_n=P_n(P+(1-P))&=P_n P +(P_n-1)(1-P)+(1-P),\\
				      &=P_n P-(1-P_n)(1-P)+(1-P).
	\end{align*}

	Assume (3). For $g\in\cH$, we define $f\in\dom(\cD)$ and $f_n\in\dom(\cD_n)$ by the relations $(\cD-i)f=g=(\cD_n-i)f_n$.
	We also set $P_n(f,\cD f)=(\wt{f}_n,\cD_n\wt{f}_n)$. As $(f_n-\wt{f}_n)\in\dom(\cD_n)$, we have:
	\begin{align*}
		\norm{f_n-\wt{f}_n}_{\cH}^2+\norm{\cD_n(f_n-\wt{f}_n))}_{\cH}^2&=\norm{(\cD_n-i)(f_n-\wt{f}_n)}_{\cH}^2,\\
				&=\norm{(\cD-i)f-(\cD_n-i)\wt{f}_n}_{\cH}^2,\\
				&\le 2\big( \norm{f-\wt{f}_n}_{\cH}^2+\norm{\cD f-\cD_n \wt{f}_n}_{\cH}^2\big)
				\underset{n\to+\infty}{\longrightarrow}0.
	\end{align*}
	Thus 
	$$
		\norm{f-f_n}_{\cH}
		\le \norm{f-\wt{f}_n}_{\cH}+\norm{f_n-\wt{f}_n}_{\cH}\underset{n\to+\infty}{\longrightarrow}0.
	$$
\qed

\subsubsection*{Proof of Lemma~\ref{lem:decomp_lim}}
	Remark that for any $\bx\in\R^3$ we have 
	$$
		\big|\nabla^{\R^3}\chi_R(\bx) \big|
		\le \frac{C}{R} \norm{\nabla^{\R^3}\chi_1}_{L^{\infty}}\mathds{1}_{R/2\le |\bx|\le R}.
	$$
	We define the increasing family:
	\[
		Q_R:=\int |\chi_R|^2\left[|(\partial_{x_3}\psi_0)_{\R^3 \setminus \Sigma_{-\pi/2}}|^2
		+|(\sigma_{\perp}\cdot(-i\nabla_{\perp}^{\R^3}+\bA_{\perp})\psi_0)_{\R^3 \setminus \Sigma_{-\pi/2}}|^2\right]
	\]
	By Cauchy-Schwarz equality, the RHS in \eqref{eq:dirac_split_line_2} is equal to
	$$
		Q_R+\underset{R\to+\infty}{o}(\sqrt{Q_R}),
	$$
	while by \eqref{eq:estim_Dpsi_0} the LHS has a finite limit as $R$ tends to infinity.
	This shows that $\lim_{R\to+\infty} Q_R$ exists, giving \eqref{eq:decomp_lim}.
\qed

\subsubsection*{Proof of Lemma~\ref{lem:zero_m_s1}}
	Let us write $\nabla^{\Omega_3}$ the Levi-Civita connexion of $\S^3$ written in the chart of the stereographic projection.
	On $\mathcal{M}_2$, the spinor $\phi(\bx):=\psi(\iota(\bx))$ is a formal zero-mode for the Dirac operator
	$\wt{\cD}^{\Omega_3}_{\S^1}$ which acts away from $\D$ like 
	\begin{align*}
		\cD_\iota^{\Omega_3}
		&:=-i\sum_{k=1}^3\Omega_3(\iota(\bx))^{-1}\sigma\cdot\,((\iota_* X_k)^{\flat})\nabla^{\Omega_3}_{X_k}\\
		&=-i\sum_{k=1}^3\Omega_3(\iota(\bx))\sigma\cdot\,(g_{\R^3}(\iota_* X_k,\cdot))\nabla^{\Omega_3}_{X_k},
	\end{align*}
	where $(X_1,X_2,X_3)$ is any positive orthonormal basis of $\rT \mathcal{M}_2\simeq \mathcal{M}_2\times\R^3$.

	We have to change the basis of sections such that (away from $\D$) the Dirac operator has the usual expression
	$\cD^{\Omega_3}$, that is, we have to find $V_\iota(\bx)\in \SU(2)$ such that 
	$$
		V_\iota\cD_\iota^{\Omega_3}V_\iota^*=\cD^{\Omega_3}=-i\Omega_3^{-2}\sigma\cdot\nabla^{\R^3}\Omega_3.
	$$
	The corresponding zero mode will then be $\wt{\psi}(\bx)= V_\iota(\bx)\phi(\bx)$.
	It is clear that the associated 
	rotation $R_\iota(\bx)$ must send any orthonormal basis 
	$(X_k(\bx))_{1\le k\le 3}$ onto $(\tfrac{\d\iota(\bx)}{|\d\iota(\bx)|}X_k(\bx))_{1\le k\le 3}$.
	Let us prove that one possibility is given by $V_\iota(\bx):=U_{\iota(\bx)}(-i\sigma_3)U_{-\bx}$.
	We write $O(\bx)\in \SO(3)$ the associated rotation of $U(\bx)$, defined by:
	$U_{\bx} (\sigma\cdot v) U_{\bx}^{*}=\sigma\cdot\big(O(\bx)v \big),\ v\in\R^3$.
	Consider the global orthonormal basis of $\rT\S^3$: 
	$$
		u_1:=(i\overline{z}_1,i\overline{z}_0), \quad u_2:=(\overline{z}_1,-\overline{z}_0), \quad u_3:=(iz_0,iz_1).
	$$
	The matrix $O(\st(\bp))$ is the change of basis from the canonical basis
	of $\R^3$ to 
	$$
		\Big(\Omega_3(\st(\bp))\st_*(u_j)(\st(\bp))\Big)_{1\le j\le 3}.
	$$
	The rotation $\mathrm{sw}:(z_0,z_1)\mapsto (z_1,z_0)$ satisfies 
	$$
		\mathrm{sw}_*(u_3)=u_3, \quad \mathrm{sw}_*(u_j)=-u_j \textrm{ for } j\in\{ 1,2\}.
	$$
	Thus the rotation $R_\iota(\bx):=|\d\iota(\bx)|^{-1}\d\iota(\bx)$ is associated 
	to the $\SU(2)$-matrix $U_{\bx}i\sigma_3 U_{-\iota(\bx)}$.
\qed





\end{document}